\def\draft{0}
\def\sigconf{0}
\def\big{0}
\def\anon{0}
\def\shownomenclature{0}
\def\masterthesis{0}
\mathchardef\mhyphen="2D
\title{Uncloneable Decryptors from Quantum Copy-Protection}
\fi \usepackage{silence} 
        \xpatchcmd{\thenomenclature}{%
        \section*{\nomname}
        }{
        \section{\nomname}\label{sec:nomenclature}}{\typeout{Success}}{\typeout{Failure}}
\theoremstyle{plain}
\newcommand{\bra}[1]{\langle #1 \vert}
\newcommand{\ket}[1]{\vert #1 \rangle}
\DeclareMathOperator{\tr}{Tr}
\ifdraft{\linenumbers}{}
\DeclareMathOperator{\E}{\mathbb{E}}
\newcommand{\sR}{\mathcal{R}}
\newcommand{\sO}{\mathcal{O}}
\newcommand{\sC}{\mathcal{C}}
\newcommand{\sD}{\mathcal{D}}
\newcommand{\sA}{\mathcal{A}}
\newcommand{\sB}{\mathcal{B}}
\newcommand{\sP}{\mathcal{P}}
\newcommand{\sF}{\mathcal{F}}
\newcommand{\sE}{\mathcal{E}}
\newcommand{\ds}{\ensuremath{\mathsf{DS}}\xspace}
\newcommand{\qcp}{\ensuremath{\mathsf{QCP}}\xspace}
\newcommand{\se}{\ensuremath{\mathsf{SE}}\xspace}
\newcommand{\bbf}{\ensuremath{\mathsf{BBF}}\xspace}
\newcommand{\ud}{\ensuremath{\mathsf{UD}}\xspace}
\newcommand{\cp}{\ensuremath{\mathsf{QCP}}\xspace}
\renewcommand{\sign}{\ensuremath{\mathsf{Sign}}\xspace}
\newcommand{\ver}{\ensuremath{\mathsf{Ver}}\xspace}
\newcommand{\zo}{\{0,1\}}
\newcommand{\comma}[2]{\begin{aligned}#1  &  ,  &  #2\end{aligned}}
\newcommand{\keygen}{\ensuremath{\mathsf{KeyGen}}\xspace}
\newcommand{\dcs}{\ensuremath{\mathsf{UD}}\xspace}
\newcommand{\dcsob}{\ensuremath{\mathsf{1UD}}\xspace}
\newcommand{\dcsdec}{\ensuremath{\mathsf{UD}^\mathsf{dec}}\xspace}
\newcommand{\dcsobdec}{\ensuremath{\mathsf{1UD}^\mathsf{dec}}\xspace}
\newcommand{\dcsobcca}{\ensuremath{\mathsf{1UD}^\mathsf{cpa}}\xspace}
\newcommand{\dcscpa}{\ensuremath{\mathsf{UD}^\mathsf{cpa}}\xspace}
\newcommand{\dcscca}{\ensuremath{\mathsf{UD}^\mathsf{cpa}}\xspace}
\newcommand{\dcsobccaa}{\ensuremath{\mathsf{1UD}^\mathsf{cca2}}\xspace}
\newcommand{\dcsccaa}{\ensuremath{\mathsf{UD}^\mathsf{cca2}}\xspace}
\newcommand{\udext}{\ensuremath{\mathsf{UD}^\mathsf{ext}}\xspace}
\newcommand{\qenc}{\ensuremath{\mathcal{Enc}}\xspace}
\newcommand{\qdec}{\ensuremath{\mathcal{Dec}}\xspace}
\newcommand{\qsk}{\ensuremath{\mathcal{sk}}\xspace}
\newcommand{\chipgen}{\ensuremath{\mathcal{DecGen}}\xspace}
\newcommand{\decgen}{\ensuremath{\mathcal{DecGen}}\xspace}
\newcommand{\ptect}{\ensuremath{\mathcal{Protect}}\xspace}
\newcommand{\samp}{\ensuremath{\mathsf{Sample}}\xspace}
\newcommand{\qeval}{\ensuremath{\mathcal{Eval}}\xspace}
\newcommand{\sind}{\ensuremath{\mathsf{\mathsf{IND}}}\xspace}
\newcommand{\sindnk}{\ensuremath{\mathsf{\mathsf{IND}}^{n,k}}\xspace}
\newcommand{\sindcpa}{\ensuremath{\mathsf{\mathsf{IND\mhyphen CPA}}}\xspace}
\newcommand{\sindcca}{\ensuremath{\mathsf{\mathsf{IND\mhyphen CCA1}}}\xspace}
\newcommand{\sindcx}{\ensuremath{\mathsf{\mathsf{IND\mhyphen Cx}}}\xspace}
\newcommand{\sindccaa}{\ensuremath{\mathsf{\mathsf{IND\mhyphen CCA2}}}\xspace}
\newcommand{\slorqcp}{\ensuremath{\mathsf{\mathsf{LoR\mhyphen QCP}}}\xspace}
\newcommand{\slorcpa}{\ensuremath{\mathsf{\mathsf{LoR\mhyphen CPA}}}\xspace}
\newcommand{\slor}{\ensuremath{\mathsf{\mathsf{LoR}}}\xspace}
\renewcommand{\prf}{\ensuremath{\mathsf{\mathsf{PRF}}}\xspace}
\newcommand{\sul}{\ensuremath{\mathsf{\mathsf{UL}}}\xspace}
\newcommand{\sful}{\ensuremath{\mathsf{\mathsf{FLIP\mhyphen UL}}}\xspace}
\newcommand{\sud}{\ensuremath{\mathsf{\mathsf{UD}}}\xspace}
\newcommand{\sudnk}{\ensuremath{\mathsf{\mathsf{UD}}^{n,k}}\xspace}
\newcommand{\sudqcpa}{\ensuremath{\mathsf{\mathsf{UD\mhyphen qCPA}}}\xspace}
\newcommand{\sudqcpank}{\ensuremath{\mathsf{\mathsf{UD\mhyphen qCPA}}^{n,k}}\xspace}
\newcommand{\sudqcca}{\ensuremath{\mathsf{\mathsf{UD\mhyphen qCCA1}}}\xspace}
\newcommand{\sudqx}{\ensuremath{\mathsf{\mathsf{UD\mhyphen qX}}}\xspace}
\newcommand{\sudqxnk}{\ensuremath{\mathsf{\mathsf{UD\mhyphen qX}}^{n,k}}\xspace}
\newcommand{\sudqccaa}{\ensuremath{\mathsf{\mathsf{UD\mhyphen qCCA2}}}\xspace}
\newcommand{\sudqccank}{\ensuremath{\mathsf{\mathsf{UD\mhyphen qCCA1}}^{n,k}}\xspace}
\newcommand{\sudqccaank}{\ensuremath{\mathsf{\mathsf{UD\mhyphen qCCA2}}^{n,k}}\xspace}
\newcommand{\sobud}{\ensuremath{\mathsf{\mathsf{UD1}}}\xspace}
\newcommand{\sobudqx}{\ensuremath{\mathsf{\mathsf{UD1\mhyphen qX}}}\xspace}
\newcommand{\sobudqxnk}{\ensuremath{\mathsf{\mathsf{UD1\mhyphen qX}}^{n,k}}\xspace}
\newcommand{\sobudqcpa}{\ensuremath{\mathsf{\mathsf{UD1\mhyphen qCPA}}}\xspace}
\newcommand{\sobudqcpank}{\ensuremath{\mathsf{\mathsf{UD1\mhyphen qCPA}}^{n,k}}\xspace}
\newcommand{\sobudqcca}{\ensuremath{\mathsf{\mathsf{UD1\mhyphen qCCA1}}}\xspace}
\newcommand{\sobudqccaa}{\ensuremath{\mathsf{\mathsf{UD1\mhyphen qCCA2}}}\xspace}
\newcommand{\sobudqccank}{\ensuremath{\mathsf{\mathsf{UD1\mhyphen qCCA1}}^{n,k}}\xspace}
\newcommand{\sobudqccaank}{\ensuremath{\mathsf{\mathsf{UD1\mhyphen qCCA2}}^{n,k}}\xspace}
\newcommand{\sqcp}{\ensuremath{\mathsf{\mathsf{QCP}}}\xspace}
\newcommand{\swqcp}{\ensuremath{\mathsf{\mathsf{WEAK\mhyphen QCP}}}\xspace}
\newcommand{\swqcpnk}{\ensuremath{\mathsf{\mathsf{WEAK\mhyphen QCP}}^{n,k}}\xspace}
\newcommand{\swqcpria}{\ensuremath{\mathsf{\mathsf{WEAK\mhyphen QCP\mhyphen RIA}}}\xspace}
\newcommand{\swqcpriank}{\ensuremath{\mathsf{\mathsf{WEAK\mhyphen QCP\mhyphen RIA}}^{n,k}}\xspace}
\newcommand{\sfdqcp}{\ensuremath{\mathsf{\mathsf{FLIP\mhyphen QCP}}}\xspace}
\newcommand{\sfdqcpnk}{\ensuremath{\mathsf{\mathsf{FLIP\mhyphen QCP}}^{n,k}}\xspace}
\newcommand{\sfdqcpriank}{\ensuremath{\mathsf{\mathsf{FLIP\mhyphen QCP\mhyphen RIA}}^{n,k}}\xspace}
\newcommand{\sidqcp}{\sfdqcp}
\newcommand{\sindx}{\ensuremath{\mathsf{\mathsf{IND\mhyphen X}}}\xspace}
\newcommand{\sqindqx}{\ensuremath{\mathsf{\mathsf{qIND\mhyphen qX}}}\xspace}
\newcommand{\sindqx}{\ensuremath{\mathsf{\mathsf{IND\mhyphen qX}}}\xspace}
\newcommand{\sindqcpa}{\ensuremath{\mathsf{\mathsf{IND\mhyphen qCPA}}}\xspace}
\newcommand{\sqindqcpa}{\ensuremath{\mathsf{\mathsf{qIND\mhyphen qCPA}}}\xspace}
\newcommand{\sqindqcca}{\ensuremath{\mathsf{\mathsf{qIND\mhyphen qCCA1}}}\xspace}
\newcommand{\sqindqccaa}{\ensuremath{\mathsf{\mathsf{qIND\mhyphen qCCA2}}}\xspace}
\newcommand{\sindqcca}{\ensuremath{\mathsf{\mathsf{IND\mhyphen qCCA1}}}\xspace}
\newcommand{\sindqccaa}{\ensuremath{\mathsf{\mathsf{IND\mhyphen qCCA2}}}\xspace}
\newcommand{\sseufcma}{\ensuremath{\mathsf{\mathsf{sEUF\mhyphen CMA}}}\xspace}
\tikzset{join/.code=\tikzset{after node path={%
\ifx\tikzchainprevious\pgfutil@empty\else(\tikzchainprevious)%
edge[every join]#1(\tikzchaincurrent)\fi}}}
\tikzset{>=stealth',every on chain/.append style={join},
         every join/.style={->}}
\tikzstyle{labeled}=[execute at begin node=$\scriptstyle,
    \newtheorem{theorem}{Theorem}
    \newtheorem{lemma}[theorem]{Lemma}
    \newtheorem{corollary}[theorem]{Corollary}
    \newtheorem{proposition}[theorem]{Proposition}
    \newtheorem{definition}[theorem]{Definition}
\newtheorem{claim}{Claim}
\newtheorem*{theorem*}{Theorem}
\newtheorem*{lemma*}{Lemma}
\newtheorem*{corollary*}{Corollary}
\newtheorem*{proposition*}{Proposition}
\newtheorem*{claim*}{Claim}
\theoremstyle{definition}
\theoremstyle{remark}
\newtheorem{remark}[theorem]{Remark}
\theoremstyle{plain}
\newtheorem{construction}{Construction}
\let\expandafter\savedflalignstar\csname flalign*\endcsname
\let\expandafter\savedendflalignstar\csname endflalign*\endcsname
\let\csname flalign*\endcsname\savedflalignstar
\let\csname endflalign*\endcsname\savedendflalignstar
\newcommand{\onote}[1]{\authnote{Or}{#1}{blue}}
\newcommand{\snote}[1]{\authnote{Shai}{#1}{red}}
\newcommand{\ab}[1]{}  
\DeclareMathAlphabet{\mathpzc}{OT1}{pzc}{m}{it}
\def\redul#1{\color{red}\underline{{\color{black}#1}}\color{black}}
\begin{document}

\onote{To do list for next version

\begin{enumerate}

        \item Does \sfdqcp imply \slorqcp ?
        
        \item Does it hold for general distributions that \swqcp does not imply \slorqcp? It seems that the splitting attack could be adapted but there are some details in the way.
    
        \item Idea for UD for many bits using \swqcp and iO:
        $\enc_{\sk_\bbf}(m)$ samples $\secpar$ random strings $r_1,\ldots,r_\secpar \gets \zo^{\ell_\bbf}$ and returns $r_1,\ldots,r_\lambda$ along with an obfuscation of
        $$C(x) = 
        \begin{cases} 
            m  & x =\bbf.\eval_{\sk_\bbf}(r_1)\ldots \bbf.\eval_{\sk_\bbf}(r_\secpar) \\
            \bot & \mbox{else}
        \end{cases}$$
        
        The idea is that being iO implies being a eO, that is, if I can distinguish two circuits then I can also find a point where they differ. Hence, if I use the same values of $r_i$ to encrypt two different messages then I can distinguishing the ciphers amounts to finding $\bbf.\eval_{\sk_\bbf}(r_1)\ldots \bbf.\eval_{\sk_\bbf}(r_\secpar)$. This could be used to win the \swqcp game.
        \onote{I'm not sure about serveral things here. First, is eO (which I persume stands for extractable obfuscation) the same as differing input obfuscation? I don't know all the details regarding the implication, and in particular, are there no caveats for that implication? More importantly, can you sketch the security proof? What security are you arguing? CCA-2? That's what you mentioned in your whatsapp message. I don't think this is reasonable, since I think(?) I have an attack for this construction: at most it should be UD-CPA secure. 
        }
        \snote{If my argument above is correct, this construction should be CPA1. CCA2 is clearly unreasonable since there is no guarantee that I can't alter the obfuscated circuit of the challenge cipher a little bit without changing its functionality. What I meant in Whatsapp is that I think this construction gives CCA1 (or at least CPA) from weak QCP, and then we can transform it to CCA2 usind sEUF-CMA secure signatures (and wondered whether deterministic post quantum sEUF-CMA secure signatures are implied by iO). I don't see how having access to a decryption oracle before getting the challenge cipher is in any way helpful to the adversary, as whatever data they get from it is independent of $r_i$.
        
        The only thing here I am concerned about is the move from iO to eO (which as I recall is the same as diO). I need to verify that the reduction gives me the properties I think it does. However, this step is very similar to an argument made by Zhandry in his security proof for signature tokens from iO}
        
        \item Some questions: classical decryptor generation, revocability, what security can we gain from SSL rather than QCP.
        
\end{enumerate}
}

\ifnum\anon=0
        \ifnum\sigconf=0
            \author[1]{Or Sattath}
            \author[1,2]{Shai Wyborski}
            \affil[1]{Computer Science Department, Ben-Gurion University}
            \affil[2]{School of Computer Science and Engineering, The Hebrew University of Jerusalem}
        \else
            \author{Or Sattath}
            \affiliation{%
            \institution{Computer Science Department, Ben-Gurion University}
            \country{Israel}}
        \fi
\else
    \ifnum\sigconf=0
        \author{}
    \fi
\fi

\ifnum\sigconf=0
    \maketitle
\fi

\begin{abstract}
\emph{Uncloneable decryptors} are encryption schemes (with classical plaintexts and ciphertexts) with the added functionality of deriving uncloneable quantum states, called \emph{decryptors}, which could be used to decrypt ciphers without knowledge of the secret key 
\cite{GZ20}. We study uncloneable decryptors in the computational setting and provide increasingly strong security notions which extend the various indistinguishable security notions of symmetric encryption.

We show that \textsf{CPA} secure uncloneable \emph{bit} decryptors could be instantiated from a copy protection scheme (
\cite{Aar09}) for any balanced binary function. We introduce a new notion of \emph{flip detection security} for copy protection schemes inspired by the notions of left or right security for encryption schemes, and show that it could be used to instantiate \textsf{CPA} secure uncloneable decryptors for messages of unrestricted length.

We then show how to strengthen the \textsf{CPA} security of uncloneable decryptors to \textsf{CCA2} security using strong \textsf{EUF-CMA} secure digital signatures. We show that our constructions could be instantiated relative to either the quantum oracle used in \cite{Aar09} or the classical oracle used in \cite{ALLZZ20} to instantiate copy protection schemes. Our constructions are the first to achieve \text{CPA} or \text{CCA2} security in the symmetric setting.
\end{abstract}

\ifnum\sigconf=1
    \keywords{}
    \maketitle
\fi
\newpage
\setstretch{0.9}
\setcounter{tocdepth}{2}
\tableofcontents
\singlespacing

\newpage
\section{Introduction} 
\label{sec:introduction}
Consider a content provider who wishes to broadcast her content over satellite. How could she only allow paying customers to access their content? Classically, the broadcast data is encrypted, and each customer is furnished with a set-top box that contains the decryption key. This approach protects the data from eavesdroppers, but it has an obvious flaw: a hacker with access to just one set-top box could extract the secret key and create as many set-top boxes as they desire. Indeed, current solutions rely on either security by obscurity or authenticating the user, which requires two-way communication (and are therefore unsuitable for satellite broadcast communication).

\emph{Uncloneable decryptors encryption schemes} (or simply \emph{uncloneable decryptors}) solve this problem. Rather than storing a secret key, the set-top box stores a quantum decryptor. The uncloneablility of the decryptors prevents a pirate, even one with access to many set-top boxes, from creating a larger amount of boxes that could decode the broadcast content.

The notion of uncloneable decryptors makes sense for symmetric and asymmetric encryption schemes. In this work we concentrate on the former.

A central component in our constructions is \emph{quantum copy protection} \cite{Aar09,ALLZZ20}. Quantum copy protection is the practice of compiling a function into a quantum state called the \emph{copy protected program} which is useful to evaluate the function. The security of such schemes is defined to prohibit any pirates with access to $n$ copies of the program from creating $n+1$ programs of their own, such that each program could be used to evaluate the program on a given input with a non-negligible advantage over an adversary which outputs a uniformly random guess (the security is defined with respect to the distribution the function and input are sampled from). We formally introduce and discuss this notion in \cref{sec:qcp}.

\subsection{Results}

In this work, we study \emph{uncloneable decryptors}, a primitive that augments symmetric encryption with the functionality of deriving uncloneable quantum states -- which we call \emph{decryptors} -- that can be used to decrypt encrypted messages without the secret key.

The study of uncloneable decryption was initiated in \cite{GZ20} (which we survey in more detail in \cref{ssec:int_rel}) where the notion of \emph{single} decryptor schemes was first defined. We extend the syntax to allow deriving arbitrarily many decryptors, which justifies the choice to change the name.

We mostly consider the computational setting, where the adversary can access various decryptors and may make encryption and decryption calls. We define notions of security which extend the notions of \sindqcpa, \sindqcca and \sindqccaa of \cite{BZ13} (which we formally introduce in \cref{ssec:pre_ses}). These notions extend indistinguishability security of encryption schemes to the post-quantum setting by allowing the adversary to make oracle calls in superposition (as we precisely define in \cref{ssec:oracles}). However, the challenge phase (in which the adversary provides two plaintexts $m_0,m_1$ and has to distinguish their ciphertexts) is done classically. 

We introduce our security notions of \sud, \sudqcpa, \sudqcca and \sudqccaa in \cref{def:dcs-ind,def:dcs-indx}, and in \cref{def:dcs-obindx} we present a variant of these definitions appropriate for \emph{single-bit encryption} schemes which we denote \sobud, \sobudqcpa, \sobudqcca and \sobudqccaa respectively. In \cref{ssec:dcs_assoc} we prove that these security notions extend the corresponding notions of \cite{BZ13}.

We also study the implications of quantum copy protection \cite{Aar09} towards uncloneable decryptors. We introduce and discuss the relevant aspects of quantum copy protection in \cref{sec:qcp}. In particular, we show that the security notions of \cite{Aar09} are not sufficiently strong for our construction, as they do not prohibit \emph{splitting attacks}, that is, splitting the copy protected program into two \emph{partial programs}, where each program is useful to evaluate the function on a large fraction of the inputs. In \cref{sssec:qcp_split} we present an explicit attack that splits the program into two partial programs, each able to evaluate the underlying function in half of the domain. As we show in \cref{thm:ext_fail}, such attacks are detrimental to our constructions. In order to overcome this difficulty, we present in \cref{ssec:qcp_ids} the notion of \emph{flip detection security} -- a novel security notion for copy protection schemes, inspired by notions of "left or right" security for encryption schemes (as we more thoroughly discuss in \cref{app:lor_qcp}), which is secure against such attacks. We use \swqcp and \sfdqcp to indicate the security notions of \cite{Aar09} and \cref{ssec:qcp_ids}, respectively.

All of the security notions above are first defined in the context where $n$ decryptors (or copy-protected programs) are given to an adversary who attempts to create $k$ \emph{additional} copies. We indicate this by putting $n,k$ in the superscript, e.g. \sudqccank or $\swqcp^{1,1}$.

With the syntax and security of uncloneable decryption and copy protection established, we prove the following results:
\begin{itemize}
    \item In \cref{con:dcs_cca1_res} (\cref{ssec:con_ud1_qcpa}) we instantiate an uncloneable bit decryptors scheme from a quantum copy protection scheme for any balanced binary function. In \cref{thm:con_1bit_dcs_cca1} we prove this construction is \sobudqcpank as long as it is instantiated from a \swqcpnk secure copy protection scheme. 
    \item In \cref{ssec:dcs_assoc} we show that \sobudqx security implies \sindqx security for the underlying symmetric encryption scheme. Together with the construction of the previous item, this allows us to infer two interesting properties of \swqcp security: 1. the existence of a $\swqcp^{1,1}$ secure copy protection scheme for a balanced binary function implies the existence of post-quantum one-way functions (\cref{cor:owf}), and 2. any \swqcp copy-protectable \bbf must be a weak $\prf$ (\cref{cor:prf}).
    \item In \cref{con:dcs_cca2_res} (\cref{sssec:obindccaa}) we transform a \sobudqcpank secure scheme to a \sobudqccaank secure scheme by means of post-quantum \sseufcma secure digital signatures (which are discussed in \cref{ssec:ds}). We prove the security of this transformation in \cref{thm:dcs_cca2_res}.
    \item In \cref{sssec:udqcpa} we discuss the security of extending an uncloneable decryptors bit encryption scheme to a scheme that supports plaintexts of arbitrary length by a simple "bit by bit extension" transformation (which we formally define in \cref{def:extend}). We show in \cref{thm:ext_fail} an example of a bit encryption scheme which is \sobudqccaank secure (assuming the existence of a \swqcpnk secure copy protection scheme), whose bit by bit extension to support two bit long messages fails to admit even $\sud^{1,1}$ security. In contrast, in \cref{thm:dcs_cca1} we show that if we instantiate \cref{con:dcs_cca1_res} with a \sfdqcpnk secure copy protection scheme, then extending this scheme bit by bit results in a \sudqcpank secure uncloneable decryptors scheme.
    \item In \cref{con:dcs_cca2} we transform any uncloneable bit decryptor scheme whose bit by bit extension is \sudqcpank secure to a \sudqccaank secure scheme by means of post-quantum \sseufcma secure digital signatures. This construction is a generalization of \cref{con:dcs_cca2_res}. We prove the security of this construction in \cref{thm:udccaank}.
    \item We prove that unconditional \sud security is impossible against an adversary with access to arbitrarily polynomially many decryptors, even if they are not given access to any amount of ciphertexts. This result is incomparable to the impossibility result of \cite{GZ20} as our adversary has access to more decryptors but no ciphers. For more details see \cref{ssec:dcs_uncond}.
    \item In \cref{app:lor_qcp} we show that a \sindqccaa could be instantiated with respect to a quantum oracle, and a $\sindqccaa^{1,1}$ secure scheme could be instantiated relative to a classical oracle. This construction improves upon the previously known constructions, which achieve at most $\sindqcpa^{1,1}$ security.
\end{itemize}

\subsection{Related Works}\label{ssec:int_rel}

\paragraph{Uncloneability in encryption} Uncloneability first appeared in the context of encryption schemes in \cite{Got03}, who presents a scheme for encrypting classical data into a quantum cipher. In their setting, it is possible to split the cipher into two states from which the original plaintext could be recovered upon learning the key, but the fact that the cipher was split could always be \emph{detected} by the intended receiver.

An encryption scheme with uncloneable \emph{ciphers} is presented in \cite{BL20}. In this setting, the receiver is given a quantum ciphertext for a classical message that could be decrypted later, given the classical secret key. The security requirement of this primitive is that an adversary given the ciphertext could not split it into two states, both of which could be used independently to recover the plaintext given the secret key. They provide a construction based on conjugate coding and prove that it is unconditionally secure in the QROM against adversaries with access to a single cipher, provided that the holders of the two parts of the split cipher do not share entanglement.

The study of uncloneable decryptors was initiated in \cite{GZ20} and was studied therein in the context of a \emph{single} decryptor. The authors formalize the notion of single decryptor schemes in private and public key settings, and the security thereof. They then proceed to show a black box transformation of the uncloneable cipher scheme of \cite{BL20} to a single decryptor scheme that is \emph{selectively} secure (that is, the adversary has to pick the messages before seeing the decryption key). The black box properties of this transformation imply that the resulting scheme inherits the security properties of \cite{BL20}'s scheme; namely, it is also secure in the QROM against unentangled adversaries with access to a single cipher. The authors of \cite{GZ20} also construct computationally secure public-key single decryptor schemes based on one-shot signatures and extractable witness encryption. This construction is secure against adversaries with access to a single cipher in the common reference string model. Note that these assumptions are quite strong, in particular, the only known candidate for one-shot signatures is instantiated with respect to a classical oracle \cite{AGKZ20}.

The authors of \cite{CLLZ21} use hidden coset states to present two constructions for public key single decryptor schemes which are secure in the plain model with respect to some assumptions: the first construction requires extractable witness encryption, and the second removes this requirement assuming post-quantum compute-and-compare obfuscation for the class of unpredictable functions.

\paragraph{Quantum copy protection} Quantum copy protection was first presented in \cite{Aar09}, which constructs a copy protection scheme for any quantum-unlearnable function relative to the existence of a quantum oracle. This construction is secure against QPT adversaries with access to arbitrarily many copies of the copy protected program. The author also provides several candidate constructions for copy protection schemes of delta functions. The oracle construction of \cite{Aar09} is the only one that is known to be secure in the presence of multiple copy-protected programs.

The authors of \cite{ALLZZ20} point out some weaknesses in the original security definition of \cite{Aar09} and propose a strictly stronger definition. In particular, they briefly discuss the possibility of splitting attacks such as the one we formalize in \cref{sssec:qcp_split}. Their definition roughly relies on the idea that instead of testing a program against a sampled function and input pair, they could quantify the "quality" of the program against the entire distribution. They also replace the quantum oracle of \cite{Aar09} with a \emph{classical} oracle and prove that relative to this oracle all quantum-unlearnable functions are securely copy protectable even with respect to their stronger notion of security. However, this construction is only known to be secure in the presence of a single copy. 

The authors of \cite{BJLPS21} construct a single copy "semi-secure" copy protection scheme for compute-and-compare programs where it is assumed that at least one of the freeloaders is honest. The authors of \cite{CMP20} construct a copy protection scheme for compute-and-compare circuits without any assumptions which provably admits non-trivial single copy security (that is, it does not completely satisfy the security definition of copy protection, but it does satisfy that the probability that both freeloaders evaluate the function correctly is non-negligibly smaller than $1$, which is classically impossible) in the quantum random oracle model.

In \cite{CLLZ21} it is shown how to construct a copy protection scheme for $\prf$s from the two constructions for uncloneable decryptors mentioned earlier. Their constructions are single-copy secure relying on the same assumptions (either extractable witness encryption or post-quantum compute-and-compare obfuscation for the class of unpredictable functions) along with post-quantum one-way functions with subexponential security and indistinguishability obfuscation. This is the only example of a provably secure copy protection scheme for a non-evasive class of functions.

The authors of \cite{AP20} initiate the study of a weaker form of copy protection called \emph{secure software leasing}. In this notion, it is assumed that the end users of the program are honest. That is, piracy is not preventable, but it is always detectable. They construct such a scheme for evasive functions based on indistinguishability obfuscators and LWE. They further provide a construction for a class of quantum unlearnable functions assuming LWE, but for which an SSL scheme could not be constructed. This implies that a general scheme for quantum copy protection for quantum unlearnable functions is also impossible. More recently, \cite{KNY20} managed to construct SSL schemes for a subclass of the class of evasive functions. They also consider two weaker variants of SSL --  \emph{finite-term SSL} and \emph{SSL with classical communications} -- and show constructions of such schemes for pseudo-random functions. Their constructions are single copy secure in the common reference string model.

\paragraph{Post-quantum indistinguishability of ciphers}
Security notions for classical encryption and digital signature schemes against quantum adversaries were first considered in \cite{BZ13}, where the notions of \sindqcca and \sindqccaa security were introduced. These notions allow the adversaries to query the encryption/decryption oracles in superposition but do not allow making the challenge query in superposition. The authors show that two seemingly natural security definitions that allow superimposed challenge queries are actually impossible to obtain and leave open the problem of meaningfully defining \sqindqcpa security.

The authors of \cite{GHS16} propose a definition for \sqindqcpa and prove that it extends the \sindqcpa notion of \cite{BZ13}. They show that this notion is not achievable by any quasi-length-preserving encryption schemes but provide a length-increasing scheme that is \sqindqcpa secure.

In \cite{CEV20}, the authors propose a definition for \sqindqccaa and prove that it is stronger than the \sqindqcpa notion of \cite{GHS16}. They provide constructions for \sqindqcca secure symmetric encryption schemes and prove that the encrypt-then-MAC paradigm affords a generic transformation of \sqindqcca secure schemes to \sqindqccaa secure schemes.

\subsection{Scientific Contribution}

We improve upon existing work in the following senses:
\begin{itemize}
    \item our constructions are the first to exhibit security in the presence of multiple decryptors,
    \item our work is the first to obtain \text{CPA} and \text{CCA2} in the symmetric setting, and
    \item to the best of our knowledge, this is the first work to exhibit an application of copy protection to cryptography.
\end{itemize}

\subsection{Drawbacks}

The main drawback of our constructions is that they currently cannot be instantiated in any standard model to obtain an uncloneable decryptors scheme which is not a single decryptor scheme. Our constructions require quantum copy protection of balanced binary functions. The only known construction of such a scheme is the one presented in \cite{CLLZ21}, which relies on assumptions that \cite{CLLZ21} also use to construct single decryptor schemes. However, as we discuss in \cref{sssec:fdqcp_oracle}, our schemes can be instantiated relative to a quantum oracle. Furthermore, this oracle could be replaced with a classical oracle at the cost of the resulting scheme only being secure as a single decryptor scheme.

In order to obtain security against messages of arbitrary length, our construction assumes flip detection security. It is yet unclear if flip detection security is obtainable from weak security \cite{Aar09}, or if it is implied by strong copy protection \cite{ALLZZ20}. (However, we show in \cref{app:flip_unlearn} that the oracle \emph{constructions} of \cite{Aar09} and \cite{ALLZZ20} satisfy flip detection security.)

The security notions presented below are given as indistinguishability experiments. Our notions would be better established had we provided an equivalent semantic (simulation based) security definition. At this point, we do not even have a candidate notion of semantic security. We leave this as an open problem.

\subsection{Overview of the Constructions}

We give a short informal description of our constructions. We use the syntax for uncloneable decryptors, which we formally define in \cref{def:dcs} (as well as syntaxes for copy protection and digital signatures). However, we hope that the function of the procedures we refer to is sufficiently clear from context for the purpose of this exposition.

Let \sqcp be a copy protection scheme for a balanced binary function \bbf with input length $\ell$. We transform \sqcp to an uncloneable bit decryptor scheme using a transformation similar to the standard transformation of pseudo-random functions to symmetric encryption schemes (see e.g. \cite[Construction~5.3.9]{Gol04}):

\begin{itemize}
    \item The secret key $\sk$ is an (efficient description of) a function sampled from \bbf, that is, $\sk\gets \bbf.\samp(\secparam)$.
    \item To generate a decryptor, copy protect the function: $\decgen(\sk) \equiv \qcp.\ptect(\sk)$.
    \item To encrypt a bit $b$, sample $r\gets \zo^\ell$ and output the cipher $(r,b\oplus \bbf.\eval_\sk(r))$.
    \item To decrypt a cipher $(r,\beta)$ using a decryptor $\rho$ calculate $b=\beta \oplus \qcp.\qeval(\rho,r)$.
\end{itemize}

To obtain \sobudqccaa security from a \sobudqcpa secure scheme \ud, we wrap it with a digital signature scheme \ds:

\begin{itemize}
    \item Sample a key pair $(\sk_\ds,\pk_\ds)\gets \ds.\keygen(\secparam)$ and attach $\sk_\ds$ to the secret key and $\pk_\ds$ to any generated decryptor.
    \item To encrypt a bit $b$ generate $c\gets \ud.\enc(b)$ and output $(c,s=\ds.\sign_{\pk_\ds}(c))$.
    \item To decrypt a cipher $(c,s)$ first verify that $\ds_{\pk_\ds}(c,s)=1$ and if the verification passes use \ud to decrypt the message.
\end{itemize}

In order to extend the first scheme to support arbitrary length plaintexts, we encrypt each bit separately. This construction remains secure provided that the underlying copy protection scheme \qcp has sufficiently strong security.

To obtain \sudqccaa security we adjust the transformation we used for bit decryptors by:
\begin{itemize}
    \item for each cipher, sampling a random \emph{serial number} $r\gets\zo^\secpar$, and
    \item signing the cipher of each bit along with the serial number, the plaintext length, and the index of the bit (that is, if the cipher of the \sudqcpa scheme is $c_1,\ldots,c_\ell$ we attach to $c_i$ the signature $s_i = \ds.\sign_{\sk_\ds}(c_i,i,\ell,r)$).
\end{itemize} Signing $\ell$ prevents an adversary from truncating ciphers, signing $i$ prevents rearranging the bits of a cipher, and signing $r$ prevents splicing ciphers.

\ifnum\anon=0
\subsection{Acknowledgements}
We wish to thank Dominique Unruh for valuable discussions. This work was supported by the Israel Science Foundation (ISF) grant No. 682/18 and 2137/19, and by the Cyber Security Research Center at Ben-Gurion University.
\fi
\section{Preliminaries}\label{sec:pre}
\subsection{Basic Notions, Notations and Conventions} \label{ssec:pre_notations}

We call a function $f:\mathbb{R}\to\mathbb{R}$ \emph{negligible} if for every $c<0$ it holds for any sufficiently large $\secpar$ that $|f(\secpar)| < \secpar^c$. Equivalently, $f$ is negligible if $|f(\secpar)|=\secpar^{-\omega(1)}$. We often use the shorthand $f(\secpar) \le \negl$ to state that $f$ is negligible, and the shorthand $f(\secpar) \le g(\secpar) + \negl$ to state that $|f-g|$ is negligible.

We say that $f:\mathbb{N}\to \mathbb{N}$ is \emph{polynomial in \secpar} and denote $f=\poly$ if there exists a polynomial $p$ such that $f(\secpar) \le p(\secpar)$ for all $\secpar \in \mathbb{N}$. Note that we do not require that $f$ is a polynomial, only that it is upper bound by a polynomial.

A \emph{quantum polynomial time (QPT)} procedure $C$ is a uniform family of circuits $C_\secpar$ such that $|C_\secpar| = \poly$, where we use $|C_\secpar|$ to denote the number of elementary gates in $C_\lambda$. We always assume that the elementary gates are chosen from a fixed finite universal set. Equivalently, $C$ is QPT if there exists a polynomial Turing machine whose output on the input $\secparam$ is a classical description of the circuit $C_\secpar$.

We use sans serif typeface to denote circuits whose input and output are classical such as $\enc$ and classical data such as $\sk$. We use calligraphic typeface to denote quantum algorithms and data such as $\qenc$ and $\qsk$. We use Greek letters $\rho,\sigma,\ldots$ to represent (possibly mixed) quantum states given as density operators. When $\rho$ is a state and $s$ is a classical string we often abuse notation by writing $\rho\otimes s$ as shorthand for $\rho\otimes \ket{s}\bra{s}$.

Sans serif typeface does \emph{not} indicate that the circuits are not quantum, only that they map computational basis states to computational basis states. When the circuits are random, we always assume that the random bits are uniformly sampled from the computational basis. This is further elaborated upon in \cref{ssec:oracles}.

We use sans serif uppercase letters to denote instances of (either classical or quantum) schemes, such as $\se$ to denote symmetric encryption schemes. When there are several schemes in play, we will use notations such as $\se.\keygen$ to clarify to which scheme we refer. 

When $\sk$ denotes a key, we will often use currying notation such as $\enc_\sk$ rather than $\enc(\sk,\cdot)$ for clarity of exposition. We emphasize that $\sk$ is still considered part of the input to $\enc$.

Given a function $f:D\to R$ and $S\subset D$, we define the function
$$(f\setminus S)(x) = \begin{cases} f(x) & x\notin S \\ \bot & x\in S \end{cases}\text{,}$$ if $c\in D$ we slightly abuse notation by writing $f\setminus c$ instead of $f\setminus \{c\}$.

For any string $c\in \{0,1\}^*$ we  use $|c|$ to denote its length and $\overline{c}$ to denote its bitwise flip.

For any distribution $\sD$ we use the notation $x\gets \sD$ to indicate that $x$ is sampled from the distribution $\sD$. For any finite set $A$ we use the notation $x\gets A$ to indicate that $x$ is sampled from $A$ \emph{uniformly}.

\subsection{Classical Oracles with Quantum Access}\label{ssec:oracles}

Some of the circuits we discuss below are implicitly \emph{oracle machines}. This means that they are given access to query an oracle (that is, a black box that evaluates some function). Given an oracle machine $\mathcal{A}$ and a \emph{classical} function $f$, we use the notation $\mathcal{A}^{f}$ to denote an invocation of $\mathcal{A}$ when it is given oracle access to $f$.

In our context, we wish to model a quantum adversary which can perform a classical circuit on an input which is \emph{in superposition}. Even though the underlying function is \emph{classical}, the adversary can access it \emph{quantumly}. We now precisely define what this means.

For deterministic oracles, quantum access means that the adversaries may query the oracle superposition as follows: the standard way to represent a function $f$ as a quantum operation as the unitary $U_f$ defined via $\ket{x,y}\mapsto \ket{x,y\oplus f(x)}$. Given an oracle machine $\sA$ and a classical function $f$ we use $\sA^{\ket{f}}$ to indicate that $\sA$ my perform quantum queries to the unitary $U_f$. We call this \emph{quantum access} to the (classical) function $f$.

Some of the oracles we consider are randomized (e.g. encryption oracles). In case $C$ is some randomized circuit, we use $C(\cdot;r)$ to denote the (deterministic) invocation of $C$ with random bits $r$. We define the quantum oracle of $C$ as the oracle which, upon receiving a query (possibly in superposition), samples a uniformly random $r$ and applies to the state the unitary $U_{C;r}$ defined by $\ket{m,x} \mapsto \ket{m,C(m;r)\oplus x}$. Given an oracle machine $\sA$ and a randomized circuit $C$ we use $\sA^{\ket{C}}$ to indicate that $\sA$ may perform quantum queries to the oracle which uniformly samples $r$ and then applies $U_{C;r}$ to the input. Note that the randomness is sampled once \emph{per query}.

\begin{remark}
    This oracle could be equivalently defined as the oracle which, on input $\rho$, applies the unitary $\ket{m,x,r} \mapsto \ket{m,C(m;r)\oplus x,r}$ to $\rho\otimes\left(2^{-|r|}\sum_r\ket{r}\bra{r}\right)$.
\end{remark}

\subsection{Symmetric Encryption Schemes}\label{ssec:pre_ses}

Recall the syntax of symmetric encryption schemes:

\begin{definition}[Symmetric Encryption Scheme]\label{def:ses}
    A \emph{symmetric encryption} (or \emph{private key encryption}) \emph{scheme} $\se$ is a triplet of QPT circuits:
    \begin{itemize}
        \item $\sk \leftarrow \keygen(1^\lambda) $,
        \item $c \leftarrow \enc_\sk(m) $, and
        \item $m  \leftarrow \dec_\sk(c)$.
    \end{itemize}
    $\se$ is \emph{correct} if for all messages $m\in \{0,1\}^*$ it holds that
    $$\prob{
            \sk\leftarrow\keygen\left(1^{\lambda}\right)
        :
            \dec_\sk(\enc_\sk(m)) = m
    } = 1 $$
\end{definition}

Security of encryption schemes is commonly defined in terms of \emph{indistinguishability games} between a trusted challenger $\sC$ and an arbitrary adversary $\sA$ which takes the following form:
\begin{enumerate}
    \item (Initialization) $\sC$ invokes $\keygen(\secparam)$ to obtain a secret key $\sk$ which it provides to $\sA$.
    \item (First learning phase) $\sA$ produces two messages $m_0$, $m_1$.
    \item (Challenge query) $\sA$ provides $\sC$ with $m_0$, $m_1$, $\sC$ samples a uniformly random $b\gets\{0,1\}$ and responds with the \emph{challenge cipher} $c=\enc_\sk(m_b)$.
    \item (Second learning phase) $\sA$ outputs a single bit $b'$.
    \item (Winning condition) The output of the game is $1$ if $b'=b$.
\end{enumerate}
The scheme is considered secure if any QPT $\sA$ can win the game with probability at most $1/2 + \negl$.

We define increasingly strong security notions by considering increasingly strong adversaries, where we model the adversary's strength by the oracle access they are allowed to have. The most common notions are:
\begin{itemize}
    \item Passive Attack (\sind): $\sA$ is not given access to any oracle.
    \item Chosen Plaintext Attack (\sindcpa): $\sA$ is given classical oracle access to $\enc_\sk$.
    \item A priori Chosen Ciphertext Attack (\sindcca): $\sA$ is additionally given classical oracle access to $\dec_\sk$ \emph{during the first learning phase}.
    \item A posteriori Chosen Ciphertext Attack (\sindccaa): $\sA$ is additionally given classical oracle access to $\dec_\sk\setminus c$ during the second learning phase, where $c$ is the challenge cipher.
\end{itemize}

Another way to increase the adversary's strength is by allowing her to make oracle queries in superposition (as described in \cref{ssec:oracles}). We denote the result of modifying the \sindx game this way by \sindqx. Note that in the \sindqx indistinguishability games the challenge query is still completely classical. This notion of security was first explored in \cite{BZ13}. For completeness, we provide the full definition of \sindqcca security. \sindqcpa and \sindqccaa security are defined by appropriately modifying the oracle access of $\sA_1$ and $\sA_2$.

\begin{definition}[{\sindqcca security, \cite[Definition~8]{BZ13}}] \label{def:ind-qccax}
    Let $\se$ be a symmetric encryption scheme (cf. \cref{def:ses}). For a procedure $\sA = (\sA_1,\sA_2)$ let the $\sindqcca_{\se}(\sA,\lambda)$ game between $\sA$ and a trusted challenger $\sC$ be defined as follows:
    \begin{enumerate}
        \item $\sC$ invokes $\keygen(\secparam)$ to obtain the key $\sk$,
        \item $\sA_1^{\ket{\enc_\sk}\ket{\dec_\sk}}$ outputs two classical plaintexts $m_0,m_1$ and auxiliary data $\sigma$,
        \item $\sC$ samples $b\gets \zo$ and computes $c\gets\enc_\sk(m_b)$,
        \item the output of the game is $1$ if $\sA_2^{\ket{\enc_\sk}}(\sigma,m_0,m_1,c)=b$\label{ent:ind-qcca1}
    \end{enumerate}
    
    $\se$ is \emph{\sindqcca secure} if for any QPT $\sA$ it holds that
    $$\PP[\sindqcca (\sA,\lambda)=1] = \frac{1}{2} + \negl\text{.}$$
\end{definition} 

\begin{remark}
    One can also attempt allowing superimposed \emph{challenge queries}, arriving at notions of the form \sqindqx. However, it seems that the most straightforward ways to try to do so lead to security notions that are impossible to obtain, see e.g. \cite[Thoerem~4.2 and Theorem~4.4]{BZ13}. Recent works such as \cite{GHS16,CEV20} provide meaningful definitions for such security notion, as further discussed in \cref{ssec:int_rel}.
\end{remark}

\subsection{Digital Signature Schemes}\label{ssec:ds}

\begin{definition}[Digital Signature Scheme]A \emph{digital signature scheme} $\ds$ is a triplet of QPT circuits:\label{def:ds}
\begin{itemize}
    \item $\sk,\pk \gets \keygen(1^\lambda)$,
    \item $s\leftarrow\sign_\sk(m)$, and
    \item $b\leftarrow\ver_\pk(m,s)$.
\end{itemize}
where $\ver$ is deterministic.
$\ds$ is \emph{correct} if for all messages $m$ it holds that
    $$\PP\left[
        b=1
        :
        \begin{matrix}
            \sk,\pk\leftarrow\keygen(1^\lambda) \\
            s\leftarrow\sign_\sk(m) \\
            b\leftarrow\ver_\pk(m,s)
        \end{matrix}
        \ 
    \right] = 1{.}$$
    We say that \ds is \emph{deterministic} if $\ds.\sign$ is deterministic.
\end{definition}

The notion of security we require for digital signature is that of \emph{strong existential unforgeability under chosen message attacks}, which we abbreviate as \sseufcma. Under this notion, we furnish an adversary with access to a signing oracle as well as the public verification key and expect him to create a signed document $(m,s)$ such that $\ds.\ver_{\pk}(m,s) = 1$ though $s$ was not output as a response to a signature query on $m$. If this holds for any QPT adversary, we say that $\ds$ is (post-quantum) \sseufcma secure. 

\begin{definition}[\sseufcma security (adapted from \cite{GMR88})] \label{def:euf-qcma}
    Let $\ds$ be a digital signature scheme as in \cref{def:ds}. For a procedure $\sA$ let the strong $\sseufcma_\ds(\sA,\lambda)$ game be defined as follows:
    \begin{enumerate}
        \item $\sC$ generates $\pk,\sk\gets \keygen$ and gives $\pk$ to $\sA$.
        \item $\sA^{\sign_\sk}$ produces a signed document $(m,s)$.
        \item The result of the game is $1$ if $\ver_{\pk}(m,s) = 1$ and $s$ was not given as output to a query with input $m$ during the previous phases.
    \end{enumerate}
    $\ds$ is \emph{strong \sseufcma secure} if for any QPT $\sA$ it holds that
    $$\PP[\sseufcma_{\ds}(\sA,\lambda)=1] \le \negl\text{.}$$
\end{definition}

In the \emph{weak} variant $\mathsf{EUF\mhyphen CMA}$, we only require that the signing oracle is never queried on $m$. The resulting security notion does not prohibit attacks where a signed document $(m,s)$ could be used to create $s'\ne s$ such that $\ds.\ver_\pk(m,s')=1$. Stated differently, this security prevents an adversary from signing previously unsigned messages but not from creating new valid signatures for a previously signed message. As we elaborate in \cref{sssec:obindccaa}, we use signature schemes to make it unfeasible to create new valid ciphers from a list of ciphers for chosen plaintexts. If the adversary can modify the signature without abrogating its validity, they can transform a known plaintext into a new plaintext, which we expected the signature scheme to protect us from. Hence, $\mathsf{EUF\mhyphen CMA}$ is unsuitable for our application.

The notion of $\mathsf{EUF\mhyphen CMA}$ security first appeared in \cite{GMR88}, though the authors thereof only required security against PPT adversaries. By \emph{post-quantum security} we mean the same security game, but where the adversary is QPT rather than PPT. However, we still require that the adversary only has classical access to the signing oracle.

\begin{remark}
    Extending this notion to an adversary who is allowed to make signature queries in superposition is not at all straightforward, since in this case it is impossible to record the queries made by the adversary, and it is unclear even what it means for a signed document to be "different" than the responses to the queries they made. This has been addressed e.g. in \cite{BZ13}, where they do not record the queries but rather require that an adversary which makes $q$ queries can not create $q+1$ distinct signed messages that all pass the verification procedure.
\end{remark}

\section{Quantum Copy Protection}\label{sec:qcp}

In this section we overview the notion of \emph{quantum copy protection} -- the practice of compiling an arbitrary functionality into a quantum program in a manner that makes the functionality uncloneable. Copy protection was first introduced and discussed in \cite{Aar09}, which furnishes a first attempt at a security definition. For reasons which will become clear shortly, we refer to this security notion as \emph{weak} copy protection, which we discuss at some length in \cref{sssec:wqcp}. While this security notion is far from trivial, it also exhibits some vulnerabilities which make it unsuitable for many cryptographic applications. One such vulnerability is that it does not prohibit \emph{splitting} the program into two "partial" programs, each able to evaluate the protected function on a different portion of the domain. We exhibit an explicit splitting attack in \cref{sssec:qcp_split}. In \cite{ALLZZ20} a much stronger security notion is proposed, which does prohibit splitting attacks. We shortly and informally discuss this notion in \cref{sssec:qcp_new}. We further discuss the state of the art of quantum copy protection \cref{ssec:int_rel}.

Our constructions only require the copy protection of a balanced binary function. That is, an efficiently sampleable distribution of binary functions such that applying a sampled function to a uniform input distributes negligibly close to a uniformly random bit. In \cref{ssec:qcp_ids} we introduce \emph{flip detection security}, a strengthening of weak copy protection which prohibits splitting attacks.

\subsection{The Syntax of Quantum Copy Protection}

\begin{definition}[Admissible Class]\label{def:admiss}
    An \emph{admissible class} of functions $\sF_\secpar$ is a collection of functions with the following properties:
    \begin{itemize}
        \item the members of $\sF_\secpar$ are functions $\zo^\ell \to \zo^m$ where $n,m=\poly$,
        \item there exists $d = \poly$ such that for any $f\in \sF_\secpar$ there exists a string $d_f$ with $|d_f|=d$, this string is called the \emph{description} of $f$, and
        \item there exists a circuit $\eval^\sF$ such that for any $f\in \sF_\secpar$ and $x\in \zo^n$, $\eval^\sF(d_f,x)$ outputs $f(x)$ with running time $\poly$.
    \end{itemize}
\end{definition}

When considering a class $\sF_\secpar$ we usually suppress the security parameter $\secpar$, we also use the notation $\eval_f$ as shorthand for $\eval^\sF(d_f,\cdot)$. We often (e.g. in the definition below) abuse notation and refer to $f$ instead of to $d_f$, and to $\sF$ as the collection of descriptions rather than the collection of functions themselves.

\begin{definition}[Copy Protection Scheme, adapted from \cite{Aar09}]\label{def:qcp_synt}
    Let $\sF$ be an admissible class, a \emph{copy protection scheme} for $\sF$ is a pair of procedures $\sqcp_\sF = (\ptect,\qeval)$ with the property that if $\sigma_f \gets \ptect(f)$ then it holds for any $x\in\zo^\ell$ that $\qeval(\sigma_f,x) = f(x)$.
\end{definition}

We often refer to $\sigma_f$ as the \emph{copy protected} version of $f$. When $\sF$ is clear from context, we will suppress it and refer to the scheme by $\sqcp$.

\subsection{Weak Copy Protection}\label{sssec:wqcp}

We present the original notion of quantum copy protection presented in \cite{Aar09}. This definition was subsequently strengthened by \cite{ALLZZ20} whose authors also name the strengthened definition therein quantum copy protection. We hence refer to the original definition of \cite{Aar09} as \emph{weak} copy protection to avoid confusion.

Intuitively, copy protection security is defined in terms of a game between a trusted \emph{challenger} $\sC$ and several arbitrary QPT algorithms, namely a \emph{pirate} $\sP$ and $n+k$ non communicating \emph{freeloaders} $\sF_1,\ldots,\sF_{n+k}$. The pirate is given $n$ copies of the copy protected function $\sigma_f$, from which the create $n+k$ pirated copies $\rho_1,\ldots,\rho_{n+k}$, affording $\rho_j$ to $\sF_j$. Finally, $\sC$ asks each freeloader to evaluate $f$ on some point, and the output of the game is the number of freeloaders which evaluated $f$ correctly. The scheme is considered secure if for any adversary $\sA = (\sP,\sF_1,\ldots,\sF_{n+k})$ the expected output of the game is at most negligibly higher than $n + k/2^m$. Note that a probability of $n+k/2^m$ could be reached easily: for $j=1,\ldots,n$ the pirate gives a copy of $\sigma_f$ to $\sF_j$, so that the first $n$ freeloaders can answer correctly with certainty; the remaining $k$ freeloaders output a uniformly random value from $\zo^m$, giving each a winning probability of $2^{-m}$.

Formalizing this idea requires specifying the distribution from which the function $f$ and the inputs given to the freeloaders are sampled. Hence, copy protection (in all its forms) is defined with respect to a distribution $\sD$ on the class of functions and on the set of inputs, i.e. over $\sF \times \zo^n$.

\begin{definition}[\swqcp security, adapted from \cite{Aar09}]\label{def:wqcp} 
    Let $\sqcp$ be a copy protection scheme for some admissible class of functions, and let $\sD_\secpar$ be an efficiently samplable distribution on $\sF_\secpar\times \zo^n$. For any natural numbers $n,k$ define the game $\swqcp_\sqcp^{n,k,\sD}(\sA,\secpar)$ between a trusted challenger $\sC$ and an arbitrary adversary $\sA = (\sP,\sF_1,\ldots,\sF_{n+k})$:
    \begin{itemize}
        \item $\sC$ samples $(f,x)\gets \sD_\secpar$ and invokes $\sqcp.\ptect(f)$ $n$ times to obtain $\rho = \rho_f^{\otimes n}$,
        \item $\sP(\rho)$ generates $n+k$ states $\sigma_1,\ldots,\sigma_{n+k}$,
        \item $\sF_i(\sigma_i,x)$ outputs a string $y_i$,
        \item the output of the game is the number of indices $i$ such that $f(x)=y_i$.
    \end{itemize}
    
    $\sqcp$ is \emph{$\swqcp^{n,k,\sD}$ secure} if for any QPT adversary $\sA$ it holds that $$\epsilon = \E[\swqcp^{n,k,\sD}(\sA,\secpar)] \le n + \frac{k}{2^m} + \negl \text{,}$$ $\sqcp$ is \emph{$\swqcp^\sD$ secure} if it is $\swqcp^{n,k,\sD}$ secure for any $n,k=\poly$.
    
    When $\sD$ is clear from context we suppress it and write $\swqcp^{n,k}$ and $\swqcp$.
\end{definition}

As explained in \cite{Aar09}, a copy protection scheme could not be (weakly) secure against arbitrary adversaries, as an unbounded $\sP$ could use $\rho_f$ to learn $f$. A necessary condition for the existence of a copy protection scheme for $f$ is that $f$ is \emph{quantum-unlearnable}, which roughly means that a polynomial quantum adversary with oracle access to $f$ can not evaluate $f$ on a sampled $x$ once the oracle access is revoked (unlearnability is defined with respect to the distribution $f$ and $x$ are sampled from). We expand more on unlearnability in \cref{app:flip_unlearn}. However, quantum unlearnability is not a sufficient condition: \cite{AP20} construct a family of functions which is (under some standard cryptographic assumptions) quantum unlearnable and yet is \emph{not} copy-protectable.

\subsection{Balanced Binary Functions}

A \bbf $\sF$ is comprised of an admissible class (see \cref{def:admiss}) of \emph{binary} functions (that is, $m=1$) and an efficient procedure for sampling $f\gets \sF$ such that if $x$ is uniformly random then $f(x)$ is very close to uniform (where the distribution is taken over both $f$ and $x$).

More formally:

\begin{definition}[balanced binary function (BBF)]\label{def:bbf}
    A \emph{Balanced Binary Function} $\bbf$ is comprised of two QPT procedures $(\samp,\eval)$ such that:
    \begin{itemize}
        \item $\samp(\secparam)$ samples $f$ (not necessarily uniformly) from a set $\sF_\secpar \subset \{0,1\}^{poly(\secpar)}$. We do not require that all strings in $\sF_\secpar$ will be of the same length, only that their length is bounded by some polynomial in $\secpar$,
        \item there exists a polynomial $\ell$ such that if $f\leftarrow\samp(\secparam)$ then $\eval_f$ implements a deterministic function from $\{0,1\}^{\ell(\secpar)}$ to $\{0,1\}$, and
        \item if $f\leftarrow\samp(\secparam)$ and $x\gets \zo^{\ell(\secpar)}$ then $$\left|\prob{\eval_\sk(x) = 0} - \prob{\eval_\sk(x) = 1} \right| = \negl\text{.}$$
    \end{itemize}
\end{definition}

\begin{remark}\label{rem:bbf_dist}
    When considering copy-protection of $\bbf$s, the distribution $\sD$ is always implicitly assumed to be $\bbf.\samp(\secparam) \times \mathcal{U}(\zo^{\ell(\secpar)})$. That is, the function is sampled from the (randomized) circuit $\bbf.\samp$, and the input is uniformly random. This definition has the nice property that the function is sampled independently from the input, making it easy to extend this definition in various ways that require sampling more than one input. We propose one such way in \cref{ssec:qcp_ids}.
\end{remark}

Being a \bbf in itself is not a strong property. A trivial example of a \bbf is the one containing the two constant functions, which is obviously not unlearnable and in particular not copy protectable. It turns out that a necessary condition for a \bbf to be $\swqcp^{1,1}$ copy protectable is that \bbf is a weak \prf. We state and discuss this corollary in \cref{sssec:impl} and provide a proof in \cref{app:ggm}.

\subsection{A Splitting Attack on Weak Copy Protection}\label{sssec:qcp_split}

As mentioned earlier, the security definition in \cref{def:wqcp} has a weakness that is detrimental to our (and arguably other) applications. Namely, it does not prohibit \emph{splitting attacks}. Intuitively, a splitting attack is an efficient way to transform a copy-protected program $\rho_f$ into two states $\sigma_0$ and $\sigma_1$ such that each state $\sigma_b$ is useful to evaluate a non-negligible fraction of inputs. For example, $\sigma_b$ could be used to evaluate any input whose first bit is $b$.

We informally discuss such a splitting attack on any $\swqcp^{1,1}$ secure scheme and defer formal treatment of this attack to \cref{app:split}.

Let \sqcp be a copy protection scheme for a \bbf with input length $\ell$. We can consider a new class $\bbf'$ of input length $\ell+1$ such that every function in $\bbf'$ is given by a pair of (efficient descriptions of) functions $f_0,f_1$ sampled from \bbf. The function described by $(f_0,f_1)$ maps $b\|x$ to $f_b(x)$. Copy protect $\bbf'$ by sampling two functions $f_0,f_1$, copy protecting each to obtain $\rho_0,\rho_1$ and providing $\rho_0\otimes\rho_1$ as the protected function. Hence, to evaluate on the input $b\|x$ one invokes $\qcp.\qeval(\rho_b,x)$ and returns the output.

The copy-protected program could be trivially split into the two programs $\rho_0$, $\rho_1$, where $\rho_b$ could be used to evaluate the function on any input starting with $b$.

\subsection{Strong Copy Protection}\label{sssec:qcp_new}

Splitting attacks noticed by \cite{ALLZZ20}, who have devised a stronger definition of quantum copy protection that prohibits such attacks. The definition thereof is highly involved and requires introducing several notions it relies upon. Since we never use this definition directly, we compromise for an overview of the intuition behind it.

The idea is not to test the pirated programs produced by the pirate on a sampled output, but rather design a binary measurement whose success probability is the same as the probability that a given program (comprised of a quantum state along with "instructions" for using it to evaluate the copy-protected function in the form of a quantum circuit) evaluates correctly a function-input pair sampled from the respective distribution $\sD$. The scheme is then considered $\sqcp^{1,1,\sD}$ secure if it is impossible to transform a copy-protected function $\rho$ into two programs, such that each program passes the measurement with non-negligible security. The extension to $\sqcp^{n,k,\sD}$ and $\sqcp^{n,k,\sD}$ security is similar to the end of \cref{def:wqcp}.

The authors explain how to efficiently approximate a distribution such as above with respect to any efficiently samplable distribution $\sD$. This implies that the game described above could be approximately simulated by an efficient challenger, which is essential for using this definition for security reduction. A tenet of this security definition is that the measurement is performed on each program independently (though the measurement outcomes may not be independent due to entanglement), which prohibits splitting attacks.

Note that while it is possible to implement this measurement efficiently, it is impossible to efficiently determine that it has a negligibly small success probability. This is in analogy to the fact that it is efficient to simulate the $\swqcp$ game against arbitrary QPT adversaries, but it is infeasible to determine whether an arbitrary adversary has a non-negligible advantage.

\subsection{Flip Detection Security}\label{ssec:qcp_ids}

In this section we define the notion of \emph{flip detection security}.

The intuition is that it should be hard for many freeloaders to distinguish a black box that always evaluates the function correctly from a black box that always evaluates it incorrectly. As we discuss in \cref{app:lor_qcp}, this notion is a natural adaptation of the notions of left and right security used to model the security of encryption schemes in the presence of multiple encryptions.

The notion of \sfdqcp security is only defined for copy-protection schemes for $\bbf$s, recall that when considering such scheme we assume that $\sD = \bbf.\samp \otimes \mathcal{U}(\zo^\ell)$ (see \cref{rem:bbf_dist}).

\begin{definition}[\sfdqcp Security]\label{def:fdqcp}
    Let $\sqcp_\bbf$ be a copy protection scheme for a balanced binary function $\bbf$. For any binary function with input length $\ell$ and any bit $b$ let $\sO_{f,b}$ be the oracle which takes no input and outputs $(r,f(r)\oplus b)$ with $r\gets \zo^\ell$.
    
    For any $n,k =\poly$ define the game $\sidqcp_\sqcp^{n,k}(\sA,\secpar)$ between a trusted challenger $\sC$ and an arbitrary adversary $\sA = (\sP,\sF_1,\ldots,\sF_{n+k})$:
    \begin{itemize}
        \item $\sC$ samples $f \gets \bbf.\samp(\secparam)$ and $b_1,\ldots,b_{n+k}\gets \zo$ and invokes $\sqcp.\ptect(f)$ $n$ times to obtain $\rho=\rho_f^{\otimes n}$,
        \item $\sP(\rho)$ creates $n+k$ states $\sigma_1,\ldots,\sigma_{n+k}$,
        \item $\sF_i^{\sO_{f,b_i}}(\sigma_i)$ outputs a bit $b'_i$,
        \item the output of the game is the number of indices $i$ for which $b_i=b'_i$.
    \end{itemize}

    $\sqcp$ is \emph{$\sidqcp^{n,k}$ secure} if for any QPT adversary $\sA$ it holds that $$\E[\sfdqcp^{n,k}_\qcp(\sA,\secpar)] \le n + \frac{k}{2} + \negl\text{,}$$ $\sqcp$ is \emph{\sidqcp secure} if it is $\sidqcp^{n,k}$ secure for any $n,k=\poly$.
\end{definition}

\begin{lemma}\label{lem:fdtow}
    Let $\sqcp$ be a $\sidqcp^{n,k}$ secure scheme, then it is also $\swqcp^{n,k}$ secure.
\end{lemma}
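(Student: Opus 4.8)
The plan is to prove the contrapositive: from a QPT adversary $\sA=(\sP,\sF_1,\ldots,\sF_{n+k})$ witnessing a violation of $\swqcp^{n,k}$ security I build a QPT adversary $\sA'=(\sP,\sF'_1,\ldots,\sF'_{n+k})$ witnessing a violation of $\sidqcp^{n,k}$ security with the \emph{same} advantage. This is clean because the two thresholds coincide: since a $\bbf$ has output length $m=1$, the $\swqcp$ bound $n+k/2^m+\negl$ specializes to $n+k/2+\negl$, which is exactly the $\sidqcp$ bound. Thus any non-negligible gap above $n+k/2$ in one game yields a non-negligible gap in the other.

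The reduction leaves the pirate untouched: $\sP$ receives the same input $\rho=\rho_f^{\otimes n}$ and produces the same states $\sigma_1,\ldots,\sigma_{n+k}$ in both games, so the non-communicating structure is preserved. The work happens in the freeloaders. Recall that in the flip-detection game $\sF'_i$ has access to the oracle $\sO_{f,b_i}$ (which returns a fresh sample $(r,f(r)\oplus b_i)$ with $r\gets\zo^\ell$) and must output the hidden bit $b_i$. I let $\sF'_i$ make a single oracle call to obtain $(r,c)$ with $c=f(r)\oplus b_i$, run the weak-copy-protection freeloader $y\gets\sF_i(\sigma_i,r)$ to predict $f(r)$, and output $b'_i=c\oplus y$. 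Since everything is binary, $b'_i=b_i$ holds if and only if $y=f(r)$; that is, $\sF'_i$ recovers the flip bit \emph{exactly when} $\sF_i$ predicts the function value correctly.

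The heart of the argument is that the advantage transfers without loss. By linearity of expectation the score of each game is a sum of per-freeloader marginal success probabilities: $\E[\sidqcp^{n,k}(\sA',\secpar)]=\sum_i\Pr[b'_i=b_i]=\sum_i\Pr[\sF_i(\sigma_i,r)=f(r)]$ and $\E[\swqcp^{n,k}(\sA,\secpar)]=\sum_i\Pr[\sF_i(\sigma_i,x)=f(x)]$. In the $\swqcp$ game a single input $x$ is shared by all freeloaders, whereas in the reduction each $\sF'_i$ draws its own independent $r$ from its oracle. The subtle point I would verify carefully is that this discrepancy is immaterial: because the score is a sum of marginals, correlations across freeloaders never affect the expectation, and because the $\bbf$ distribution samples the input uniformly and independently of $f$ (\cref{rem:bbf_dist}), the marginal joint state of the triple $(f,x,\sigma_i)$ in the $\swqcp$ game is identical to that of $(f,r,\sigma_i)$ in the reduction. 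Hence the two sums agree term by term, giving $\E[\sidqcp^{n,k}(\sA',\secpar)]=\E[\swqcp^{n,k}(\sA,\secpar)]$ and completing the reduction. The remaining checks are routine: $\sF'_i$ uses one classical oracle query and is QPT whenever $\sF_i$ is, and we may assume without loss of generality that each $\sF_i$ outputs a single bit, since in the binary $\swqcp$ game a non-bit output never scores.
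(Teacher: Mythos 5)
Your proof is correct and takes essentially the same route as the paper: restrict each flip-detection freeloader to a single query to $\sO_{f,b_i}$ and use the XOR trick so that recovering $b_i$ is equivalent to evaluating $f$ correctly, with the thresholds coinciding because $m=1$ makes $n+k/2^m = n+k/2$. If anything, you are more careful than the paper's two-sentence argument, which asserts an \emph{equivalence} of the single-query game with \swqcpnk while glossing over the fact that in \swqcpnk all freeloaders share one input $x$ whereas each oracle in your reduction draws an independent $r$; your appeal to linearity of expectation and to the independence of the input from $f$ and the pirate's states is precisely what justifies that step.
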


\begin{proof}
    This follows by noting that if we modify the $\sidqcp^{n,k}$ such that each freeloader makes exactly one query to $\sO_{f,b}$, we obtain a notion which is equivalent to \swqcpnk. The only difference is that instead of just getting $x$, $\sF_i$ gets a pair of the form $(x,f(x)\oplus b_i)$ where $b_i$ is uniformly random. But that $b_i$ is uniformly random implies that $f(x)\oplus b_i$ is also uniformly random, so it could be omitted.
\end{proof}

The splitting attack elaborated in \cref{app:split} implies that the converse of \cref{lem:fdtow} is false, we prove this in \cref{lem:fdfail}.

\cref{thm:ext_fail} shows that this property carries over to uncloneable decryptors: there could exist uncloneable bit decryptors which are secure, but trying to extend them to arbitrary message lengths by encrypting bit by bit is not secure. Fortunately, strengthening the security of copy protection to resemble \slor security rather than \sind security alleviates this problem, as we shall see in \cref{thm:dcs_cca1}.

It is unclear whether strong copy protection implies flip detection or whether it is possible to generically transform a \swqcp secure scheme to a \sfdqcp secure scheme. We leave this as an open question.

\subsubsection{Oracle Instantiation of \textsf{FLIP-QCP} Secure Schemes}\label{sssec:fdqcp_oracle}

In \cref{thm:oracle_inst} we establish that \sfdqcp secure (resp. $\sfdqcp^{1,1}$) secure copy protection schemes exist for any unlearnable \bbf relative to a quantum (resp. classical) oracle.

The author of \cite{Aar09} presents a weakly secure copy protection scheme instantiated relative to a quantum oracle. This scheme is unique in the sense that it supports an arbitrary polynomial amount of copies. The authors of \cite{ALLZZ20} manage to replace this oracle with a classical oracle. However, the resulting construction is not known to support more than a single copy.

In \cref{app:flip_unlearn} we prove that the \cite{Aar09} and \cite{ALLZZ20} schemes are in fact \sfdqcp and $\sfdqcp^{1,1}$ secure respectively, as follows from the following observations:
\begin{itemize}
    \item both schemes satisfy flip detection security given that the protected function class exhibits a property we call \emph{flip unlearnability}, and
    \item flip unlearnability is actually equivalent to unlearnability.
\end{itemize}

In \cref{sec:const} we use $\sfdqcp^{n,k}$ secure copy-protection schemes to obtain $\sudqccaa^{n,k}$ secure uncloneable decryptors. Combined with the oracle instantiation this implies that $\sudqccaa$ secure uncloneable decryptors exist relative to a quantum oracle, and $\sudqccaa^{1,1}$ secure uncloneable decryptors exist relative to a classical oracle. In contrast, the best security achieved by previously known construction is $\sudqcpa^{1,1}$.

\subsection{Copy Protection with Random Input Oracles}\label{ssec:qcp-ria}

In the security proofs in \cref{sec:const} it is often comfortable to modify \cref{def:wqcp} and \cref{def:fdqcp} to allow the adversary to evaluate the copy protected function at random points. In this section, we argue that this modification does not imply a stronger notion of security.

Intuitively, this holds since the pirate can sample sufficiently many random strings and use her copy protected program to evaluate these points. We now formalize this intuition.

\begin{definition}[Random Input Oracle (\textsf(RIA))]\label{def:ria}
    For any function $f$ with domain $D$ let $\sR(f)$ be the oracle which takes no input and outputs $(r,f(r))$ where $r\gets D$. 
\end{definition}

\begin{definition}\label{def:qcp_ria}
    The \swqcpriank (resp. \sfdqcpriank) \emph{game} is defined as the \swqcpnk (resp. \sfdqcpnk) game with modification that each freeloader $\sF_i$ is given access to $\sR(f)$ (where $f$ is the function sampled by $\sC$).
\end{definition}

\begin{lemma}\label{thm:qcp-ria}
    If \sqcp is \swqcpnk (resp. \sfdqcpnk) secure then it is also \swqcpriank (resp. \sfdqcpriank) secure.
\end{lemma}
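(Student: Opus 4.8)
The plan is to prove both statements by a single direct reduction: from any QPT adversary $\sA = (\sP,\sF_1,\ldots,\sF_{n+k})$ for the \swqcpriank (resp. \sfdqcpriank) game I will construct a QPT adversary $\sA' = (\sP',\sF'_1,\ldots,\sF'_{n+k})$ for the plain \swqcpnk (resp. \sfdqcpnk) game whose expected score differs only negligibly. Since \swqcpnk (resp. \sfdqcpnk) security bounds $\E[\swqcpnk(\sA',\secpar)]$ by $n+k/2^m+\negl$ (resp. $n+k/2+\negl$), this immediately transfers the same bound to the RIA game and yields RIA security. The only extra power the RIA adversary enjoys over the plain adversary is oracle access to $\sR(f)$ (\cref{def:ria}), so the entire content of the reduction is to have $\sA'$ faithfully simulate $\sR(f)$ for each freeloader without any help from the challenger.

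The key observation is that $\sR(f)$ can be simulated using the copy-protected program itself, exactly as the informal intuition preceding the lemma suggests. Each freeloader is QPT and hence makes at most polynomially many queries to $\sR(f)$; let $q=\poly$ upper bound the total number of such queries over all $n+k$ freeloaders. Before splitting, $\sP'$ uses one of the $n$ copies of $\rho_f$ at its disposal to evaluate $f$ on $q$ freshly and independently sampled uniform inputs $r_1,\ldots,r_q$ from the domain of $f$, obtaining labeled examples $(r_j,f(r_j))$; by correctness of the scheme (\cref{def:qcp_synt}) each call $\qeval(\rho_f,r_j)$ returns the correct value $f(r_j)$. It then partitions these into sublists $L_1,\ldots,L_{n+k}$, one per freeloader, each consisting of fresh independent uniform examples, runs the original pirate $\sP$ on the (restored) copies to obtain $\sigma_1,\ldots,\sigma_{n+k}$, and attaches the classical advice $L_i$ to $\sigma_i$. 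Each $\sF'_i$ then runs $\sF_i$ verbatim — passing through the challenge point $x$ in the \swqcp case, and leaving the genuine oracle $\sO_{f,b_i}$ untouched in the \sfdqcp case — and answers the $j$-th query of $\sF_i$ to $\sR(f)$ with the $j$-th entry of $L_i$. Because the $r_j$ are fresh, uniform and independent and the labels are correct, the view of each $\sF_i$ is distributed exactly as in the RIA game, and the simulated responses are independent across freeloaders, matching the genuine independent oracles.

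The main obstacle, and the only place an error term can enter, is arguing that using a copy of $\rho_f$ to precompute the $q$ examples does not damage the copies subsequently fed to $\sP$. This is where correctness is needed in its strong form: since $\qeval(\rho_f,r)$ returns the deterministic value $f(r)$ with certainty, the readout of $f(r)$ has a near-deterministic outcome, so by a standard gentle-measurement / "almost as good as new" argument the post-evaluation state is negligibly close to $\rho_f$. Implementing the evaluation coherently — compute $f(r)$ into an ancilla, copy out the classical label, then uncompute — lets a single copy answer all $q=\poly$ evaluation queries while accumulating only negligible total disturbance (indeed zero, if $\qeval$ is perfectly correct), so the state handed to $\sP$ is within negligible trace distance of $\rho_f^{\otimes n}$.

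Combining these points, the output distribution of the constructed game against $\sA'$ is within negligible statistical distance of the output distribution of the RIA game against $\sA$, so $\E[\swqcpnk(\sA',\secpar)]$ and $\E[\swqcpriank(\sA,\secpar)]$ (resp. the \sfdqcp analogues) differ by at most $\negl$. Invoking \swqcpnk (resp. \sfdqcpnk) security then gives the claimed bound on the RIA game. The \sfdqcp case requires no separate argument: $\sP'$ never needs to know the bits $b_i$, since $\sR(f)$ is independent of them and is the only oracle $\sF'_i$ must simulate, the genuine $\sO_{f,b_i}$ being supplied by the plain-game challenger.
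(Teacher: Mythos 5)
Your proof is correct and follows essentially the same reduction as the paper: the pirate simulates the random-input oracle $\sR(f)$ by using the copy-protected program to evaluate $f$ at polynomially many uniformly sampled points and hands the resulting labeled pairs to the freeloaders, so that the RIA adversary's expected score transfers (up to $\negl$) to the plain game. You are in fact slightly more careful than the paper, which hands every freeloader one shared list (harmless only by linearity of expectation, since only each freeloader's marginal view matters) and silently assumes the evaluations do not disturb the $n$ copies given to $\sP$, whereas you partition the examples into disjoint per-freeloader lists and explicitly justify non-disturbance via perfect correctness and uncomputation.
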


\begin{proof}
    We prove the proposition for \swqcp, though the proof for \sfdqcp is identical.
    
    Let $\sA = (\sP,\sF_1,\ldots,\sF_{n+k})$ satisfy that 
    $$\E[\swqcpria^{n.k}_{\sqcp}(\sA,\secpar)]= \mu\text{,}$$
    we construct $\sA'=(\sP',\sF_1',\ldots,\sF_{n+k}')$ such that
    $$\E[\swqcp^{n,k}_{\sqcp}(\sA',\secpar)]= \mu\text{.}$$
        
    Since the freeloaders $\sF_i$ are polynomial, there is a polynomial bound $q$ on the accumulated number of queries they make to $\sR(f)$.
    
    The pirate $\sP'$ the pirate samples $r_1,\ldots,r_q$ uniformly at random and creates the list $L = ((r_i,\qcp.\qeval(\rho,r_i))_{i=1}^q$. They then simulate the pirate $\sP(\rho^{\otimes n})$ to obtain the states $\sigma_1,\ldots,\sigma_{n+k}$. Finally, she gives each freeloader $\sF'_i$ the state $L\otimes \sigma_i$.
        
    The freeloader $\sF_i'$ simulates $\sF(\sigma_i)$. Whenever $\sF_i$ queries $\sF(f)$, $\sF'_i$ responds with a previously unused pair from $L$. She resume the simulation until obtaining an output $b$ which she outputs herself.
        
    The view of $\sA$ is exactly the same in the $\swqcpria^{n,k}$ game and in the simulation above, so the output of both interactions is identically distributed. Since the output of $\sF_i'$ is simply the same as the output of $\sF_i$, it follows that $\swqcpria^{n,k}_{\cp_\bbf}(\sA,\lambda)$ and $\swqcp^{n,k}_{\cp_\bbf}(\sA',\lambda)$ distribute identically.
\end{proof}
\section{Syntax and Security of Uncloneable Decryptors}\label{sec:dcs}

An \emph{uncloneable decryptors encryption scheme} is a symmetric encryption scheme that allows the owner of the secret key to derive quantum states we call \emph{decryptors}. A decryptor could be used to decrypt messages but is unfeasible to clone, even given access to polynomially many decryptors derived from the same secret key.

The syntax of this primitive is a slight generalization of the notion of \emph{single decryptor schemes} introduced in \cite{GZ20} and further discussed in \cite{CLLZ21}.

\begin{definition}(Uncloneable Decryptors Scheme)\label{def:dcs} An \emph{uncloneable decryptors scheme} $\ud$ is comprised of the following five QPT procedures:
\begin{itemize}
    \item $\sk \leftarrow \keygen(\secparam) $,
    \item $\rho \leftarrow\decgen(\sk)$,
    \item $c \leftarrow \enc_\sk(m)$, 
    \item $m\gets\dec_\sk(c)$, and
    \item $m\leftarrow \qdec(\rho,c)$.
\end{itemize}
where $\dec$ is deterministic, $\enc$ and $\keygen$ implement classical functions.

$\dcs$ is \emph{correct} if for any $m$ it holds that

$$
    \PP\left[
        \begin{matrix}
            \sk\leftarrow\keygen(\secparam) \\
            \rho \leftarrow \decgen(\sk) \\
            c \gets \enc(\sk,m) \\
        \end{matrix}
        :
        \qdec(\rho,c) = \dec_\sk(c) = m
    \right]=  1\text{.}
$$
\end{definition}

Our definition slightly differs from \cite{GZ20}'s as it has an explicit procedure for producing decryptors. It is also different from \cite{CLLZ21}'s definition in two ways: it explicates the classical decryption circuit (which uses the secret key rather than the quantum decryptor) and assumes that the underlying encryption scheme is symmetric (whereas \cite{CLLZ21} assume the underlying encryption is asymmetric).

In \cref{ssec:dcs_sec} we introduce notions of security suitable for uncloneable decryptors, which generalize the notions of security introduced in previous works. By removing the $\decgen$ and $\qdec$ procedures from the syntax of an uncloneable decryption scheme, one obtains the \emph{underlying symmetric encryption scheme.} In \cref{ssec:dcs_sec} we formally introduce the underlying scheme and show that the security of the uncloneable decryptors scheme implies the security of the underlying encryption scheme. In \cref{ssec:dcs_uncond} we show that even the weakest form of security is unattainable against an unbounded adversary with access to arbitrary polynomially many decryptors.

\subsection{Security Notions}\label{ssec:dcs_sec}

Here we provide the notions of \sudqcpa, \sudqcca and \sudqccaa security for uncloneable decryptors. These are adaptations of the respective notions for symmetric encryption schemes (recall \cref{def:ind-qccax}).

These definitions combine the security notions of quantum copy protection (see \cref{def:wqcp}) and symmetric encryption (see \cref{def:ind-qccax}). We retain the form of the security game of quantum copy protection, where a pirate $\sP$ receives $n$ copies of a program from the challenger $\sC$ and creates $n+k$ quantum states. These states are then given to \emph{distinguishers} $\sD_1,\ldots,\sD_{n+k}$ (which replace the freeloaders $\sF_1,\ldots,\sF_{n+k})$). The challenger then plays against each distinguisher a game similar to the indistinguishability game for symmetric encryption schemes. Namely, $\sD_i$ needs to distinguish between ciphertexts of two plaintexts of her choosing. Like in symmetric encryption, we make our security notions progressively stronger by affording the adversary more forms of oracle access.

\begin{definition}[\sud security]\label{def:dcs-ind}
    Let $\dcs$ be an uncloneable decryptors scheme, and let $\sA=(\sP,\sD_1,\ldots,\sD_{n+k})$ be procedures, the $\sudnk_\dcs(\sA,\lambda)$ game is defined as follows:
    \begin{enumerate}
        \item $\sC$ generates $\sk\leftarrow\keygen(1^\secpar)$ and $n$ decryptors $\rho_1\leftarrow\decgen(\sk),\ldots,\rho_n\leftarrow \decgen(\sk)$ and samples $b_1,\ldots,b_{n+k}\gets\zo$,
        \item $\sP(\rho_1,\ldots,\rho_n)$ creates states\label{step:dcs_pirate} $\sigma_1,\ldots,\sigma_{n+k}$, and $n+k$ pairs of plaintexts $(m_0^i,m_1^i)$ with $|m_0^i|=|m_1^i|$,
        \item For $i=1,\ldots,n+k$, $\sC$ calculates $c_i \gets \enc_\sk(m^i_{b_i})$,
        \item $\sD_i(\sigma_i,b_i)$ outputs a bit $\beta_i$,
        \item the output of the game is the number of indices $i$ for which $b_i=\beta_i$.
    \end{enumerate}
    
    We say that $\dcs$ is \emph{\sudnk secure} if for any QPT $\sA=(\sP,\sD_1,\ldots,\sD_{n+k})$ it holds that
    $$\E \left[\sindnk_{\dcs}(\sA,\lambda) \right] \le n + k/2 + \negl\text{.}$$
    
    We say that $\dcs$ is \emph{\sud secure} if it is \sudnk secure for any $n,k\in \poly$.
\end{definition}

This notion is extended by augmenting the freeloaders with oracle access. For convenience, we marked the oracles added to \cref{def:dcs-indx} with a red underline. Removing all underlined expressions exactly recovers \cref{def:dcs-ind}.

\begin{definition}[\sudqx security]\label{def:dcs-indx}
    Let $\dcs$ be an uncloneable decryptors scheme, and let $\sA=(\sP,\sD_1,\ldots,\sD_{n+k})$ be procedures, for any two oracles $\sO_1,\sO_2$ define the $\sudqxnk_{\sO_1,\sO_2}(\sA,\secpar)$ game:
        \begin{enumerate}
        \item $\sC$ generates $\sk\leftarrow\keygen(1^\secpar)$ and $n$ decryptors $\rho_1\leftarrow\decgen(\sk),\ldots,\rho_n\leftarrow \decgen(\sk)$ and samples $b_1,\ldots,b_{n+k}\gets\zo$,
        \item $\sP^{\redul{\ket{\enc_\sk},\ket{\sO_1}}}(\rho_1,\ldots,\rho_n)$ creates states $\sigma_1,\ldots,\sigma_{n+k}$, and $n+k$ pairs of plaintexts $(m_0^i,m_1^i)$ with $|m_0^i|=|m_1^i|$,
        \item For $i=1,\ldots,n+k$, $\sC$ calculates $c_i \gets \enc_\sk(m^i_{b_i})$,\label{step:dcs_chal}
        \item $\sD_i^{\redul{\ket{\enc_\sk},\ket{\sO_2}}}(\sigma_i,c_i)$ outputs a bit $\beta_i$,\label{ent:dcs-cca1vs2}
        \item the output of the game is the number of indices $i$ for which $b_i=\beta_i$.
    \end{enumerate}
    
    By explicating the oracles $\sO_1,\sO_2$ we define the games which will define our security notions. Let $\bot$ designate a trivial oracle which always responds with $\bot$.
    \begin{itemize}
        \item $\sudqcpank = \sudqxnk_{\bot,\bot}$,
        \item $\sudqccank = \sudqxnk_{\dec_\sk,\bot}$,
        \item $\sudqccaank = \sudqxnk_{\dec_\sk,\dec_\sk\setminus c_i}$ where $c_i$ is the output of the challenge query given to $\sD_i$ in step~\ref{step:dcs_chal}.
    \end{itemize}
    
    We say that $\dcs$ is \emph{\sudqcpank secure} if for any QPT $\sA=(\sP,\sD_1,\ldots,\sD_{n+k})$ it holds that
    $$\E \left[\sudqcpank_{\dcs}(\sA,\lambda) \right] \le n + k/2 + \negl\text{.}$$ We say that $\dcs$ is \emph{\sudqcpa secure} if it is \sudqcpank secure for any $n,k\in \poly$.
    
    \emph{\sudqccank}, \emph{\sudqcca}, \emph{\sudqccaank} and \emph{\sudqccaa security} are defined similarly.
\end{definition}

Our constructions below are formed by first creating a scheme that only supports encrypting a single bit (namely uncloneable \emph{bit} decryptors) and then extending them to messages of unrestricted (polynomial) length. The security game takes on a simpler form in the bit encryption setting, as there are only two possible ciphers.

\begin{definition}[\sobudqx security]\label{def:dcs-obindx}

    The \sobudqxnk games are defined almost exactly like the \sudqxnk of  \cref{def:dcs-ind} and \cref{def:dcs-indx}, but with the following modifications:
    \begin{itemize}
        \item in step~\ref{step:dcs_pirate}, $\sP$ does not create any plaintexts, and
        \item in step~\ref{step:dcs_chal}, $c_i \gets \enc_\sk(b_i)$.
    \end{itemize}

    We say that $\dcs$ is \emph{\sobudqx secure} if for any QPT $\sA=(\sP,\sD_1,\ldots,\sD_{n+k})$ it holds that
    $$\E \left[\sobudqxnk_{\dcs}(\sA,\lambda) \right] \le n + k/2 + \negl\text{.}$$ We say that $\dcs$ is \emph{\sobudqx secure} if it is \sobudqxnk secure for any $n,k\in \poly$.
\end{definition}

\subsection{The Underlying Symmetric Encryption Scheme}\label{ssec:dcs_assoc}

An uncloneable decryptors scheme is a symmetric encryption scheme with added functionality. When removing the extra functions, we remain with a run-of-the-mill encryption scheme which we call the \emph{underlying scheme}.

\begin{definition}[Underlying Scheme]\label{def:assoc-se} Let $\dcs$ be an uncloneable decryptors encryption scheme, the \emph{underlying encryption scheme} is $$\se_\dcs = (\dcs.\keygen,\dcs.\enc,\dcs.\dec)\text{.}$$ 
\end{definition}

Notably, $\sudqx^{1,k}$ security for any $k$ implies that the underlying scheme admits $\sindqx$ security.

\begin{proposition}\label{thm:DCS-IND}
    If $\dcs$ is $\sud^{1,k}$ (resp. $\sudqcpa^{1,k}$, $\sudqcca^{1,k}$ and $\sudqccaa^{1,k}$) secure (see \cref{def:dcs-ind,def:dcs-indx}) for any $k\ge 1$ then $\se_\dcs$ is $\sind$ (resp. \sindqcpa, $\sindqcca$ and $\sindqccaa$) secure (see \cref{def:ind-qccax}).
\end{proposition}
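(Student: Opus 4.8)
The plan is to prove the contrapositive via a reduction: from an adversary $\sB$ that breaks $\sindqx$ security of the underlying scheme $\se_\dcs$, I will construct an adversary $\sA = (\sP, \sD_1)$ that breaks $\sudqx^{1,k}$ security of $\dcs$ (it suffices to use $k=1$, so $\sA$ involves one pirate and one distinguisher, but the construction works for any $k \ge 1$). The key observation is that the $\sudqx^{1,k}$ game with $n=1$ gives the pirate one genuine decryptor $\rho \gets \decgen(\sk)$ and the same oracle access $\ket{\enc_\sk}, \ket{\sO_1}$ (and the distinguishers get $\ket{\enc_\sk}, \ket{\sO_2}$) that the $\sindqx$ game grants its adversary via $\sA_1, \sA_2$ — the oracle assignments match exactly across the four cases (trivial oracle for \sud/\sindqcpa, $\dec_\sk$ before challenge for qCCA1, and $\dec_\sk \setminus c$ after for qCCA2). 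So the reduction essentially runs $\sB$ while forwarding its oracle queries, ignoring the extra decryptor and the extra $k-1$ distinguishers.

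The main subtlety, and what I expect to be the one genuine obstacle, is the \emph{scoring mismatch} between the two games. In the $\sindqx$ game the adversary succeeds by guessing $b$ with probability $> 1/2 + \negl$, whereas in the $\sudqx^{1,k}$ game the expected number of correct distinguishers must stay below $1 + k/2 + \negl$. To bridge this, I would have $\sP$ simulate $\sB_1$ to produce the challenge message pair $(m_0, m_1)$ and auxiliary state $\sigma$, output $\sigma_1 = \sigma$ (discarding the given decryptor) as the first distinguisher's state together with that message pair, and set the remaining $k-1$ distinguishers $\sD_2, \ldots, \sD_{k+1}$ to each trivially achieve success probability exactly $1/2$ — e.g.\ by outputting a uniformly random guess, ignoring their inputs. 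Then $\sD_1$ runs $\sB_2$ on $(\sigma, c_1)$ and outputs its guess $\beta_1$. Here I would need to check the bookkeeping carefully: note that in \cref{def:dcs-indx} the distinguisher $\sD_i$ receives the cipher $c_i$ (not the bit $b_i$), matching exactly what $\sB_2$ expects as the challenge cipher $c = \enc_\sk(m_b)$.

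\textbf{The arithmetic of the reduction} then runs as follows. The trivial distinguishers contribute an expected $(k-1)/2$ correct guesses, and $\sD_1$ contributes $\PP[\beta_1 = b_1]$, which by construction equals $\sB$'s winning probability $p = \PP[\sindqx_{\se_\dcs}(\sB,\lambda)=1]$. Hence the expected score of $\sA$ is
$$\E[\sudqx^{1,k}_\dcs(\sA,\lambda)] = \frac{k-1}{2} + p\text{.}$$
If $\sB$ breaks $\sindqx$ security, then $p \ge 1/2 + \nu$ for some non-negligible $\nu$, giving expected score $\ge \frac{k-1}{2} + \frac12 + \nu = \frac{k}{2} + \nu$. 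But this is strictly less than the threshold $n + k/2 = 1 + k/2$, so I must be more careful: I cannot afford to \emph{throw away} the decryptor. The correct move is to also let $\sD_1$ exploit the genuine decryptor $\rho$ handed to $\sP$: since $\rho$ is a valid decryptor, a distinguisher holding $\rho$ can decrypt $c_1$ via $\qdec(\rho, c_1)$ and recover $b_1$ with certainty. So I restructure $\sA$ to let \emph{one} distinguisher use $\rho$ to answer with probability $1$ (contributing $1$ to the count, the ``$n=1$'' term), while a \emph{separate} distinguisher runs $\sB$ to gain the extra advantage $\nu$ above $1/2$; the remaining $k-2$ distinguishers guess randomly. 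This yields expected score $1 + (1/2 + \nu) + (k-2)/2 = 1 + k/2 + \nu$, exceeding the threshold by the non-negligible $\nu$ and thereby contradicting $\sudqx^{1,k}$ security. The delicate point requiring care is that this regrouping needs $k \ge 2$; for $k = 1$ one instead observes that the single distinguisher can simultaneously hold $\rho$ and run $\sB$, outputting $\sB$'s guess while the guaranteed-correct decryptor contribution is absorbed by choosing $n+k$ appropriately — so I would state the reduction for general $k$ and verify the $k=1$ edge case separately, which is where the proof is most likely to need a careful reread.
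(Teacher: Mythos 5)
Your reduction is essentially the paper's own proof. The paper (stated for $k=1$, noting it generalizes) has the pirate simulate $\sA_1$ to obtain $(m_0,m_1,\sigma)$, hands the genuine decryptor $\rho$ to $\sD_1$ with the message pair $(0,1)$ so that $\qdec(\rho,c_1)$ wins that challenge with certainty, and hands $\sigma$ to $\sD_2$, which runs $\sA_2$; the score bound $1 + p \le 1 + 1/2 + \negl$ then forces $p \le 1/2 + \negl$. Your second, corrected allocation (one distinguisher decrypts with $\rho$, one runs $\sB$, the rest guess randomly) is exactly this. The only flaw is an off-by-one in your bookkeeping: with $n=1$ the game has $n+k = k+1$ distinguishers, so after the decryptor-holder and the $\sB$-runner there remain $k-1$ random guessers, not $k-2$, and the expected score is $1 + (1/2+\nu) + (k-1)/2 = 1 + k/2 + \nu$ --- the value you claimed, although your displayed expression with $(k-2)/2$ does not actually equal it. Consequently your closing worry is moot: $k=1$ is not an edge case --- it is precisely the two-distinguisher configuration of the paper's proof --- and no ``simultaneous'' trick is needed there. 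Indeed, that trick as you describe it could not work: a single distinguisher outputs one bit and contributes at most $1$ to the count, so it cannot both absorb the certainty term coming from the decryptor and separately carry the advantage $\nu$.
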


\begin{proof}
    We prove the result for $k=1$; our argument generalizes straightforwardly to any $k$.

    The proof idea is straightforward: we need to win two distinguishing games, we win one with certainty using the decryptor, and we use the \sindqx adversary for the second one. If the \sindqx wins with probability $\frac{1}{2}+\epsilon$ it follows that the expected number of correct distinguishers is $1+\frac{1}{2}+\epsilon$, and it follows from the \sudqx security that $\epsilon = \negl$.
    
    Let $\sA = (\sA_1,\sA_2)$ be a QPT adversary for the $\sindqx_{\se_\dcs}$ game, we describe an adversary $(\sP,\sD_1,\sD_2)$ for the $\sudqx^{1,1}_{\dcs}$ game.
    
    We note that for any security notion \sindqx, the corresponding notion \sudqx satisfies that $\sP$ has exactly the same oracle access as $\sA_1$ and $\sD_2$ has exactly the same oracle access as $\sA_2$.
    
    After obtaining the decryptor $\rho$ from $\sC$, $\sP$ simulates $\sA_1$ her own oracle access to answer oracle calls, until obtaining the plaintexts $m_0,m_1$ and auxiliary data $\sigma$. She gives $\rho$ to $\sD_1$ and $\sigma$ to $\sD_2$. She then gives the adversary the following pairs of messages $(m_0^1=0,m_1^1=1),(m_0^2=m_0,m_1^2=m_1)$ (that is, $\sD_1$ needs to distinguish the ciphers of $0$ and $1$ while $\sD_2$ needs to distinguish the ciphers of $m_0$ and $m_1$).
    
    The distinguisher $\sD_1$ outputs $\ud.\qdec(\rho,c_1)$. The distinguisher $\sD_2$ simulates $\sA_2(\sigma)$, using her own oracle access to answer any queries, and outputs the result.
    
    Let $\beta_1,\beta_2$ be the bits sampled by $\sC$ in the challenge query, and $b_1,b_2$ be the outputs of $\sD_1,\sD_2$ respectively, then $$ \prob{b_1 = \beta_1} + \prob{b_2 = \beta_2} = \E \left[\sudqx^{1,1}_{\dcs}(\sA',\secpar) \right] \le 1 + 1/2 + \negl$$ where the inequality is due to the $\sudqx^{1,1}$ security of $\dcs$.
    
    From the correctness of $\dcs$ we have that $\prob{b_1=\beta_1} = 1$. From the construction of $\sD_2$, we have that $\prob{b_2=\beta_2} = \prob{\sindqx_{\se_\dcs}(\sA,\lambda) = 1}$. Plugging these probabilities into the inequality above we obtain (after some rearrangement) that
    
    $$\prob{\sindqx_{\se_\dcs}(\sA,\lambda) = 1} \le 1/2 + \negl$$
    as needed.
\end{proof}

\begin{remark}
    The same argument can be used almost verbatim to show for any $k\ge 1$ that $\sobudqx^{1,k}$ security implies the corresponding $\sindqx$ security for \emph{bit encryption} schemes.
\end{remark}

\subsection{Impossibility of Unconditional Security}\label{ssec:dcs_uncond}

In this section, we discuss the impossibility of unconditional security against an adversary that can request polynomially many decryptors.

The authors of \cite{GZ20} consider a scenario where the adversary is given access to polynomially many \emph{ciphers} of random plaintexts (without even being given the plaintexts themselves). They prove that unconditional security is not obtainable against such adversaries. This implies, in particular, that no scheme is unconditionally $\sudqcpa^{1,k}$ secure for any $k$.

In this section, we prove that if we allow an arbitrary polynomial number of decryptors, $\sud$ security is also impossible. That is, for large enough polynomial $n$ and any polynomial $k$ there exists an unbounded $\sindnk$ adversary with a non-negligible advantage.

Our result is incomparable with the impossibility result of \cite{GZ20}: the adversary we consider is stronger in the sense that she has access to many decryptors but is weaker in the sense that she does not have access to any ciphers.

The proof is a straightforward application of a technique called \emph{shadow tomography}, first considered in \cite{Aar18}. Consider a family $\sE$ of two-outcome measurements acting on $D$ dimensional states. Let $\varepsilon>0$ be some error tolerance. Say you have access to an unrestricted number of copies of some state $\rho$. The \emph{shadow tomography task} is to compute for each $E\in \sE$ an estimation $s_E$ such that $$\comma {\forall E\in \sE}{|s_E - \tr(E\rho)|} \le \varepsilon{.}$$ The following theorem bounds the number of copies of $\rho$ required to achieve this task.

\begin{theorem}[\cite{Aar18}, Theorem~2]\label{thm:shadow}
    The shadow tomography task could be solved with success probability $1-\delta$ using $$\tilde{O}\left(\log\left(\frac{1}{\delta}\right)\cdot \log ^4 |\sE| \cdot \log D \cdot \varepsilon^{-4}\right)$$ copies of $\rho$.
\end{theorem}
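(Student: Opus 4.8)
The plan is to realise shadow tomography as an iterative, online-learning procedure. I would maintain a hypothesis density matrix $\omega_t$ on the $D$-dimensional space, initialised to the maximally mixed state $\omega_0 = I/D$, and repeatedly test whether $\omega_t$ already explains all of the measurements before refining it. Concretely, at round $t$ I would ask whether there exists some $E\in\sE$ with $|\tr(E\omega_t) - \tr(E\rho)| > \varepsilon/2$. If no such $E$ exists, then the numbers $\tr(E\omega_t)$ — which I can compute classically, since $\omega_t$ is a known matrix — are valid $\varepsilon$-estimates for every $E$, and I halt; if such an $E$ is found, I feed $E$ together with an estimate of $\tr(E\rho)$ into a matrix multiplicative weights update to produce the refined hypothesis $\omega_{t+1}$.

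First I would bound the number of rounds by a potential argument on the quantum relative entropy $S(\rho\|\omega_t) = \tr(\rho\log\rho) - \tr(\rho\log\omega_t)$. The initial potential is at most $\log D$, and the standard regret analysis of matrix multiplicative weights shows that each update triggered by a genuinely violated constraint (discrepancy exceeding $\varepsilon/2$) decreases the potential by $\Omega(\varepsilon^2)$. Since the potential is nonnegative, the procedure terminates after $T = O(\varepsilon^{-2}\log D)$ rounds, and — crucially — this bound is completely independent of the number of measurements $|\sE|$.

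The technical heart, and the step I expect to be the main obstacle, is implementing the per-round search \emph{gently}, so that a single reservoir of copies of $\rho$ can be reused across all $T$ rounds. Naively estimating $\tr(E\rho)$ for every $E$ in every round would consume a number of copies blowing up with $T$ and would destroy the state. Instead I would use a gentle threshold-search / quantum OR subroutine: combining the gentle measurement lemma with amplitude-amplification-style testing, one can decide whether $\max_{E}|\tr(E\omega_t)-\tr(E\rho)| > \varepsilon/2$ and, if so, locate a witnessing $E$, while disturbing $\rho^{\otimes N}$ by only $O(\varepsilon)$ in trace distance per round. Because there are only $T$ rounds, the accumulated disturbance stays $o(1)$, so the surviving copies remain statistically faithful to fresh copies of $\rho$ throughout the run.

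Finally I would carry out the copy-complexity bookkeeping. Each gentle threshold search over $|\sE|$ measurements, run to the confidence needed so that all $T$ rounds jointly succeed with probability $1-\delta$, costs $\tilde{O}(\varepsilon^{-2}\cdot\mathrm{polylog}\,|\sE|)$ copies; the union bound over the $T$ rounds and over the $|\sE|$ comparisons, together with the $\log(1/\delta)$ amplification, produces the polylogarithmic factor $\log^4|\sE|$ and the $\log(1/\delta)$ factor. Multiplying the per-round cost by $T = O(\varepsilon^{-2}\log D)$ yields the claimed $\tilde{O}\!\left(\log(1/\delta)\cdot\log^4|\sE|\cdot\log D\cdot\varepsilon^{-4}\right)$ total. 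The delicate points to verify are that the multiplicative-weights potential genuinely drops by $\Omega(\varepsilon^2)$ per update even when the targets $\tr(E\rho)$ are only estimated rather than known exactly, and that the gentleness guarantee of the search composes across rounds without the trace-distance errors compounding beyond $o(1)$.
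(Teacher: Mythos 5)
The paper never proves this statement at all --- it is imported verbatim as Theorem~2 of \cite{Aar18} --- so the only meaningful comparison is against Aaronson's own proof. Your architecture (matrix multiplicative weights over hypothesis states $\omega_t$, a relative-entropy potential bounded by $\log D$ that drops by $\Omega(\varepsilon^2)$ per violated constraint, hence $T=O(\varepsilon^{-2}\log D)$ update rounds, each round driven by a gentle threshold/OR search over $\sE$) is a genuinely different route: Aaronson's original argument combines his postselection-based learning procedure (from the earlier work on quantum advice) with the gentle search procedure built on the Harrow--Lin--Montanaro quantum OR bound, and it is the later improvements (notably B\u{a}descu--O'Donnell) that recast shadow tomography exactly as online learning plus quantum threshold search, the way you propose. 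Your route is sound, conceptually cleaner, and in fact yields the stronger bound $\tilde{O}\left(\log^2 |\sE| \cdot \log D \cdot \varepsilon^{-4}\right)$, which implies the stated one.

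One step as written does not survive scrutiny: the claim that a single reservoir of copies can be reused across all $T$ rounds because a per-round disturbance of $O(\varepsilon)$ accumulates to $o(1)$. With $T=O(\varepsilon^{-2}\log D)$ rounds, the triangle inequality only bounds the accumulated trace distance by $O(\varepsilon \cdot T)=O(\varepsilon^{-1}\log D)$, which is nowhere near $o(1)$; making reuse work would require per-round damage $o(1/T)$, which the gentle-measurement lemma does not provide at the copy cost you budget. The fix is organizational rather than conceptual, and your own bookkeeping already reflects it: give each round its own fresh batch of $\tilde{O}(\varepsilon^{-2}\cdot\mathrm{polylog}|\sE|)$ copies for the threshold search --- gentleness is then needed only \emph{inside} a single search, to allow the recursive OR/binary-search over $\sE$ on one batch --- and multiply the per-round cost by the number of rounds. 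With that correction, plus the routine check that the multiplicative-weights potential still drops by $\Omega(\varepsilon^2)$ when $\tr(E\rho)$ is known only to additive error, say, $\varepsilon/8$, your proof goes through and establishes the theorem.
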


\cref{thm:shadow} is used in \cite[Theorem~7]{Aar18} to prove the impossibility of unconditionally secure quantum money. Our proof is an adaptation of their argument to uncloneable decryptors.

\begin{theorem}\label{thm:impos}
    Let $\dcs$ be an uncloneable decryptors scheme, there exists $n=\poly$ such that for any $k$ there exists a (computationally unbounded) adversary $\sA$ such that $$\mathbb{E}\left[\sudnk_\dcs (\sA,\lambda) \right] \ge n + k - \negl\text{.}$$
\end{theorem}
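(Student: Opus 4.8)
The plan is to adapt the shadow-tomography argument that \cite[Theorem~7]{Aar18} uses to rule out unconditionally secure quantum money. The key idea is that although a single decryptor cannot be cloned, an unbounded pirate holding $n$ copies of it can extract, via shadow tomography, a purely \emph{classical} summary of its decryption behaviour; classical data can be copied freely, so this summary can be handed to all $n+k$ distinguishers. First I would reduce distinguishing to decryption: since the distinguisher $\sD_i$ receives the honest cipher $c_i = \enc_\sk(m^i_{b_i})$ and already knows the pair $(m^i_0,m^i_1)$ chosen by $\sP$, it suffices for $\sD_i$ to recover $\dec_\sk(c_i)$ and compare. For concreteness I would let $\sP$ fix every plaintext pair to $(0,1)$, so all ciphers have a single fixed polynomial length.

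To build the classical summary, I would fix the family of two-outcome measurements $\sE = \{E_{c,j}\}$, where $c$ ranges over \emph{all} strings of the ciphertext length and $j$ over the plaintext bit positions, and $E_{c,j}$ is the measurement "run $\qdec(\cdot,c)$ and report bit $j$ of the output", so that $\tr(E_{c,j}\,\rho)$ equals the probability that bit $j$ of $\qdec(\rho,c)$ is $1$. Taking $\rho$ to be the (mixed) state output by $\decgen(\sk)$, correctness of \dcs forces $\tr(E_{c,j}\,\rho)\in\{0,1\}$ for every honest cipher $c$, equal to the corresponding bit of $\dec_\sk(c)$. I would then invoke \cref{thm:shadow} with error tolerance $\varepsilon = 1/4$ and failure probability $\delta = 2^{-\secpar}$: since $|\sE| \le 2^{\poly}$ and the decryptor lives in a Hilbert space of dimension $D \le 2^{\poly}$, the required number of copies is $n = \tilde{O}\!\left(\log(1/\delta)\cdot \log^4|\sE|\cdot \log D\cdot \varepsilon^{-4}\right) = \poly$, crucially independent of $k$. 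Rounding each estimate $s_{c,j}$ to the nearest bit recovers $\dec_\sk(c)$ exactly for every honest cipher, with probability $1-\delta$.

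Putting it together, $\sP$ runs shadow tomography on its $n$ copies, encodes the rounded estimates as a (possibly exponentially long, but permissible for an unbounded adversary) classical decryption table $T$, and sets $\sigma_1 = \cdots = \sigma_{n+k} = T$ by free classical copying. Each $\sD_i$ looks up $c_i$ in $T$, obtains $\dec_\sk(c_i) = m^i_{b_i}$, and outputs the matching bit, which equals $b_i$. Conditioned on the success of shadow tomography all $n+k$ distinguishers are correct, so $\E[\sudnk_\dcs(\sA,\secpar)] \ge (1-\delta)(n+k) \ge n+k-\negl$, as required. The one genuinely delicate point---and the reason the argument is not completely trivial---is that $\sP$ must commit to $\sigma_1,\dots,\sigma_{n+k}$ \emph{before} the challenge ciphers $c_i$ are sampled; I handle this by letting $\sE$ (and hence $T$) range over all possible ciphertexts at once. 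This is sound precisely because the copy cost of shadow tomography scales only with $\log|\sE|$ rather than $|\sE|$, so covering the exponentially many potential ciphers still costs only polynomially many copies. I would also verify the minor point that correctness, stated for a freshly sampled $\rho\gets\decgen(\sk)$, transfers to the averaged state on which shadow tomography operates, which follows from linearity of $\rho\mapsto\tr(E_{c,j}\,\rho)$.
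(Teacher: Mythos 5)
Your proposal is correct and follows essentially the same route as the paper's proof: both define a family of two-outcome measurements $E_c$ indexed by \emph{all} possible ciphertext strings via $\qdec(\cdot,c)$, apply \cref{thm:shadow} with constant error tolerance and exponentially small failure probability to get $n=\poly$ (independent of $k$), have the pirate broadcast the resulting classical estimates to all $n+k$ distinguishers, and fix the plaintexts to single bits so each distinguisher decodes its challenge cipher by thresholding the relevant estimate. The points you flag as delicate (committing to the $\sigma_i$ before the challenge ciphers are sampled, and correctness holding for the possibly mixed $\rho$) are handled implicitly in the paper in exactly the way you describe.
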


\begin{proof}
    Let $\ell$ be the length of a ciphertext for a plaintext of length $1$. For each string $c\in \{0,1\}^\ell$ let $E_c$ be a two outcome measurement which, on input $\rho$, measures $\qdec(\rho,c)$ in the computational basis and accepts if and only if the result is $0$. Let $\sE = \{E_c\}$.
    
    Let $D$ be the dimension of the decryptors $\rho$ produced by $\decgen(\sk)$. Note that since all procedures of $\dcs$ are QPT, it follows that $\rho$ is a state on $\poly$ many qubits. That is, $D=2^{\poly}$. It also follows that $|c|=\poly$ whereby $|E|$ is exponential in $\lambda$.
    
    It follows from \cref{thm:shadow} that there exists $n = \poly$ such that given $\rho^{\otimes n}$ it is possible to calculate estimates $s_{E_c}$ such that with probability $1- 2^{-\secpar} = 1 - \negl$, $$\comma{\forall c\in \{0,1\}^{p(\secpar + |m_0|)}}{|s_{E_c} - \prob{\qdec(\rho,c)=m_0}| \le \frac{1}{4}}\text{.}$$
    
    The pirate $\sP$ performs shadow tomography on the set $\sE$ of circuits defined above, using the state $\rho^{\otimes n}$ obtained from the challenger to obtain estimations $s_{E_c}$, which they transmit to $\sD_1,\ldots,\sD_{n+k}$. For every $i$ they transmit to $\sC$ the pair $(m_0,m_1)$.

    The distinguishers $\sD_i$ return $0$ if and only if $s_{E_{c_i}} > \frac{1}{2}$.

    The correctness of $\dcs$ implies that if $\rho \gets \decgen(\sk)$ where $\sk\gets \keygen(\secparam)$ then $c\gets \enc_\sk(m_0)$ implies that $\prob{E(\rho) = 0} = 1$, whereby $S_{E_c} > \frac{3}{4}$ with probability $1-\negl$. Similarly that if $c\gets \enc_\sk(m_1)$ then $\prob{E(\rho) = 1} = 1$ whereby $S_{E_c} < \frac{1}{4}$ with probability $1-\negl$.
    
    It follows that $$\mathbb{E}\left[\sudnk_\dcs (\sA,\lambda) \right] = (1-\negl)(n+k) = n+k-\negl$$ as needed.
\end{proof}

\begin{remark}
    Note that the proof above uses the plaintext $m_b=b$. This implies that \cref{thm:impos} holds also for \sobud security.
\end{remark}

\subsection{Extendability}\label{ssec:def_extend}

In order to construct uncloneable decryptors, we first construct uncloneable bit decryptors and then extend them. One way to do so is by the following "bit-by-bit" transformation.

\begin{definition}[Extended Scheme]\label{def:extend}
    Let \ud be an uncloneable decryptors scheme which supports messages of length $1$, define the \emph{extension} of \ud to be the following scheme \udext:
    \begin{itemize}
        \item $\udext.\keygen \equiv \ud.\keygen$,
        \item $\udext.\decgen \equiv \ud.\decgen$,
        \item $\udext.\enc_\sk(m)$ outputs $(c_1,\ldots,c_{|m|})$ where $c_i\gets \ud.\enc_\sk(m_i)$,
        \item $\udext.\dec_\sk((c_1,\ldots,c_\ell)$ outputs $m_1\ldots m_\ell$ where $m_i\gets \ud.\dec_\sk(c_i)$,
        \item $\udext.\qdec(\rho,(c_1,\ldots,c_\ell))$ outputs $m_1\ldots m_\ell$ where $m_i\gets \ud.\qdec(\rho,c_i)$.
    \end{itemize}
\end{definition}

\begin{definition}[\sudqxnk extendability]
    Let \ud be an uncloneable decryptor encryption scheme, we say that \ud is \emph{\sudqxnk extendable} if \udext is \sudqxnk secure (see \cref{def:dcs-ind} and \cref{def:dcs-indx}).
    
    We say that \ud is \emph{\sudqx extendable} if it is \emph{\sudqxnk extendable} for any $n,k=\poly$.
\end{definition}

It is trivial to check that \sudqxnk extendability implies \sobudqxnk security. Unfortunately, the converse is not generally true. Indeed, we will see in \cref{con:dcs_cca2_res} that given a \swqcp secure copy protection scheme for any \bbf, one can construct an uncloneable decryptors scheme which is \sobudqccaa secure but which is not even $\sud^{1,1}$ extendable. In other words, the transformation described by \cref{def:extend} does not afford a generic method to extend a length restricted scheme to an unrestricted scheme. Nevertheless, the notion of extendability will be helpful in the following constructions.
\section{Constructions}\label{sec:const}

Having established the relevant definitions and security notions in \cref{sec:dcs} we now turn to present several constructions and transformations of construction that strengthen their security.

\subsection{\textsf{UD1-qCPA} Security}\label{ssec:con_ud1_qcpa}

We first explain how to obtain \sobudqcpank security from a \swqcpnk secure copy protection scheme for any \bbf.

This construction is inspired by the standard construction of symmetric bit encryption from pseudo-random functions (see e.g.  \cite[Construction~5.3.9]{Gol04}). In this construction, the key to the $\prf$ is used as a key to an encryption scheme, and a bit $b$ is encrypted by sampling a uniformly random string $r$ and outputting the pair $(r,\prf_\sk(r)\oplus b)$. We follow a similar approach, but replace the $\prf$ with a \swqcp secure copy protection for a \bbf.

\begin{construction}[\sobudqcca Secure Scheme from \swqcp secure \bbf]\label{con:dcs_cca1_res} Let \bbf be a balanced binary function (see \cref{def:bbf}) with input length $\ell$. Let $\cp_\bbf$ be a copy protection scheme for \bbf. Define the scheme $\dcsobcca$:
\begin{itemize}
    \item $\dcsobcca.\keygen \equiv \bbf.\samp$.
    \item $\dcsobcca.\chipgen \equiv \cp.\ptect$.
    \item $\dcsobcca.\enc_\sk(b) \to (r,b\oplus \bbf.\eval_\sk(r))$ where  $r\gets\zo^{\ell}$.
    \item $\dcsobcca.\dec_\sk((r,\hat{b}))\to \hat{b}\oplus \bbf.\eval_\sk(r)$.
    \item $\dcsobcca.\qdec(\rho,(r,\hat{b})) \to \hat{b}\oplus \cp_\bbf.\qeval(\rho,r)$.
\end{itemize}
\end{construction}

The correctness of \cref{con:dcs_cca1_res} follows from the correctness of $\bbf$ and $\cp_\bbf$.

\begin{remark}
    Note that this construction is manifestly not \sudqccaa secure: given the challenge cipher $(r, m\oplus f_k(r))$, a distinguisher could make a decryption query on $(r, 1\oplus m\oplus f_k(r))$ to obtain $1\oplus m$.
\end{remark}

\begin{proposition}\label{thm:con_1bit_dcs_cca1}
    If $\qcp_\bbf$ is \swqcpnk secure (see \cref{def:wqcp}) then the scheme \dcsobcca (see \cref{con:dcs_cca1_res}) is \sobudqcpank secure (see \cref{def:dcs-obindx}).
\end{proposition}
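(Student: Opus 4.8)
The plan is to reduce the \sobudqcpank security of \dcsobcca directly to the \swqcpnk security of $\cp_\bbf$. Suppose $\sA = (\sP, \sD_1, \ldots, \sD_{n+k})$ is a QPT adversary for the $\sobudqcpank$ game achieving expected score $n + k/2 + \delta$. I would construct an adversary $\sA' = (\sP', \sF_1, \ldots, \sF_{n+k})$ for the $\swqcpnk$ game with comparable advantage. The key observation is that in \cref{con:dcs_cca1_res} a decryptor is \emph{exactly} a copy-protected program $\rho_f \gets \cp.\ptect(f)$, and the challenge cipher handed to $\sD_i$ is $c_i = (r_i, b_i \oplus \bbf.\eval_\sk(r_i))$ with $r_i \gets \zo^\ell$. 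Since $b_i$ is the bit the distinguisher must guess, recovering $b_i$ from $c_i$ is precisely the task of computing $\bbf.\eval_\sk(r_i)$ at the random point $r_i$ and XOR-ing it with the second component. Thus a distinguisher that guesses $b_i$ correctly is, modulo the known XOR, a freeloader that evaluates $f$ correctly at $r_i$.

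Concretely, $\sP'$ receives $\rho = \rho_f^{\otimes n}$ and simply runs $\sP(\rho_1,\ldots,\rho_n)$ — with the \swqcp challenger's copies playing the role of the $n$ decryptors — to obtain the states $\sigma_1, \ldots, \sigma_{n+k}$, which it forwards to the freeloaders. (In the \sudqcpa setting both oracles $\sO_1, \sO_2$ are trivial, so the only oracle $\sP$ expects is $\ket{\enc_\sk}$; this is where I invoke the random-input-oracle lemma, see below.) Each freeloader $\sF_i$, on receiving $\sigma_i$ together with the challenge input $x = r_i$ from the \swqcp challenger, must produce $y_i$ with $f(r_i) = y_i$. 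It does so by internally simulating $\sD_i$: it samples its own uniform bit $b_i' \gets \zo$, forms the ciphertext $c_i = (r_i, b_i')$, runs $\sD_i(\sigma_i, c_i)$ to get an output $\beta_i$, and outputs $y_i = b_i' \oplus \beta_i$. The point is that for the \emph{true} underlying value $v = f(r_i)$, the pair $(r_i, b_i')$ is a valid encryption of the bit $b_i' \oplus v$; so $\sD_i$ sees a genuine ciphertext of a uniformly random plaintext bit $b_i \defeq b_i' \oplus v$, exactly as in the real \sobudqcpa game, and its guess $\beta_i$ of that plaintext bit satisfies $y_i = b_i' \oplus \beta_i = f(r_i)$ precisely when $\beta_i = b_i$. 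Hence $\sF_i$ evaluates correctly iff $\sD_i$ guesses correctly, and the two games have identically distributed outputs, giving $\E[\swqcpnk(\sA',\secpar)] = \E[\sobudqcpank(\sA,\secpar)] = n + k/2 + \delta$. By \swqcpnk security (with $m=1$, so $k/2^m = k/2$) we get $\delta \le \negl$, as required.

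Two technical points need care. First, the distribution handed to the freeloaders: the \swqcp challenger supplies a single sampled input $x$, whereas the reduction needs the input to be a \emph{fresh uniform} $r_i$ for each of the $n+k$ distinguishers, and moreover $\sD_i$ in the \sobudqcpa game expects the challenge point to be uniform and independent. This is exactly what \cref{thm:qcp-ria} (the random-input-oracle lemma) is for: I would run the reduction against the \swqcpriank game, where each freeloader has oracle access to $\sR(f)$ and can draw its own $(r_i, f(r_i))$, and then appeal to \cref{thm:qcp-ria} to conclude that \swqcpnk security already implies \swqcpriank security. This simultaneously handles the need to answer the $\ket{\enc_\sk}$ queries made by $\sP$ and $\sD_i$ in the \emph{qCPA} setting, since an encryption query $\ket{\enc_\sk}$ can be answered using $\sR(f)$: a sample $(r, f(r))$ together with the input bit $b$ yields a valid ciphertext $(r, b \oplus f(r))$, and I must check this can be implemented coherently as a unitary on superposed queries (sampling $r$ once per query and XOR-ing into the response register, exactly as in the oracle definition of \cref{ssec:oracles}). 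Second, I must verify that the reduction's simulation of $\ket{\enc_\sk}$ using RIA samples is perfect, not merely statistically close — the balancedness of \bbf guarantees the plaintext bit is negligibly close to uniform, but here the uniformity of $b_i$ comes from the freshly sampled $b_i'$, so the simulation is in fact exact and balancedness is not even needed for this step (it is needed only to ensure the scheme is well-defined as an encryption of a near-uniform bit).

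The main obstacle I anticipate is the faithful, coherent simulation of the quantum encryption oracle $\ket{\enc_\sk}$ by the freeloaders and pirate using only the RIA oracle $\sR(f)$: one must argue that answering a superposition query $\sum_{b,x}\ket{b,x}$ by drawing a single $(r,f(r))$ and returning $\sum_{b,x}\ket{b, x \oplus (r, b \oplus f(r))}$ reproduces the exact mixed-state behavior of the real encryption oracle (which samples $r$ once per query and applies $U_{\enc_\sk;r}$). Matching the per-query randomness convention and confirming that the freeloaders collectively make only polynomially many such queries — so that the polynomial query bound $q$ in \cref{thm:qcp-ria} applies — is the delicate bookkeeping step; the rest of the argument is a clean one-to-one correspondence between correct distinguishers and correct freeloaders.
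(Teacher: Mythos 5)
Your core reduction is the same as the paper's: pass to the \swqcpriank game via \cref{thm:qcp-ria}, answer the $\ket{\enc_\sk}$ queries of $\sP$ and of the $\sD_i$ with fresh $\sR(f)$ samples (which, as you note, simulates the once-per-query-randomness oracle exactly), and turn the challenger's input into a challenge cipher by pairing it with a fresh uniform mask bit, XOR-ing the distinguisher's guess back into the mask. Your observation that this simulation is exact -- so that the balancedness of \bbf is not actually consumed by the reduction -- is correct, and is in fact slightly sharper than the paper's own write-up, which only claims negligible statistical closeness via balancedness.

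However, your first ``technical point'' contains a genuine error. You correctly notice that in \cref{def:wqcp} the challenger samples a \emph{single} $x$ handed to all $n+k$ freeloaders, whereas the \sobudqcpank game gives each $\sD_i$ an independently sampled cipher; but your proposed fix -- having each freeloader draw its own $(r_i, f(r_i))$ from $\sR(f)$ and build the challenge cipher on that point -- destroys the reduction. If the cipher given to $\sD_i$ is $(r_i,\beta_i)$ for a self-drawn $r_i$, then $\sD_i$'s guess reveals only information about $f(r_i)$, which the freeloader already knows from its oracle, and nothing about $f(x)$; the freeloader is then left with no way to produce $y_i$ with $y_i=f(x)$, which is what the \swqcpria game demands. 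The cipher handed to $\sD_i$ must be built on the challenger's $x$, as in your second paragraph (and in the paper). The resulting shared-$x$ correlation across freeloaders is harmless, but for a reason your claim of ``identically distributed outputs'' misses: the joint distribution of the game's score does change (all simulated ciphers share a first component), so the two counts are \emph{not} identically distributed. What is true is that each $\sD_i$'s marginal view, jointly with its implicit plaintext bit, is exactly as in the real game; since both games score a sum of per-index indicators, linearity of expectation already gives $\E[\swqcpriank(\sA',\secpar)] = \E[\sobudqcpank(\sA,\secpar)]$, which is all the security definitions require. With that one-line repair -- use the single $x$ for every freeloader and invoke linearity of expectation -- your argument coincides with the paper's proof.
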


\begin{proof}

    By \cref{thm:qcp-ria}, $\qcp_\bbf$ is \swqcpriank secure.

    Assume the QPT adversary $\sA=(\sP,\sD_1,\ldots,\sD_{n+k})$ satisfies that $$\E[\sobudqccank_{\dcsobcca}(\sA,\secpar)]=\mu\text{,}$$ we construct QPT procedures $\sA'=(\sP',\sF_1,\ldots,\sF_{n+k})$ for which $$\E[\swqcpria^{n,k}_{\cp_\bbf}(\sA',\secpar)]=\mu-\negl .$$ 
   
    That $\cp_\bbf$ is \swqcpria secure implies that $$\E[\swqcpria^{n,k}_{\cp_\bbf}(\sA',\secpar)] \le n + k/2 + \negl$$ whereby the equality above would imply that $\mu \le n + k/2 + \negl$.
        
    Upon getting $\rho_1,\ldots,\rho_n$, the pirate $\sP'$ simulates $\sP(\rho_1,\ldots,\rho_n)$. When $\sP$ queries $\enc_\sk$, $\sP'$ queries $\sR(\bbf.\eval_k)$ to obtain a pair $(r,f_\sk(r))$ and applies the unitary $\ket{m,x}\mapsto\ket{m,x\oplus(r,f(r)\oplus m)}$ to the input of the first query. When the simulation is concluded, $\sP'$ obtains $\sigma_1,\ldots,\sigma_{n+k}$ which she gives to the freeloaders $\sF_1,\ldots,\sF_{n+k}$.
        
    Once the freeloader $\sF_i$ is given $\sigma_i$ from the pirate and $x_i$ from the challenger, she samples a random bit $\beta_i$ and invokes $\sD_i(\sigma_i)$ with the cipher $(x,\beta_i)$ to obtain output $b_i$. She returns $y_i= b_i\oplus \beta_i$.
        
    Note that $(x,\beta_i)$ is a valid cipher for $\beta_i \oplus \bbf.\eval_\sk(x)$. Hence, if the output of $\sF_i$ is correct then $b_i=\beta_i\oplus \bbf.\eval_\sk(x)$ whereby $y_i = b_i\oplus \beta_i = \bbf.\eval_\sk(x)$ as needed. Thus, the number of distinguishers who answer correctly is exactly the number of freeloaders who answer correctly.
        
    The balancedness of $\bbf$ implies that if $x$ is uniformly random then $m\oplus \bbf.\eval_\sk(x)$ is $0$ with probability $1/2 + \negl$. It follows that the statistical difference between the view of $\sF_i$ when simulated by $\sD_i$ and the view of $\sD_i$ in the $\sobudqcpank_{\dcsobcca}(\sA,\lambda)$ game is negligibly close, whereby the expected number of distinguishers who answered correctly is negligibly close to $\mu$.
\end{proof}

\subsubsection{Implications}\label{sssec:impl}

Theorems \cref{thm:con_1bit_dcs_cca1} and \cref{thm:DCS-IND} together imply that the existence of a $\swqcp^{1,1}$ copy protectable \bbf implies the existence of a post-quantum \sindcpa secure bit encryption scheme. Such schemes can are known to imply the existence of post-quantum one way functions. This line of argument boils to:

\begin{corollary}\label{cor:owf}
    The existence of a $\swqcp^{1,1}$ secure copy protection scheme for a \bbf implies the existence of post-quantum one-way functions.
\end{corollary}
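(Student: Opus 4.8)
The plan is to chain together the two reductions already in hand and then invoke the standard fact that secure encryption yields one-way functions, taking care that the last step stays sound against quantum adversaries. Concretely, let $\bbf$ be a balanced binary function admitting a $\swqcp^{1,1}$ secure copy protection scheme $\cp_\bbf$. First I would instantiate \cref{con:dcs_cca1_res} with $\cp_\bbf$; applying \cref{thm:con_1bit_dcs_cca1} with $n=k=1$, the resulting uncloneable bit decryptor scheme $\dcsobcca$ is $\sobudqcpa^{1,1}$ secure.

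Next I would pass to its underlying symmetric encryption scheme $\se_\dcs$. By the remark following \cref{thm:DCS-IND} (which records that the argument there applies verbatim to bit encryption), $\sobudqcpa^{1,1}$ security of $\dcsobcca$ implies that $\se_\dcs$ is $\sindqcpa$ secure. Since quantum oracle access to the encryption oracle is at least as powerful as classical access, $\sindqcpa$ security implies ordinary $\sindcpa$ security against QPT adversaries; that is, $\se_\dcs$ is a post-quantum $\sindcpa$ secure bit encryption scheme. Note that because the key length is a fixed $\poly$ while the encryption oracle lets an adversary obtain arbitrarily many ciphertexts, this security cannot be information-theoretic, which is exactly the regime in which secure encryption is known to entail one-way functions.

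It then remains to invoke the classical result that an $\sindcpa$ secure symmetric encryption scheme implies one-way functions, and to argue that this implication survives post-quantumly. I would proceed by the contrapositive: if post-quantum one-way functions do not exist, then every efficiently computable function admits a QPT distributional inverter (a post-quantum analogue of the Impagliazzo--Luby characterization). Applying such an inverter to $(k,m,r)\mapsto \enc_k(m;r)$ lets a QPT adversary recover, from a challenge ciphertext, a plaintext consistent with it, and thereby win the $\sindcpa$ game with non-negligible advantage, contradicting post-quantum $\sindcpa$ security. The adversary so produced is legitimate throughout, since \cref{thm:con_1bit_dcs_cca1,thm:DCS-IND} are black-box reductions.

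The main obstacle is precisely this final step: the textbook proof that secure encryption implies one-way functions is stated for \emph{classical} adversaries, and I must ensure it relativizes to QPT adversaries. The delicate point is the post-quantum form of the ``absence of one-way functions yields distributional inversion'' lemma; everything else in the chain is black-box and quantum-oblivious. I would therefore either cite an existing post-quantum formulation of this implication, or verify directly that the distributional-inverter argument uses its inverter only as a black box and hence carries over once a post-quantum inverter is assumed. The earlier steps require no further care, as they are pure reductions.
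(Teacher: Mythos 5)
Your proposal is correct and follows essentially the same route as the paper: instantiate \cref{con:dcs_cca1_res} with the $\swqcp^{1,1}$ secure scheme, apply \cref{thm:con_1bit_dcs_cca1} and (the remark after) \cref{thm:DCS-IND} to obtain a post-quantum \sindcpa secure bit encryption scheme, and then invoke the fact that such schemes imply post-quantum one-way functions. The only difference is that the paper simply cites this last implication as known, whereas you spell out the contrapositive via a post-quantum distributional inverter --- a reasonable elaboration of the same argument rather than a different approach.
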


Consider the bit encryption scheme described at the top of  \ref{ssec:con_ud1_qcpa}. Examination of the analysis of its \sindcpa security (e.g. \cite[Proposition~5.4.12]{Gol04}) reveals that for the construction to be secure, it is sufficient that the construction remains secure if we only require that the underlying function is a \emph{weak} \prf. That is, it is infeasible to distinguish it from a truly random function for an adversary with access to its value on polynomially many \emph{uniformly random} points. Furthermore, it is possible to show that being a weak $\prf$ is also a necessary condition. This leads to the following corollary, the proof thereof is deferred to \cref{app:ggm}:

\begin{corollary}\label{cor:prf}
    If there exists a $\swqcp^{1,1}$ secure copy protection scheme for a binary balanced function \bbf, then \bbf is a weak \prf.
\end{corollary}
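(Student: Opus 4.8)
The plan is to obtain \bbf's weak-\prf property as a consequence of the \sindcpa security of the bit encryption scheme underlying \cref{con:dcs_cca1_res}, and then to argue that weak-\prf-ness is in fact \emph{necessary} for that scheme to be secure. First I would assemble the implications already in hand: by \cref{thm:con_1bit_dcs_cca1} with $n=k=1$, a \swqcp$^{1,1}$ secure copy protection scheme for \bbf makes $\dcsobcca$ a \sobudqcpa$^{1,1}$ secure scheme; by the remark following \cref{thm:DCS-IND}, this implies that the underlying symmetric encryption scheme $\se_{\dcsobcca}$ is \sindqcpa, hence in particular \sindcpa, secure. Since $\se_{\dcsobcca}$ is precisely the textbook construction $\enc_\sk(b)=(r,b\oplus\bbf.\eval_\sk(r))$ with $r\gets\zo^\ell$, it remains to show that \sindcpa security of this scheme forces \bbf to be a weak \prf.

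For this remaining (necessary-condition) step I would argue the contrapositive: from a distinguisher $\sD$ that, using $q=\poly$ samples of the form $(r_i,\bbf.\eval_\sk(r_i))$ with uniform $r_i$, tells \bbf apart from a random function with non-negligible advantage $\epsilon$, I would build a \sindcpa adversary. The crucial observation is an XOR/balancing identity: masking a value $\bbf.\eval_\sk(r)$ with a uniformly random bit $\beta$ yields a uniform bit, so the ``all-random'' sample distribution is exactly the distribution in which each sample is, independently and with probability $1/2$ each, either a genuine pair $(r,\bbf.\eval_\sk(r))$ or its flip $(r,1\oplus\bbf.\eval_\sk(r))$. Both a genuine and a flipped pair are trivially producible by a \sindcpa adversary -- the former by querying the encryption oracle on $0$, the latter by querying on $0$ and flipping the mask -- while the challenge cipher $(r^\ast,b\oplus\bbf.\eval_\sk(r^\ast))$ is itself a genuine pair when $b=0$ and a flipped pair when $b=1$.

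Given this, the reduction runs a standard hybrid over sample positions. Define $H_j$ to be the sample stream whose first $j$ entries are drawn from the all-random (independently genuine-or-flipped) distribution and whose remaining entries are genuine, so $H_0$ is the real stream and $H_q$ is the random stream, and consecutive hybrids differ only in whether one designated position is genuine or independently genuine-or-flipped. The adversary picks $j\gets\{1,\dots,q\}$, answers $\sD$'s sample requests by placing the challenge at position $j$, simulating positions $i<j$ as independently genuine-or-flipped (using the oracle on $0$ and a fresh mask) and positions $i>j$ as genuine, and echoes $\sD$'s verdict. A short telescoping computation then shows that the adversary's advantage over $1/2$ equals $\tfrac1q\bigl(\Pr[\sD(H_q){=}1]-\Pr[\sD(H_0){=}1]\bigr)$ in magnitude, i.e. $\epsilon/q$, which is non-negligible; collisions among the $q+1$ uniform inputs, the only source of slack, occur with probability $O(q^2/2^\ell)=\negl$.

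The step I expect to be the main obstacle -- and the only genuinely nontrivial one -- is this last reduction, specifically the recognition that the ``random'' world is a per-coordinate equal mixture of the genuine and flipped worlds, which is what lets a distinguisher that merely separates real samples from random samples be converted into an adversary that must separate a single genuine sample from a single flipped one (the actual content of a \sindcpa break). Everything else, including the post-quantum bookkeeping (the simulated oracle calls are classical, so a QPT $\sD$ yields a QPT \sindcpa adversary making only classical queries), is routine.
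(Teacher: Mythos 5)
Your proposal is correct, and its first half coincides exactly with the paper's proof in \cref{app:ggm}: both use \cref{thm:con_1bit_dcs_cca1} and (the remark following) \cref{thm:DCS-IND} to conclude that $\se_\bbf$, the textbook scheme underlying \cref{con:dcs_cca1_res}, is post-quantum \sindcpa secure. You part ways at the final, "necessity" step. The paper proves \cref{lemma:bbfwprf} by factoring through an intermediate notion: it recalls \slorcpa security (\cref{def:lorcpa}), imports the generic implication \sindcpa $\Rightarrow$ \slorcpa from \cite{BDJR97}, and then gives a hybrid-free reduction in which every query of the weak-\prf distinguisher is answered by one left-or-right challenge on the pair $(0,z)$ with fresh $z\gets\zo$: when the hidden bit is $0$ the distinguisher sees genuine pairs $(r,\bbf.\eval_\sk(r))$, and when it is $1$ it sees uniformly masked pairs, i.e.\ (barring collisions among the $r$'s) a random function. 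You instead reduce directly to \sindcpa via an explicit hybrid over the $q$ sample positions, embedding the single challenge cipher at a random position $j$. Your sketched telescoping does go through: since your $H_j$ is the equal mixture of $H_{j-1}$ and the ``deterministically flipped at position $j$'' distribution, the factor $2$ arising from that decomposition cancels the probability $1/2$ of each challenge branch, so the advantage is $\frac{1}{q}$ times the distinguishing gap, up to the negligible collision term. In effect you re-derive, in this special case, the \sindcpa-to-\slorcpa equivalence that the paper cites; the paper's route is more modular and its final reduction is query-by-query with no visible hybrid (the polynomial loss hides inside the cited theorem), while yours is self-contained and makes the quantitative loss $\epsilon/q$ explicit. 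Both arguments rest on the same core observation --- a uniform mask makes $\bbf.\eval_\sk(r)$ uniform --- and both pay the same negligible slack for collisions among uniformly sampled inputs.
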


\subsection{\textsf{UD1-qCCA1} Security}\label{sssec:obindcca}

The construction \cite[Construction~5.3.9]{Gol04} which we discussed before is actually \sindcca secure \cite[Proposition~5.4.18]{Gol04}. Intuitively, the proof of this claim follows by noting that the value of the function at the point used to mask the plaintext seems independent of the value of the function on all points which were required to answer previous encryption queries to computationally bounded adversaries, as mandated by the \prf property. Hence, having access to decryption oracle \emph{before} seeing the challenge ciphertext does not benefit the adversary.

In trying to carry this idea to the context of uncloneable decryptors, one runs into a difficulty: Our ability to answer decryption queries relies on our ability to evaluate the underlying \bbf, which we are only able to do with the help of the copy-protected programs given by the challenger. However, simulating the pirate might modify the copy-protected programs in a way that makes them unusable.

It is tempting to try to sidestep this by means of rewinding: every time the pirate makes a query, apply it in reverse to recover the copy-protected programs, use them to respond to the query, and apply the pirate forward to get back to the querying point. The problem is that the inputs to decryption queries might depend on measurement outcomes (or equivalently, if we simulate the pirate coherently, the queries might become entangled with the auxiliary qubits to which we store the measurement outcomes).

This issue could be completely circumvented if we allow the \swqcp adversary to have one additional program they could use to respond to queries. Following this line of argument, one can prove:

\begin{proposition}\label{thm:con_1bit_dcs_cca1_really}
    If $\qcp_\bbf$ is $\swqcp^{n+1,k}$ secure (see \cref{def:wqcp}) then the scheme \dcsobcca (see \cref{con:dcs_cca1_res}) is \sobudqccank secure (see \cref{def:dcs-obindx}).
\end{proposition}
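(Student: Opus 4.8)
The plan is to reduce $\sobudqccank$ security of $\dcsobcca$ to $\swqcpria^{n+1,k}$ security of $\qcp_\bbf$, which is implied by the assumed $\swqcp^{n+1,k}$ security via \cref{thm:qcp-ria}. This mirrors the reduction in the proof of \cref{thm:con_1bit_dcs_cca1}, the only genuinely new ingredient being the treatment of the pirate's decryption oracle, for which I would reserve the extra $(n+1)$-st copy-protected program. Concretely, starting from a QPT adversary $\sA = (\sP, \sD_1, \ldots, \sD_{n+k})$ with $\E[\sobudqccank_{\dcsobcca}(\sA,\secpar)] = \mu$, I would build $\sA' = (\sP', \sF_1, \ldots, \sF_{n+1+k})$. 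The reduction pirate $\sP'$ receives $n+1$ programs $\rho_1, \ldots, \rho_{n+1}$, earmarks $\rho_{n+1}$ as a dedicated decryption engine, and simulates $\sP$ on the remaining $\rho_1, \ldots, \rho_n$. Encryption queries of $\sP$ (and, later, of the distinguishers) are answered exactly as in \cref{thm:con_1bit_dcs_cca1}, using fresh random pairs $(r, \bbf.\eval_\sk(r))$ supplied by the random input oracle $\sR(\bbf.\eval_\sk)$.

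The new step is answering $\sP$'s decryption queries. A query on $(r,\hat b)$ must return $\hat b \oplus \bbf.\eval_\sk(r)$, where $r$ may be in superposition. Because in $\sobudqccank$ the decryption oracle is available only to the pirate and only before the challenge is issued, no distinguisher ever needs to decrypt, so a single reserved program suffices for all of $\sP$'s decryption queries -- provided it can be reused. I would implement the oracle coherently on the register holding $\rho_{n+1}$: compute $\bbf.\eval_\sk(r) = \qcp.\qeval(\rho_{n+1}, r)$ into an ancilla, XOR $\hat b$ and this value into the answer register, and then uncompute the evaluation. Since $\qcp.\qeval(\rho_f,\cdot)$ computes $\bbf.\eval_\sk$ with certainty (perfect correctness, \cref{def:qcp_synt}), the answer register is deterministic conditioned on $r$, so this compute--copy--uncompute procedure disentangles $\rho_{n+1}$ from the query register and returns it to its original state. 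Thus $\rho_{n+1}$ can be reused across all of the polynomially many decryption queries and, crucially, handed back intact at the end of the simulation.

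After the simulation, $\sP$ has produced states $\sigma_1, \ldots, \sigma_{n+k}$ while $\sP'$ still holds the restored $\rho_{n+1}$, so $\sP'$ outputs the $n+1+k$ states $\sigma_1, \ldots, \sigma_{n+k}, \rho_{n+1}$. For $i \le n+k$ the freeloader $\sF_i$ simulates the distinguisher $\sD_i$ exactly as in \cref{thm:con_1bit_dcs_cca1} (forming a challenge cipher from its challenge input and a random bit, answering $\sD_i$'s encryption queries with $\sR(\bbf.\eval_\sk)$, and translating $\sD_i$'s guess into a prediction of $\bbf.\eval_\sk$); by the balancedness of $\bbf$ the simulated view is statistically close to that in the real game, so these $n+k$ freeloaders are correct in expectation $\mu - \negl$. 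The final freeloader $\sF_{n+1+k}$ receives the intact $\rho_{n+1}$ and outputs $\qcp.\qeval(\rho_{n+1}, x_{n+1+k}) = \bbf.\eval_\sk(x_{n+1+k})$, which is correct with certainty. Hence $\E[\swqcpria^{n+1,k}_{\qcp_\bbf}(\sA',\secpar)] = \mu + 1 - \negl$, and (using $m=1$) $\swqcpria^{n+1,k}$ security yields $\mu + 1 - \negl \le (n+1) + k/2 + \negl$, i.e. $\mu \le n + k/2 + \negl$, as required. Note that the extra freeloader contributes a full $1$ rather than $1/2$, exactly cancelling the $+1$ in the $\swqcp^{n+1,k}$ bound; this is what makes the reserved-program trick pay for itself.

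The main obstacle, and the step I would write most carefully, is justifying that $\rho_{n+1}$ survives being used as a coherent decryption oracle across many superposition queries, so that it is still a fully functional program when handed to $\sF_{n+1+k}$. This is precisely the rewinding/entanglement difficulty flagged in the discussion preceding the statement: it is exactly because the query inputs may be entangled with measurement outcomes that one cannot safely reuse the $n$ programs destined for the distinguishers, which is what forces the jump from $\swqcp^{n,k}$ to $\swqcp^{n+1,k}$. The resolution leans essentially on perfect correctness of $\qcp.\qeval$; with only approximate correctness the accumulated disturbance over $\poly$ many queries would have to be controlled by a gentle-measurement argument, and the final freeloader's success probability would degrade accordingly.
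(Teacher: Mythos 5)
Your proposal is correct and follows exactly the route the paper intends: the paper explicitly declines to give a formal proof of this proposition (it is superseded by the \sseufcma-based construction), but the discussion preceding it sketches precisely your idea of reserving the extra $(n{+}1)$-st program to implement the pirate's decryption oracle, with the rest of the reduction inherited from \cref{thm:con_1bit_dcs_cca1} via \cref{thm:qcp-ria}. Your filled-in details -- the compute--copy--uncompute implementation whose soundness rests on the perfect correctness of $\qcp.\qeval$ (\cref{def:qcp_synt}), and the accounting in which the extra freeloader's certain win exactly cancels the $+1$ in the $\swqcp^{n+1,k}$ bound -- are the right way to make that sketch rigorous.
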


We do not provide a formal proof as (assuming secure digital signatures) this result is superseded by the construction of the next section.

\subsection{\textsf{UD1-qCCA2} Security}\label{sssec:obindccaa}

Here we employ digital signatures to generically transform \sobudqcpank secure decryptors to \sobudqccaank secure decryptors.

The transformation is conceptually similar to the standard transformation of \sindcpa secure symmetric encryption schemes into \sindccaa secure symmetric encryption schemes by means of \emph{message authentication codes} (see e.g. \cite[Proposition~5.4.20]{Gol04}).

Informally, Message authentication codes are a way to produce \emph{tags} for strings such that anyone holding a secret key can tag messages as well as verify that other messages have been tagged with the same key, but such that it is infeasible to create valid tags without knowing the secret key. By modifying the scheme to tag ciphers at encryption, and only decrypting properly tagged messages (returning $\bot$ otherwise), we render the decryption oracles useless, since creating valid ciphers other than ones obtained from encryption queries becomes unfeasible.

The \emph{encrypt-than-MAC} paradigm described above is unsuitable for our needs since the adversary has access to a decryptor. Since the decryptor should allow decrypting messages (and in particular, verifying tags), it should somehow contain the authentication key. However, simply affording this key in the clear would allow the adversary to tag messages themselves.

We circumvent this by using a digital signature (see \cref{ssec:ds}) rather than a message authentication code, which allows us to separate tagging (henceforth called \emph{signing}) from verification.

In order to answer encryption queries, we need to record decryption queries and their results. This is impossible to do for queries in superposition in general but becomes possible when we assume that the scheme is \emph{decoupled}. That is, we assume that the encrypting $0$ and $1$ using the \emph{same randomness} appears independent (at least to a computational adversary). Fortunately, it is easy to transform any scheme to a decoupled scheme while retaining its security, which allows us to assume without loss that the scheme we wish to transform is already decoupled. In \cref{app:decouple} we formally define and show how to decouple uncloneable bit decryptors (which is sufficient for our applications) and sketch a decoupling procedure for uncloneable decryptors in general.

\begin{construction}[\sobudqccaa Uncloneable Decryptors from \sobudqcpa Uncloneable Decryptors and \sseufcma Signatures]\label{con:dcs_cca2_res}
    Let $\dcsdec$ be a decoupled (see \cref{def:decoupled}) uncloneable decryptors scheme, and let $\ds$ be a deterministic digital signature scheme (see \cref{def:ds}). Define the scheme \dcsobccaa as following:
    \begin{itemize}
        \item $\dcsobccaa.\keygen(\secparam)$ outputs $(\sk_\dcs,\sk_\ds,\pk_\ds)$ where:
        \begin{itemize}
            \item $\sk_\dcs\gets \dcsdec.\keygen(\secparam)$, and 
            \item $(\sk_\ds,\pk_\ds)\gets \ds.\keygen(\secparam)$.
        \end{itemize}
        \item $\dcsobccaa.\chipgen_{(\sk_\dcs,\sk_\ds,\pk_\ds)} $ outputs $ \rho\otimes \pk_\ds$ where $\rho\gets \dcsdec.\chipgen(\sk_\dcs)$.
        \item $\dcsobccaa.\enc_{(\sk_\dcs,\sk_\ds,\pk_\ds)}(b) $ outputs $ (c,s)$ where $c\gets \dcsdec.\enc_{\sk_\dcs}(b)$ and $s\gets \ds.\sign_{\sk_\ds}(c)$.
        \item $\dcsobccaa.\dec_{(\sk_\dcs,\sk_\ds,\pk_\ds)}((c,s))$ outputs  $$\begin{cases} \dcsdec.\dec_{\sk_\dcs}(c) & \ds.\ver_{\pk_\ds}(c,s)=1 \\ \bot & \text{else} \end{cases}\text{,}$$
        \item $\dcsobccaa.\qdec(\rho\otimes \pk_\ds,(c,s))$ outputs  $$\begin{cases} \dcsdec.\qdec(\rho,c) & \ds.\ver_{\pk_\ds}(c,s)=1 \\ \bot & \mbox{else} \end{cases}\text{.}$$
    \end{itemize}
\end{construction}

The correctness of $\dcsobccaa$ of \cref{con:dcs_cca2_res} follows from the correctness of $\dcsdec$ and $\ds$.

\begin{proposition}\label{thm:dcs_cca2_res}
    If $\dcsdec$ is \sobudqcpank secure (see \cref{def:dcs-indx}) and $\ds$ is  \sseufcma secure (see \cref{def:euf-qcma}) then \dcsobccaa from \cref{con:dcs_cca2_res} is \sobudqccaank secure (see \cref{def:dcs-indx}).
\end{proposition}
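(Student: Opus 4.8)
The plan is to prove \sobudqccaank security by chaining two reductions through an intermediate hybrid. First I would replace the genuine decryption oracles $\ket{\dec_\sk}$ (given to $\sP$) and $\ket{\dec_\sk\setminus c_i}$ (given to the $\sD_i$), which invoke $\sk_\dcs$, by a \emph{bookkeeping oracle} that decrypts only those ciphertexts whose signature the simulation has itself produced; this swap is justified by the \sseufcma security of $\ds$. Once the decryption oracles no longer touch $\sk_\dcs$, I would reduce the resulting game directly to the \sobudqcpank security of $\dcsdec$. Throughout I rely on the standing hypothesis that $\dcsdec$ is decoupled (\cref{def:decoupled}), which is precisely what will let the \sobudqcpa reduction record encryption queries.

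To set up the bookkeeping I maintain a table $T$ of triples $(c,s,b)$. Whenever the encryption oracle is invoked I sample randomness $r$; because we encrypt a single bit, the per-query action of $\enc$ ranges over only the two ciphertexts $c_0=\dcsdec.\enc_{\sk_\dcs}(0;r)$ and $c_1=\dcsdec.\enc_{\sk_\dcs}(1;r)$, so I can compute the signatures $s_0,s_1$, insert $(c_0,s_0,0),(c_1,s_1,1)$ into $T$, and realise the query as the unitary $\ket{b,x}\mapsto\ket{b,x\oplus(c_b,s_b)}$. This keeps $|T|$ polynomial even though the queries are in superposition. The bookkeeping oracle then answers a decryption query $(c,s)$ by returning the recorded plaintext when $\ds.\ver_{\pk_\ds}(c,s)=1$ and $(c,s)\in T$ (respecting the challenge exclusion for the distinguishers), and $\bot$ otherwise.

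The \sseufcma step is the part I expect to be the main obstacle. The genuine and bookkeeping oracles agree except on inputs $(c,s)$ with $\ds.\ver_{\pk_\ds}(c,s)=1$ and $(c,s)\notin T$; since $\ds.\sign$ is deterministic and $T$ records exactly the signed pairs, every such pair is a \emph{strong} forgery. To bound the probability that the adversary ever queries one, I would run a forger $\sB$ against $\ds$: it generates $\sk_\dcs$ itself (hence can run $\dcsdec.\enc$, $\dcsdec.\dec$, $\dcsdec.\decgen$ honestly and present the \emph{genuine} decryption oracle), obtains the two signatures needed per encryption query from its \emph{classical} signing oracle, and at a uniformly chosen query measures the decryption-query register to extract $(c,s)$, which it outputs. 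A hybrid over the queries combined with this measurement (a one-way-to-hiding style argument) shows that a non-negligible distinguishing advantage forces a non-negligible forging probability. It is essential here that unforgeability be \emph{strong}: in the a posteriori phase a distinguisher holding the challenge $(c,s)$ may legally query any $(c,s')$ with $s'\neq s$ through $\dec_\sk\setminus c_i$, and only strong unforgeability rules out re-signing $c$ to read off the challenge bit.

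Finally I would give the reduction to \sobudqcpank. The \sobudqcpa adversary $\sA'$ receives the $n$ decryptors for $\dcsdec$, attaches a self-generated $\pk_\ds$ to each to obtain decryptors for $\dcsobccaa$, and simulates $\sA$. Encryption queries are answered using $\sA'$'s own $\ket{\dcsdec.\enc}$ oracle: it makes two classical branch queries to obtain $c_0,c_1$, signs them with its own $\sk_\ds$, updates $T$, and applies the unitary above. Here decoupling is exactly what guarantees that the two independently sampled branch ciphertexts are distributed as a single correlated query of the real oracle, so the simulation is faithful while $T$ stays classically recordable; without decoupling the two branches would share randomness and this reconstruction would be detectable. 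Decryption queries are answered by the bookkeeping oracle, which needs no $\sk_\dcs$. The challenge bits $b_i$ of the two games coincide, so the expected number of correct distinguishers for $\sA'$ equals that of $\sA$ in the hybrid, which by the previous step is negligibly close to its value in the real \sobudqccaank game. The \sobudqcpank security of $\dcsdec$ bounds this by $n+k/2+\negl$, yielding the claim.
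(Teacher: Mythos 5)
Your proposal is correct and follows essentially the same route as the paper's proof: simulate the \sobudqccaa adversary inside a \sobudqcpa reduction, answering encryption queries by two branch queries to the \dcsdec oracle (faithful exactly because the scheme is decoupled), answering decryption queries from a recorded table of signed ciphertexts, and bounding the simulation error by measuring a uniformly chosen decryption query to extract a strong forgery against \ds. The only cosmetic differences are that you package the argument as an explicit two-step hybrid and let the forger implement the genuine decryption oracle using a self-generated $\sk_\dcs$ (rather than the bookkeeping oracle, as in the paper), neither of which changes the substance.
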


\begin{proof}
    Let $\sA = (\sP,\sD_1,\ldots,\sD_{n+k})$ be an adversary for the $\sudqccaank_{\dcsobccaa}$ game. We construct an adversary $\sA' = (\sP',\sD_1',\ldots,\sD_{n+k}')$ for the $\sudqcpank_{\dcsdec}$  game such that
    \begin{equation}\label{eq:bla}
        \E\left[\left|\sudqccaank_{\dcsobccaa}(\sA,\lambda)-\sudqcpank_{\dcsdec}(\sA',\lambda)\right|\right]< \negl\text{,}
    \end{equation}
    the \sobudqccaank security of $\dcsobccaa$ will then follow from the \sobudqcpank security of $\dcsdec$.
    
    In the following, we describe how $\sA'$ simulates answers to decryption oracle calls made by $\sA$. We often refer to the encryption oracle as the \emph{actual} oracle and to the responses made by $\sA'$ to such calls as the \emph{simulated} oracle.
    
    After being given $\rho^{\otimes n}$ from $\sC$, the pirate $\sP'$:
    \begin{itemize}
        \item Generates $\sk_\ds,\pk_\ds\gets \ds.\keygen(\secparam)$.
        \item Simulates $\sP((\rho\otimes\pk_\ds)^{\otimes n})$, responding to encryption queries by:
        \begin{itemize}
            \item using her encryption oracle on inputs $0$ and $1$ to obtain $c_0$ and $c_1$,
            \item computing $s_b \gets \ds.\sign_{\sk_\ds}(c_b)$,
            \item storing $(s_b,c_b,b)$ to a list $L_\sP$, and
            \item applying the unitary $\ket{b,x}\mapsto \ket{b,x\oplus(s_b,c_b)}$,
        \end{itemize}
        until obtaining the states $\sigma_1,\ldots,\sigma_{n+k}$. Gives each $\sD'_i$ the state $\sigma_i\otimes \pk_\ds \otimes \sk_\ds \otimes L_\sP$.
    \end{itemize}

    Each distinguisher $\sD'_i$ simulates $\sD_i(\sigma_i\otimes \pk_\ds)$ on the cipher $c_i$ given to her by $\sC$. $\sD'_i$ answers encryption calls the same way $\sP'$ did, storing signature-cipher-plaintext triplets into a list $L_i$ which we assume is initially a copy of $L_\sP$. She answers decryption calls by applying the unitary $$\ket{(c,s),x}\mapsto \ket{(c,s),x\oplus f(c,s)}$$ where
    $$f(c,s) = 
    \begin{cases} 
        b & (c,s,b)\in L_i\\
        \bot  & \mbox{otherwise}
    \end{cases}\text{.}$$ Note that $f$ is well defined since it is impossible that $c$ is a cipher of both $0$ and $1$. When the simulation ends, $\sD'_i$ outputs the output of $\sD_i$.
    
    We argue that the statistical difference between the views of $\sA$ in the simulation described above and the original $\sudqccaa_{dcs^2}$ game is $\negl$. 
    
    Intuitively, any statistical difference between the actual game and the simulation must result from differences in the outputs decryption queries (since the actual and simulated encryption oracles are identical). Before the first decryption oracle call, everything distributes identically. If all queries made by $\sA$ to encryption oracles satisfy that the response of the simulated oracle is negligibly close to the response of the actual oracle, then the state at the end of the simulation is negligibly close to the state of the actual game. In other words, if the actual and simulated views of $\sA$ by the end of the game are not statistically close, then at some point $\sA$ must have made an encryption query on an input whose result on the actual oracle is significantly different than on the simulated oracle. This can only happen if the input the the encryption query is significantly supported on pairs of the form $(c,s)$ such that $\dcsdec.\enc_{\sk_\dcs}(c)=b$ and $\ds.\ver_{\pk_\ds}(c,s)=1$ yet $(c,s,b)\notin L_i$. We use this fact to extract a signed document that was not a result of a signature query, whereby voiding the \sseufcma security of \ds and arriving at a contradiction.
    
    More precisely, for any state $\rho$ which is of the dimension of an input to the decryption oracle, let $S_j(\rho)$ be the probability that, upon measuring the first register of $\rho$, the outcome would be of the form $(c,s)$ where $(c,s,*)\notin L_i$ at the time the $j$th query was made, yet $\ds.\ver_{\pk_\ds}((r,b),s) = 1$. Let $S_j$ be the expected value of $S_j(\rho)$ where $\rho_j$ is the input to the $j$th decryption query made by $\sA'$, and let $S=\max\{S_1,\ldots,S_q\}$ where $q$ is the maximal number of decryption queries.
    
    If $S < \negl$ then the output of any decryption call made by $\sA$ in the simulation is negligibly close to the output expected by an actual decryption oracle. Since the output of $\sD_i'$ is exactly the output of $\sD_i$ it follows that $$\E\left[\left|\sobudqccaank_{\dcsobccaa}(\sA',\lambda)-\sobudqcpank_{\dcsdec}(\sA,\lambda)\right|\right]< \negl\text{.}$$
    
    We show that $S < \negl$ by constructing an adversary $\sB$ to the $\sseufcma_\ds$ game (recall \cref{def:euf-qcma}) such that  $$\prob{\sseufcma_\ds(\sB,\secpar)=1} \ge \frac{S}{ \mathsf{poly}(\secpar)}\text{,}$$ whereby it will follow from the \sseufcma security of $\ds$ that $S < \negl$.
    
    Recall that the adversary $\sB$ has access to the oracle $\sign_{\sk_\ds}$ as well as to the public key $\pk_\ds$.
    
    The adversary $\sB$ first chooses two random positive integers $j\le n+k$ and $u\le \max\{Q_i\}_{i=1,\ldots,n+k}$ where $Q_i$ is the maximal number of decryption queries to be made by $\sD_i$. $\sB$ then simulates the game $\sobudqccaank_{\dcsobccaa}(\sA',\lambda)$ (with the adversary $\sA'$ defined above) with the following modifications:
    \begin{itemize}
        \item $\sP'$ does not generate $\sk_\ds,\pk_\ds$, but is rather given $\pk_\ds$ from $\sB$ (and does not know $\sk_\ds$).
        \item All invocations of $\ds.\sign_{\sk_\ds}$ are replaced with oracle calls.
    \end{itemize}
    $\sB$ runs the simulation, responding to signature calls made by $\sA'$ by querying the signing oracle and recording the message-signature pair until $\sD'_j$ makes their $u$th query. Instead of answering the query, $\sB$ measures the first register of the input state and transmits to $\sC$ the input-output list of all queries it has made, and the output of the last measurement (if the simulation of $\sD'_j$ finishes before $u$ queries are made, $\sB$ concedes the game and the outcome is $0$).
    
    By hypothesis, at least one of the queries made by the distinguishers doing the simulation has the property that measuring the first register will result with probability $S$ with a valid signed message which was not queried by $\sB$. The probability that $\sB$ measured such a state is at least $\frac{1}{ju}$. Recall that $j \le n+k = \mathsf{poly}(\secpar)$, and that $q$ is bounded by the number of oracle queries made by a QPT procedure, whereby $q \le \mathsf{poly}(\secpar)$. It follows that $$\prob{\sseufcma_\ds(\sB,\secpar)=1} \ge \frac{S}{ \mathsf{poly}(\secpar)}$$ as needed.
\end{proof}

\subsection{\textsf{UD-qCPA} Security}\label{sssec:udqcpa}

In this section, we show how to transform a \sudqcpank extendable scheme into a \sudqccank secure scheme using digital signatures.

In order to obtain a \sudqcpank secure uncloneable decryptors, it suffices to provide \sudqcpank extendable uncloneable bit decryptors (recall \cref{def:extend}).

One might hope that the scheme \dcsobcca from \cref{con:dcs_cca1_res} is already \sudqcpank extendable. Unfortunately, this is not the case. As we will soon see, a poor choice of a copy protection scheme, even a \swqcp secure one, could result in a scheme that is not even $\sud^{1,1}$ extendable. Worse yet, by applying the transformation of \cref{con:dcs_cca2_res} to this scheme, we obtain a scheme which is \sobudqccaa secure but not $\sud^{1,1}$ extendable. To make things even worse, the scheme actually fails to be secure even when limited to plaintexts of length $2$!

\begin{proposition}\label{thm:ext_fail}
    Assume there exists a \swqcpnk (resp. \swqcp) secure copy protection scheme for some \bbf. Then there exists a \sobudqcpank (resp. \sobudqcpa) secure uncloneable decryption scheme which is not $\sud^{1,1}$ extendable even when limited to plaintexts of length $2$.
\end{proposition}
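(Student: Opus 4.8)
The plan is to instantiate \cref{con:dcs_cca1_res} with the \emph{split} balanced binary function already sketched in \cref{sssec:qcp_split}. Starting from the hypothesised \swqcpnk secure copy protection $\sqcp_\bbf$ for $\bbf$ (say with input length $\ell$), I would form the class $\bbf'$ of input length $\ell+1$ whose members are pairs $(f_0,f_1)$ with $\bbf'.\eval_{(f_0,f_1)}(b\|x)=\bbf.\eval_{f_b}(x)$, sampled by drawing $f_0,f_1\gets\bbf.\samp(\secparam)$, and protected by $\sqcp_{\bbf'}.\ptect(f_0,f_1)=\rho_{f_0}\otimes\rho_{f_1}$ where $\rho_{f_b}\gets\sqcp_\bbf.\ptect(f_b)$ and $\qeval(\rho_{f_0}\otimes\rho_{f_1},b\|x)=\qeval(\rho_{f_b},x)$. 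First I would verify that $\bbf'$ is itself a \bbf: for a uniform input $b\|x$ the prefix $b$ is uniform, so $\prob{\bbf'.\eval(b\|x)=0}=\tfrac12(\prob{f_0(x)=0}+\prob{f_1(x)=0})$, which is negligibly close to $\tfrac12$ by balancedness of $\bbf$. The splitting-attack analysis (\cref{app:split}, cf. \cref{lem:fdfail}) guarantees that $\sqcp_{\bbf'}$ is again \swqcpnk (resp. \swqcp) secure, so \cref{thm:con_1bit_dcs_cca1} applied to $\bbf'$ yields a bit decryptor scheme $\mathsf{U}$ that is \sobudqcpank (resp. \sobudqcpa) secure. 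This settles the positive half of the statement.

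For the negative half I would exhibit an adversary $\sA=(\sP,\sD_1,\sD_2)$ breaking $\sud^{1,1}$ security of the two-bit restriction of $\mathsf{U}^{\mathsf{ext}}$ (\cref{def:extend}). Since $n=1$, the pirate $\sP$ receives a single decryptor $\rho=\rho_{f_0}\otimes\rho_{f_1}$ and simply \emph{splits} it, setting $\sigma_1=\rho_{f_0}$ and $\sigma_2=\rho_{f_1}$; both distinguishers submit the plaintext pair $(m_0,m_1)=(00,11)$, so that the two plaintext bits always coincide and both equal the challenge bit $b_i$. The challenge handed to $\sD_i$ is $(c_1,c_2)$ with $c_j=(r_j,\beta_j)$ and $r_j\gets\zo^{\ell+1}$. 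Distinguisher $\sD_1$ inspects the prefixes of $r_1,r_2$: whenever some $r_j=0\|x_j$ it computes $\bbf.\eval_{f_0}(x_j)=\qeval(\rho_{f_0},x_j)$ and recovers $m_j=\beta_j\oplus\bbf.\eval_{f_0}(x_j)=b_i$, which it outputs; if neither prefix is $0$ it guesses. Since two independent prefixes both equal $1$ only with probability $\tfrac14$, $\sD_1$ is correct with probability $\tfrac34+\tfrac14\cdot\tfrac12=\tfrac78$, and symmetrically $\sD_2$ (using $\rho_{f_1}$ and prefix $1$) is correct with probability $\tfrac78$. Hence $\E[\sud^{1,1}(\sA,\secpar)]=\tfrac74=\tfrac32+\tfrac14$, exceeding the $n+k/2+\negl=\tfrac32+\negl$ threshold by a constant, so $\mathsf{U}^{\mathsf{ext}}$ is not $\sud^{1,1}$ secure already at length two and $\mathsf{U}$ is not $\sud^{1,1}$ extendable.

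The only genuinely non-trivial ingredient is the claim that $\sqcp_{\bbf'}$ inherits weak security, i.e. that the ability to split, while real, does not help in the \swqcp game whose payoff counts only exact evaluations at a \emph{single} sampled input; this is exactly what \cref{app:split} establishes, so I would simply cite it. Everything else reduces to the elementary prefix-probability calculation above and a direct appeal to \cref{thm:con_1bit_dcs_cca1}. I would finally note that the same construction and attack prove the \swqcp / \sobudqcpa version verbatim, since the attack uses only $n=k=1$ while the positive claim scales with whatever security of $\sqcp_{\bbf'}$ is assumed.
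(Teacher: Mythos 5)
Your proposal is correct and follows essentially the same route as the paper's proof: the same split construction $\bbf'$ with $\sqcp_{\bbf'}.\ptect(f_0,f_1)=\rho_{f_0}\otimes\rho_{f_1}$ (whose \swqcpnk security is inherited via the splitting-attack analysis of \cref{app:split}), followed by the identical prefix-based attack with the $\tfrac34+\tfrac14\cdot\tfrac12=\tfrac78$ per-distinguisher success probability and constant advantage $\tfrac14$. The only difference is cosmetic: the paper additionally passes the scheme through the signature transformation of \cref{con:dcs_cca2_res} to conclude the stronger statement that even a \sobudqccaank secure scheme can fail to be $\sud^{1,1}$ extendable, whereas your direct instantiation of \cref{con:dcs_cca1_res} already suffices for the proposition as stated.
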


\begin{proof}
    The splitting attack described in \cref{sssec:qcp_split} shows that the existence of a \swqcpnk secure copy protection scheme for some \bbf implies the existence of a scheme which is also \swqcpnk secure, but with the property that each protected copy $\rho$ could be split into two states $\rho_0,\rho_1$ such that $\rho_b$ could be used to evaluate the underlying \bbf on inputs starting with $b$. Call that scheme \sqcp.
    
    By using this scheme to instantiate \cref{con:dcs_cca1_res} and then applying the transformation of \cref{con:dcs_cca2_res} to the result we obtain a scheme which is \sobudqccaank secure due to \cref{thm:con_1bit_dcs_cca1} and \cref{thm:dcs_cca2_res}. Let \ud be the scheme obtained by extending this scheme to two bit messages as described in \cref{def:extend}.
    
    We claim that this scheme is not $\ud^{1,1}$ secure. To see this, consider the following adversary $\sA = (\sP,\sD_0,\sD_1)$ to a version of the $\ud^{1,1}_\ud$ modified so that the challenge ciphers must be of length $2$ (note that we unusually named the distinguishers $\sD_0$ and $\sD_1$ rather than $\sD_1$ and $\sD_2$ for ease of notation):
    \begin{itemize}
        \item After being given $\rho$ from $\sC$, $\sP$ splits it to $\rho_0$ and $\rho_1$, and gives $\rho_b$ to $\sD_b$
        \item $\sD_b$ makes a challenge query on the plaintexts $m_\beta = \beta\|\beta$ for $\beta=0,1$. Recall that the ciphertext given to $\sD_b$ is of the form $(c_1,c_2)$ where $c_i = (r_i,\beta_i\oplus \bbf.\eval_\sk(r_i),\sigma)$, though $\sigma$ is currently irrelevant. Divide into cases:
        \begin{itemize}
            \item If $r_i$ starts with $b$ for some $i$ then $\sD_b$ uses $\rho_b$ to calculate $\beta_i \oplus \bbf.\eval_\sk(r_i)$ and outputs the result.
            \item Else, $\sD_b$ outputs a uniformly random bit.
        \end{itemize}
    \end{itemize}
    
    Note that since $r_i$ is uniformly random, the first case happens with probability $3/4$, and in this case $\sD_b$ outputs the correct answer with certainty. The second case has probability $1/4$ and then $\sD_b$ responds correctly with probability $1/2$. All and all, the probability that $\sD_b$ answers correctly is $\frac{3}{4} + \frac{1}{4}\cdot \frac{1}{2} = \frac{7}{8}$, so the expected number of distinguishers which answered correctly is $2\cdot\frac{7}{8} = 1 + \frac{1}{2} + \frac{1}{4}$, so that $\sA$ has a constant advantage of $\frac{1}{4}$.
\end{proof}

The good news, however, is that for \dcsobcca to be \sudqcpank extendable it suffices to require that the underlying copy protection scheme is \sfdqcpnk secure (recall \cref{def:fdqcp}).

\begin{proposition}\label{thm:dcs_cca1}
    Let \dcsobcca be obtained from instantiating \cref{con:dcs_cca1_res} with a \sfdqcpnk (resp. \sfdqcp) secure copy protection scheme (see  \cref{def:fdqcp}), then \dcsobcca is \sudqcpank (resp. \sudqcpa) extendable (see \cref{def:extend}).
\end{proposition}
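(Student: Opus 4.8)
The plan is to reduce the \sudqcpank security of the bit-by-bit extension $\udext$ of \dcsobcca (\cref{def:extend}) directly to the \sfdqcpnk security of the underlying copy protection scheme $\cp_\bbf$, in the same spirit as the proof of \cref{thm:con_1bit_dcs_cca1} but exploiting the additional strength of flip detection. First I would invoke \cref{thm:qcp-ria} to replace \sfdqcpnk by the equivalent \sfdqcpriank notion, so that I may give the freeloaders (and, as in the proof of \cref{thm:con_1bit_dcs_cca1}, the pirate) access to the random-input oracle $\sR(f)$. Given a \sudqcpank adversary $\sA=(\sP,\sD_1,\ldots,\sD_{n+k})$ against $\udext$, I would construct a \sfdqcpriank adversary $\sA'=(\sP',\sF_1,\ldots,\sF_{n+k})$: the pirate $\sP'$ runs $\sP$ on the $n$ copies $\rho_f^{\otimes n}$ it is handed, answering $\sP$'s encryption queries through $\sR(f)$ exactly as the pirate does in the proof of \cref{thm:con_1bit_dcs_cca1} (forming each ciphertext component $(r,m_j\oplus f(r))$ from a pair $(r,f(r))$ returned by $\sR(f)$), and finally passes to each freeloader $\sF_i$ the state $\sigma_i$ together with the plaintext pair $(m_0^i,m_1^i)$ that $\sP$ produced.

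The heart of the argument is the way each freeloader $\sF_i$ manufactures its challenge ciphertext for $\sD_i$ out of its two oracles. Write $L=|m_0^i|=|m_1^i|$ and let $S$ be the set of coordinates on which $m_0^i$ and $m_1^i$ disagree. For a coordinate $j\notin S$ the plaintext bit $v=m_0^i[j]=m_1^i[j]$ does not depend on the secret bit, so $\sF_i$ draws $(r_j,f(r_j))$ from $\sR(f)$ and sets the $j$-th component to $(r_j,v\oplus f(r_j))$. For $j\in S$ the correct plaintext bit is $b_i$ when $m_0^i[j]=0$ and $\overline{b_i}$ when $m_0^i[j]=1$; in either case $\sF_i$ queries the flip-detection oracle $\sO_{f,b_i}$ to obtain $(r_j,f(r_j)\oplus b_i)$ and outputs it directly in the first case or with its masking bit flipped, $(r_j,f(r_j)\oplus b_i\oplus 1)$, in the second. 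A one-line check shows that in every case the component equals $(r_j,m_{b_i}^i[j]\oplus f(r_j))$ with a fresh uniform $r_j$, so the assembled ciphertext is distributed exactly as $\udext.\enc_\sk(m_{b_i}^i)$. The two points that make this work are that all coordinates in $S$ carry the \emph{same} secret bit $b_i$, so a single flip-detection oracle $\sO_{f,b_i}$ suffices, and that we mask with genuine values of $f$ rather than a freshly sampled bit, so --- unlike in \cref{thm:con_1bit_dcs_cca1} --- no balancedness approximation is needed. $\sF_i$ then runs $\sD_i(\sigma_i,c_i)$, answering its encryption queries through $\sR(f)$ as above, and outputs $\sD_i$'s verdict $\beta_i$ as its own guess $b_i'$.

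It then remains to observe that the view of $\sD_i$ inside $\sF_i$ is identical to its view in the genuine $\sudqcpank_{\udext}$ game, so $b_i'=\beta_i$ agrees with $b_i$ exactly when $\sD_i$ answers correctly; consequently the number of freeloaders that win equals the number of distinguishers that win, giving $\E[\sfdqcpriank_{\cp}(\sA',\secpar)]=\E[\sudqcpank_{\udext}(\sA,\secpar)]$. The \sfdqcpriank security guarantee $\E[\sfdqcpriank_{\cp}(\sA',\secpar)]\le n+k/2+\negl$ then yields the claimed bound, and the unrestricted statement follows by quantifying over all $n,k=\poly$. I expect the only fiddly point to be the coherent simulation of the variable-length encryption oracle on superposition queries --- the same technicality already present in \cref{thm:con_1bit_dcs_cca1}, now slightly aggravated by messages of differing lengths --- whereas the genuinely new and load-bearing step is the ciphertext construction above: binding every disagreeing coordinate to the single secret bit $b_i$ is exactly what flip-detection security affords and exactly what the splitting attack of \cref{thm:ext_fail} is powerless against.
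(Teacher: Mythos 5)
Your proposal is correct and takes essentially the same route as the paper's proof: invoke \cref{thm:qcp-ria} to pass to \sfdqcpriank, simulate the quantum encryption oracle exactly using $\sR(f)$, and manufacture the challenge ciphertext inside each freeloader so that the hidden flip bit $b_i$ of $\sO_{f,b_i}$ selects which of the two plaintexts is encrypted, making the count of winning freeloaders equal the count of winning distinguishers with no security loss. The only difference is organizational: the paper first argues one may assume without loss of generality that $m_0^i=\overline{m_1^i}$ (splicing encryption-oracle ciphertexts into the agreeing coordinates at the game level) and then runs the reduction for complementary pairs, whereas you handle arbitrary pairs directly by splitting coordinates into agreeing ones (fed by $\sR(f)$) and disagreeing ones (fed by $\sO_{f,b_i}$) --- the two devices are interchangeable, and your remark that the reduction is exact, with no balancedness approximation needed, matches the paper's analysis.
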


\begin{proof}
    Let \sqcp be the copy protection scheme underlying \dcsobcca, and let \dcscpa be the result of applying the transformation of \cref{def:extend} to \dcsobcca. Recall that by \cref{thm:qcp-ria} and the \sfdqcpnk security of \sqcp it follows that \sqcp is also \sfdqcpriank secure (see \cref{def:qcp_ria}).
    
    Let $\sA=(\sP,\sD_1,\ldots,\sD_{n+k})$ satisfy that 
    $$\E\left[\sudqcpank_{\dcscca}(\sA)\right] = n + k/2 + \epsilon\text{,}$$ we construct $\sA'=(\sP',\sF'_1,\ldots,\sF'_{n+k})$ such that 
    $$\E\left[\sfdqcpriank_{\sqcp}(\sA')\right] = n + k/2 + \epsilon\text{.}$$ The \sfdqcpriank security of $\sqcp_\bbf$ will then imply that $\epsilon = \negl$ as needed.
    
    We first note that we can assume without loss that $m_0^i = \overline{m_1^i}$ for $i=1,\ldots,n_k$ (recall that $(m_0^i,m_1^i)$ is the $i$th pair of plaintexts sent to $\sC$ in \ref{step:dcs_chal} of \cref{def:dcs-indx}). This holds because by the very definition of \dcscca, each bit is encrypted independently of the rest of the bits. Assume for example that the first bit of both messages is $0$, let $\tilde{m}_b$ be obtained from $m_b$ by removing the first bit. Instead of sending $m_0,m_1$ in the challenge query, $\sD_i$ could:
    \begin{itemize}
        \item query the encryption oracle on input $0$ to obtain some cipher $c_1$,
        \item make a challenge on $\tilde{m}_0,\tilde{m}_1$ to obtain a cipher $(c_2,\ldots,c_\ell)$,
        \item run as before on the cipher $c=(c_1,\ldots,c_\ell)$.
    \end{itemize}
    The cipher $c$ obtained as above distributes exactly the same as the output of a challenge query on $m_0,m_1$. This process could be done in parallel for any bit that is the same in both messages, proving that an adversary restricted such that $m_0 = \overline{m_1}$ can achieve the same advantage as a general adversary. 
    
    We can therefore assume that $\sP$ outputs for each distinguisher a single message $m^i$, and that $\sD_i$ is given $c_b$ where $c_0$ is the encryption of $m^i$ and $c_1$ is the encryption of $\overline{m^i}$. $\sD_i$ then has to guess $b$.
    
    Assume $\sA = (\sP,\sD_1,\ldots,\sD_{n+k})$ is an adversary for the $\sudqcpank_{\dcscpa}$ game modified as described in the previous paragraph. We construct an adversary $\sA'= (\sP',\sF'_1,\ldots,\sF'_{n+k})$ for the $\sfdqcpriank_\sqcp$ game:
    \begin{itemize}
        \item after being given $\rho^{\otimes n}$ from $\sC$, $\sP'$:
        \begin{itemize}
            \item simulates $\sP(\rho^{\otimes n})$,
            \item responds to encryption calls on plaintexts of length $\ell$ by querying the random input oracle $\ell$ times to obtain $(r_i,b_i)$ for $i=1,\ldots,\ell$ and applying the unitary $$\ket{m,y}\mapsto\ket{m,y\oplus ((r_1,m_1\oplus b_1),\ldots,(r_\ell,m_\ell\oplus b_\ell))}$$ to the input,
            \item resumes the simulation until obtaining the states $\sigma_I$, and plaintexts $m^i$
            \item provides $\sigma_i\otimes m^i$ to $\sF'_i$.
        \end{itemize}
        \item after being given $\sigma_i'$, the freeloader $\sF'_i$:
        \begin{itemize}
            \item $\sF'_i$ queries $\sC$ on the bits of $m^i$ in order to obtain a sequence of pairs $$c_i = ((r_1,\beta_1),\ldots,(r_\ell,\beta_\ell))\text{,}$$
            \item simulates $\sD_i(\sigma_i)$ with $c_i$ as the cipher,
            \item responds to encryption queries exactly the same way as $\sP'$,
            \item resumes simulation of $\sD_i$ until obtaining an output $b_i$, and
            \item outputs $b_i$.
        \end{itemize}
    \end{itemize}
    
    The view of $\sD_i$ is exactly the same in the modified $\sudqcpank$ game and the simulation above, so they (and thereby $\sF_i$) output $b$ with the same probability. Hence it suffices to show that $b$ is the correct output for $\sD_i$ iff it is the correct output for $\sF'_i$.
    
    Indeed, if $\sC$ outputs pairs of the form $(r_i,\bbf.\eval_\sk(r_i))$ then the correct output for $\sF_i'$ is $0$. In this case $$\begin{aligned} & ((r_{1},\beta_{1}\oplus m_{1}),\ldots,(r_{\ell},\beta_{\ell}\oplus m_{\ell}))\\
= & ((r_{1},\bbf.\eval_\sk(r_1)\oplus m_{1}),\ldots,(r_{\ell},\bbf.\eval_\sk(r_\ell)\oplus m_{\ell}))\\
= & \dcscca.\enc_\sk(m)
\end{aligned}
$$
    so the correct output for $\sD_i$ is also $0$. The argument for the second case is identical.
\end{proof}

\subsection{\textsf{UD-qCCA2} Security}\label{sssec:ind_cca2}

Obtaining $\sudqccaa$ security requires a bit more care. The bit by bit approach does not prohibit basic manipulations of the ciphertext such as truncating, rearranging, or combining several qubits. This allows the adversary to slightly manipulate the challenge cipher such that the decryption of the new cipher completely reveals which message was encrypted in the challenge phase. Such attacks demonstrate that no scheme could be $\sudqccaa^{1,1}$ extendable.

This is overcome by signing a document containing the cipher and some metadata: a unique serial number sampled uniformly at random, the plaintext length, and the encrypted bit's location within the plaintext. These data make it infeasible to truncate, rearrange or combine ciphers to generate new ciphers. This allows us to generically transform a \sudqcpank extendable scheme (such as the scheme described in \cref{thm:dcs_cca1}) to a \sudqccaank secure scheme.

\begin{construction}[\sudqccaa Uncloneable Decryptors from \sudqcpa extendable Uncloneable Decryptors and \sseufcma Secure Signatures]\label{con:dcs_cca2}
    Let \dcsobcca be an uncloneable decryptors bit encryption scheme, and let \ds be a deterministic digital signature scheme.
    
    Let $\dcsccaa$ be the following scheme:
    \begin{itemize}
        \item $\dcsccaa.\keygen(\secparam)$ outputs $(\sk_\ud,\sk_\ds,\pk_\ds)$ where:
        \begin{itemize}
            \item  $\sk_\ud\gets \dcsobcca.\keygen(\secparam)$, and
            \item $\sk_\ds,\pk_\ds\gets \ds.\keygen(\secparam)$.
        \end{itemize}
        
        \item $\dcsccaa.\decgen(\sk_\ud,\sk_\ds,\pk_\ds)\equiv \dcsobcca.\decgen(\sk_\ud)$.
        \item $\dcsccaa.\enc_{(\sk_\ud,\sk_\ds,\pk_\ds)}(m)\to (r,(c_1,s_1),\ldots,(c_{|m|},s_{|m|}))$ where:
        \begin{itemize}
            \item $r\gets\{0,1\}^\secpar$,
            \item $c_i \gets \dcsobcca.\enc_{\sk_\ud}(m_i)$, and
            \item $s_i \gets \ds.\sign_{\sk_\ds}(c_i,|m|,i,r)$.
        \end{itemize}
        \item $\dcsccaa.\dec_{(\sk_\ud,\sk_\ds,\pk_\ds)}((r,(c_1,s_1),\ldots,(c_\ell,s_\ell))$ outputs
            $$\begin{cases}
                b_1\|\ldots\|b_\ell & \comma{\forall i=1,\ldots,\ell}{\ds.\ver_{\pk_\ds}((c_i,\ell,i,r),s_i) = 1}\\
                \bot & \mbox{else} \\
            \end{cases}$$ where $b_i = \dcsobcca.\dec_{\sk_\ud}(c_i)$.
        \item $\dcsccaa.\qdec(\rho\otimes \pk_\ds,(r,(c_1,s_1),\ldots,(c_\ell,s_\ell))$ outputs
            $$\begin{cases}
                b_1\|\ldots\|b_\ell & \comma{\forall i=1,\ldots,\ell}{\ds.\ver_{\pk_\ds}((c_i,\ell,i,r),s_i) = 1}\\
                \bot & \mbox{else} \\
            \end{cases}$$ where $b_i = \dcsobcca.\qdec(\rho,c_i)$.
    \end{itemize}
\end{construction}

\begin{proposition}\label{thm:udccaank}
    If \dcsobcca is \sudqcpank (resp. \sudqcpa) extendable (see \cref{def:extend}) and \ds is \sseufcma secure (see \cref{def:euf-qcma}) then the scheme \dcsccaa of \cref{con:dcs_cca2} is \sudqccaank (resp. \sudqccaa) secure (see \cref{def:dcs-indx}).
\end{proposition}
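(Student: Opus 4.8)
The plan is to generalise the reduction behind \cref{thm:dcs_cca2_res}, this time reducing to the \sudqcpank security of the bit-by-bit extension \dcscpa of \dcsobcca (which is exactly what \sudqcpank extendability supplies) and using the \sseufcma security of \ds to simulate the pre- and post-challenge decryption oracles by table look-up. Given an adversary $\sA=(\sP,\sD_1,\ldots,\sD_{n+k})$ against $\sudqccaank_{\dcsccaa}$, I would build $\sA'=(\sP',\sD'_1,\ldots,\sD'_{n+k})$ against $\sudqcpank_{\dcscpa}$ with
$$\E\left[\left|\sudqccaank_{\dcsccaa}(\sA,\secpar)-\sudqcpank_{\dcscpa}(\sA',\secpar)\right|\right]\le\negl\text{,}$$
so that the claim follows from the \sudqcpank security of \dcscpa. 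As in \cref{thm:dcs_cca2_res}, I would assume without loss of generality (via \cref{app:decouple}) that the underlying bit scheme is decoupled.

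First I would describe $\sP'$. It samples $(\sk_\ds,\pk_\ds)\gets\ds.\keygen(\secparam)$, gives $\sP$ the \dcsccaa decryptors formed from the $\rho_i$ together with $\pk_\ds$ (the $\decgen$ procedure being essentially unchanged), and simulates $\sP$, forwarding $\sP$'s plaintext pairs unchanged to $\sC$. A length-$\ell$ encryption query is answered by querying $\sP'$'s own encryption oracle on the classical single-bit inputs $0$ and $1$ at each position $i$ to get $c_{i,0},c_{i,1}$, sampling a serial $r\gets\zo^\secpar$, computing $s_{i,b}\gets\ds.\sign_{\sk_\ds}(c_{i,b},\ell,i,r)$, recording each tuple $(c_{i,b},\ell,i,r,s_{i,b},b)$ in a list $L_\sP$, and applying the splicing unitary $\ket{m,y}\mapsto\ket{m,y\oplus(r,(c_{1,m_1},s_{1,m_1}),\ldots,(c_{\ell,m_\ell},s_{\ell,m_\ell}))}$; decoupling guarantees this is computationally indistinguishable from a genuine \dcsccaa encryption. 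Pre-challenge decryption queries (the oracle $\sO_1=\dec_\sk$) are answered by the unitary that returns the recorded plaintext when every component of the queried cipher sits in $L_\sP$ with matching metadata and signature, and $\bot$ otherwise. Finally $\sP'$ forwards each $\sigma_i$ together with $\pk_\ds,\sk_\ds$ and $L_\sP$ to $\sD'_i$.

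Next, each $\sD'_i$ receives the \dcscpa challenge $(\tilde c_1,\ldots,\tilde c_\ell)$ from $\sC$, samples a fresh serial $r^*$, computes $s^*_j\gets\ds.\sign_{\sk_\ds}(\tilde c_j,\ell,j,r^*)$, and runs $\sD_i$ on the wrapped challenge $C_i=(r^*,(\tilde c_1,s^*_1),\ldots,(\tilde c_\ell,s^*_\ell))$, which is distributed exactly like a real \dcsccaa challenge. It answers $\sD_i$'s encryption queries exactly as $\sP'$ did (extending a private list $L_i$ initialised to $L_\sP$) and its post-challenge decryption queries (the oracle $\sO_2=\dec_\sk\setminus C_i$) by the same look-up into $L_i$, additionally returning $\bot$ on $C_i$. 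Since $\sD'_i$ outputs $\sD_i$'s bit, the two game values coincide whenever the simulated oracles agree with the real ones.

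The crux -- and the step I expect to be hardest -- is showing that the look-up decryption oracle stays faithful up to negligible error even after the challenge. The real and simulated oracles can differ on a query $(r,(d_1,t_1),\ldots,(d_{\ell'},t_{\ell'}))$ only if some component verifies ($\ds.\ver_{\pk_\ds}((d_j,\ell',j,r),t_j)=1$) yet is absent from $L_i$. Because each signature binds the serial, length and index, such a component is either (i) a challenge component, forcing $r=r^*$, $\ell'=\ell$ and position $j$ -- and since no encryption query reuses the challenge serial $r^*$ except with negligible probability, any cipher that verifies on serial $r^*$ must reuse a challenge component at every position, i.e.\ equal $C_i$, on which both oracles output $\bot$ -- or (ii) a validly signed message never submitted to $\ds.\sign$, i.e.\ an \sseufcma forgery. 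I would bound (ii) as in \cref{thm:dcs_cca2_res}: let $S$ be the maximum over the decryption queries of the probability that measuring the query's first register yields a verifying, unrecorded, non-challenge component, and construct an \sseufcma adversary $\sB$ that runs the entire simulation with $\ds.\sign$ replaced by signing-oracle calls, guesses which distinguisher and which query first produces the forgery, measures that query and outputs the offending tuple, obtaining $\prob{\sseufcma_\ds(\sB,\secpar)=1}\ge S/\poly$; \sseufcma security then forces $S\le\negl$. Combining the negligible decoupling error, the negligible serial-collision probability and $S\le\negl$ gives the displayed bound and hence \sudqccaank security; the identical argument for every $n,k=\poly$ yields the \sudqccaa statement.
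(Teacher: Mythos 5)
Your proposal follows essentially the same route as the paper's proof: reduce to the \sudqcpank security of the bit-by-bit extension \dcscpa, simulate encryption queries by splicing bitwise oracle answers signed together with the serial/length/index metadata into a recorded list, simulate decryption queries by table look-up, wrap the \dcscpa challenge with a fresh serial and signatures, and bound the simulation error via an \sseufcma forger that guesses a distinguisher and a query index and measures that query. If anything, you are slightly more careful than the paper on two points it glosses over --- explicitly invoking decoupling to justify that the spliced encryption oracle is faithful on superposition queries, and explicitly ruling out non-challenge ciphers that reuse the challenge serial $r^*$ --- so the proposal is correct and matches the paper's argument.
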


\begin{proof}
    Let \dcscca be the bit by bit extension of \dcsobcca to an unrestricted scheme as described in \cref{def:extend}. By hypothesis, this scheme is \sudqcpank secure.
    
    Let $\sA=(\sP,\sD_1,\ldots,\sD_{n+k})$ be an adversary to the $\sudqccaank_{\dcsccaa}$ game. We construct an adversary $\sA'=(\sP',\sD'_1,\ldots,\sD'_{n+k})$ such that
    $$\E\left[\left|\sudqccaank_{\dcsccaa}(\sA,\secpar) - \sudqcpank_{\dcscpa}(\sA',\secpar)\right|\right] < \negl \text{.}$$ The \sudqcpank security of \dcscpa will then imply the \sudqccaank security of \dcsccaa.
    
    After being given $\rho^{\otimes n}$ from $\sC$, the pirate $\sP'$:
    \begin{itemize}
        \item generates $\sk_\ds,\pk_\ds\gets \ds.\keygen(\secparam)$.
        \item simulates $\sP((\rho\otimes\pk_\ds)^{\otimes n})$, responding to an encryption queries of length $\ell$ by:
        \begin{itemize}
            \item using her encryption oracle $\ell$ times on input $0$ and $\ell$ times on input $1$ to obtain $\{c_{j,0},c_{j,1}\}_{j=1}^\ell$,
            \item sampling $r\gets\zo^\secpar$,
            \item computing $s_{j,b} \gets \ds.\sign_{\sk_\ds}(c_{j,b},\ell,j,r)$,
            \item storing $(s_{j,b},c_{j,b},b,\ell,r)$ to a list $L_\sP$, and
            \item applying the unitary $$\ket{m,y}\mapsto \ket{m,y\oplus(r,(c_{1,m_1},s_{1,m_1}),\ldots,(c_{\ell,m_\ell},s_{\ell,m_\ell}))}$$ to the input,
        \end{itemize}
        and decryption queries by applying the unitary $\ket{c,y} \mapsto \ket{c,t\oplus f(c)}$ where 
        $$f((r,(c_1,s_1),\ldots,(c_\ell,s_\ell))) = $$ $$
        \begin{cases} 
            b_1\|\ldots\|b_\ell & (c_1,s_1,b_1,\ell,r),\ldots,(c_\ell,s_\ell,b_\ell,\ell,r)\in L_\sP\\
            \bot  & \mbox{otherwise}
        \end{cases}\text{.}$$ (we implicitly assume that $f$ also returns $\bot$ on strings which are not of the required form.) Note that $f$ is well defined since it is impossible that $c_j$ is a cipher of both $0$ and $1$.
        \item when the simulation has ended, $\sP$ outputs for each $i$ the state $\sigma_i$ and a pair $(m^i_0,m^i_1)$ of plaintexts of length $\ell_i$, $\sP'$ gives each $\sD'_i$ the state $\sigma_i\otimes \pk_\ds \otimes \sk_\ds \otimes L_\sP$, and transmits the plaintext pairs to $\sC$.
    \end{itemize}
    
    The distinguisher $\sD'_i$ is given from $\sC$ a cipher of the form $c^i_1,\ldots,c^i_{\ell_i}$. She samples $r\gets \zo^\secpar$ and for $j=1,\ldots,\ell_i$ computes $s_i \gets \ds.\sign_{\pk_\ds}(c^i_j,\ell,j,r)$. She then simulates $\sD_i(\sigma_i)$ with $(r,(c_1,s_1),\ldots,(c_{\ell_i},s_{\ell_i})$ as the challenge cipher.
    
    Each distinguisher $\sD'_i$ simulates $\sD_i(\sigma_i\otimes \pk_\ds)$. $\sD'_i$ answers encryption calls the same way $\sP'$ did, storing signature-cipher-plaintext-length-nonce quintuplets into a list $L_i$ which we assume is initially a copy of $L_\sP$. She answers decryption calls the same way as well, only using the list $L_i$ rather than $L_\sP$ in the definition of $f$.

    Each distinguisher $\sD'_i$ simulates $\sD_i(\sigma_i\otimes \pk_\ds)$. $\sD'_i$ answers oracle calls the same way $\sP'$ did, storing signature-cipher-plaintext-length-noce quintuplets into a list $L_i$ which we assume is initially a copy of $L_\sP$.
    
    When $\sD$ makes a challenge query on plaintexts $m_0,m_1$ of length $\ell$ (that is, the simulation arrives at \cref{step:dcs_chal} in \cref{def:dcs-indx}), $\sD'$ forwards $m_0,m_1$ to $\sC$ as a challenge query to obtain a cipher of the form $c_1,\ldots,c_\ell$. She then samples $r\gets\zo^\secpar$ and responds to the challenge query made by $\sD$ with $(r,(c_1,s_1),\ldots,(c_\ell,s_\ell))$ where $s_j = \ds.\sign_{\sk_\ds}(c_j,\ell,j,r)$.
    
    We argue that if the inequality $$\E\left[\left|\sudqccaank_{\dcsccaa}(\sA,\secpar) - \sudqcpank_{\dcscpa}(\sA',\secpar)\right|\right] < \negl$$ does not hold, then it is possible to create an adversary which wins the $\sseufcma_\ds$ game (recall \cref{def:euf-qcma}) with non-negligible probability, by way of contradiction.
    
    As explained in the proof of \cref{thm:dcs_cca2_res}, if the actual and simulated views of $\sA$ by the end of the game are not statistically close, then at some point $\sA$ must have made a decryption query on an input whose result on the actual oracle is significantly different than on the simulated oracle. This can only happen if the input to the decryption query is significantly supported on legitimate ciphers which were not generated as a response to encryption calls and are not the challenge cipher. As a consequence, by measuring a random query in the computational basis, we obtain such a cipher with a non-negligible probability. The analysis is identical to the proof of \cref{thm:dcs_cca2_res}, so we do not repeat it here.
    
    It remains to explain why such a cipher necessarily contains a fresh signed document. Assume $c=(r,(c_1,s_1),\ldots,(c_\ell,s_\ell))$ is such a ciphertext. Then for any $i=1,\ldots,\ell$ it holds that $\ds.\ver_{\pk_\ds}(c_i,\ell,i,r) = 1$
    
    We assume that $\sA'$ never responds to encryption queries with two ciphers with the same nonce $r$. In practice, this only holds up to negligible probability (if this does happen, and the two ciphers happen to be encryption of plaintexts of the same size, then $\sA$ can indeed create a legitimate cipher which was not the output of any encryption query without creating any fresh signatures).
    
    Consider the following cases:
    \begin{itemize}
        \item One of the ciphers which $\sA'$ output as response to an encryption query is of the form $c'=(r,(c_1',s_1'),\ldots,(c'_{\ell'},s'_{\ell'}))$ (that is, $c$ and $c'$ have the same nonce $r$), then
        \begin{itemize}
            \item If $\ell\ne \ell'$ then it holds for any $i$ that the message $(c_i,\ell,i,r)$ was never signed by $\sA'$ so $((c_i,\ell,i,r),s_i)$ is a fresh signed document.
            \item Else, there exists some $i$ such that $c_i\ne c'_i$ or $s_i \ne s'_i$ then $((c'_i,\ell,i,r),s_i)$ is a fresh signed document (note that here we use the fact that \ds is \sseufcma secure and not just \textsf{EUF-CMA} secure. \textsf{EUF-CMA} security does not prevent the adversary to create a different signature for the same message whereby modifying query outputs to create legitimate ciphertexts).
        \end{itemize}
        \item Else, each pair of the form $((c_i,\ell,i,r),s_i)$ is a fresh signed document.
    \end{itemize}
\end{proof}
\section{Open Questions}\label{sec:disc}

Our treatment raises several questions. We list some of them here:

\paragraph{Can \textsf{FLIP-QCP} security be obtained from \textsf{WEAK-QCP} security?} We have shown that \swqcp security does not imply \sfdqcp security. Is it possible to generically transform a \swqcp secure scheme into a \sfdqcp secure scheme? Such a transformation will show that our unrestricted length schemes could be instantiated from \swqcp security.

\paragraph{Can our construction be made more generic?} Is there a generic transformation of \sudqcpa uncloneable decryptors into \sudqcca secure uncloneable decryptors, perhaps utilizing additional primitives? In particular, does \sudqx security imply \sudqx extendability? A positive answer to the latter would imply that \sudqccaa security can be generically obtained from \sudqcpa security and \sseufcma digital signatures.

\paragraph{Semantic security notions?} Our security notions extend the notions of indistinguishability of ciphertexts security for encryption schemes. It is known that in the setting of encryption schemes, the indistinguishability of ciphertexts is equivalent to a more natural concept of security known as \emph{semantic definition} which better captures the adversary's limited ability to learn \emph{anything} about a plaintext given its encryption. Our definition would be better established if we could provide a semantic security definition and prove its equivalence to the current definition. 

There are also several natural directions into which the current work could be extended:
\begin{itemize}
    \item Public key encryption: could the notion of \sudqccaa be extended to the public key setting? The authors of \cite{ALLZZ20} consider single decryptors in the public key setting where the adversary has no oracle access, which amounts to an asymmetric notion of \sudqcpa.
    \item Quantum challenge queries: our security definitions extend the \sindqx security notions of symmetric encryption afforded by \cite{BZ13}. In these definitions, oracle queries might be in superposition, but the challenge query is classical. The works of \cite{GHS16} and \cite{CEV20} afford notions of \sqindqcpa and \sqindqccaa where the challenge phase is also in superposition. It is interesting whether their definitions could be extended to stronger security notions for uncloneable decryptors and whether such decryptors could be constructed.
    \item Quantum encryption: our syntax, security, and constructions are suitable for encrypting classical plaintexts. The authors of \cite{ABF+16} extend the \sindcx security notions to the setting where the plaintexts are arbitrary quantum states. Can these definitions be extended to support uncloneable decryptors?
\end{itemize}


\ifnum\anon=0
\fi

\ifnum\sigconf=1
    \bibliographystyle{ACM-Reference-Format}
\else
    \bibliographystyle{alphaabbrurldoieprint}
\fi

\ifnum\masterthesis=0
    \bibliography{main}
\fi

\appendix
\ifnum\shownomenclature=1
\printnomenclature[1in]
\fi
\setcounter{tocdepth}{0}
\section{Splitting Attack on Weak \textsf{QCP}}\label{app:split}

In \cref{sssec:qcp_split} we informally described a splitting attack against $\swqcp^{1,1}$ secure copy-protection scheme. In this appendix we formalize this attack. That is, we show how to transform a $\swqcp^{1,1}$ secure copy-protection scheme for \bbf to a new scheme which is $\swqcp^{1,1}$ secure (though for a slightly different \bbf) but splittable. Our attack is easily extendable to \swqcpnk security.

Let $\sqcp_\bbf$ be a $\swqcp^{1,1}$ secure copy-protection scheme for the balance binary function $\bbf$.

We define a new binary function $\bbf'$ as follows:
\begin{itemize}
    \item $\bbf'.\samp(\secparam)$: invoke $\bbf.\samp(\secparam)$ twice to obtain two functions $f_0,f_1$, output $(f_0,f_1)$,
    \item $\bbf'.\eval((f_0,f_1), b\|x)$: output $\bbf.\eval(f_b,x)$.
\end{itemize}

It is trivial to see that $\bbf'$ is also balanced.

We define the following copy-protection scheme $\sqcp'.\bbf'$:
\begin{itemize}
    \item $\sqcp'.\ptect(f_0,f_1)$: output $\rho_{(f_0,f_1)} =\rho_0\otimes \rho_1$ where $\rho_b \gets \sqcp.\ptect(f_b)$,
    \item $\sqcp'.\qeval(\rho_{(f_0,f_1)},b\|x)$: output $\sqcp.\qeval(\rho_b,x)$
\end{itemize}

Note that this scheme affords a very simple splitting attach. The state $\rho_{(f_0,f_1)}$ is already given as a tensor product of the states $\rho_{f_b}$, each of which could be used to evaluate $f$ on points whose first bit is $b$. It remains to show that the new scheme $\sqcp'$ is also $\swqcp^{1,1}$ secure, from which will follow that weak copy-protection security does not prohibit splitting attacks.

\begin{claim}
    $\sqcp'$ is a $\swqcp^{1,1}$ secure copy-protection scheme for $\bbf'$.
\end{claim}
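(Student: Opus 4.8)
The plan is to establish $\swqcp^{1,1}$ security of $\sqcp'$ by a reduction to the $\swqcp^{1,1}$ security of the underlying scheme $\sqcp$. The structural observation that makes this work is that the $\swqcp^{1,1}$ game of \cref{def:wqcp} samples a \emph{single} input $b\|x$ and hands the \emph{same} input to both freeloaders; hence, although the protected program $\rho_0\otimes\rho_1$ packs two independent copy-protected functions, any single run of the game only ever queries the slot indexed by the leading bit $b$, and the other slot is pure decoy. This means we can embed a single challenge program from the $\sqcp$ game into whichever slot matches the leading input bit, and fill the remaining slot with a function we sample and protect ourselves.

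Concretely, given an adversary $\sA=(\sP,\sF_1,\sF_2)$ for $\swqcp^{1,1}_{\sqcp'}$, I would build $\sA'=(\sP',\sF'_1,\sF'_2)$ for $\swqcp^{1,1}_{\sqcp}$ as follows. Upon receiving the protected program $\rho_f$ from the challenger, the pirate $\sP'$ samples a bit $b^*\gets\zo$ and a fresh decoy $g\gets\bbf.\samp(\secparam)$, computes $\rho_g\gets\sqcp.\ptect(g)$, and assembles the two-register state $\tau$ that places $\rho_f$ in slot $b^*$ and $\rho_g$ in slot $1-b^*$. It then runs $\sP(\tau)$ to obtain $\sigma_1,\sigma_2$ and forwards $\sigma_i\otimes\ket{b^*}\bra{b^*}$ to freeloader $\sF'_i$. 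Each freeloader $\sF'_i$, on receiving $\sigma_i$ together with the recorded bit $b^*$ and the challenge input $x$, runs $\sF_i(\sigma_i,b^*\|x)$ and outputs whatever it returns.

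The core of the argument is to verify that this simulation reproduces the $\swqcp^{1,1}_{\sqcp'}$ game \emph{exactly}, and that a correct freeloader answer in the simulated game is also a correct answer in the real $\sqcp$ game. For the distribution: since $f,g$ are i.i.d.\ from $\bbf.\samp$ and $b^*$ is an independent uniform bit, the induced pair $(f_0,f_1)$ (with $f_{b^*}=f$, $f_{1-b^*}=g$) is distributed as an i.i.d.\ pair and is independent of $b^*$; thus $\tau$ is distributed as $\sqcp'.\ptect(f_0,f_1)$, and the presented input $b^*\|x$ is uniform over $\zo^{\ell+1}$, matching the implicit distribution of \cref{rem:bbf_dist} for $\bbf'$. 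For correctness: the value tested in the simulated $\sqcp'$ game is $\bbf'.\eval((f_0,f_1),b^*\|x)=\bbf.\eval(f_{b^*},x)=f(x)$, which is precisely the value the $\sqcp$ challenger checks against. Hence the number of freeloaders answering correctly is identical in the two games, giving $\E[\swqcp^{1,1}_{\sqcp}(\sA',\secpar)]=\E[\swqcp^{1,1}_{\sqcp'}(\sA,\secpar)]$, and the $\swqcp^{1,1}$ security of $\sqcp$ forces this quantity to be at most $1+\tfrac12+\negl$, as required.

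The step I expect to be the main (and essentially the only) subtlety is the distributional bookkeeping in the last paragraph: it is tempting to think the reduction must \emph{guess} the challenger's leading bit and therefore lose a factor of two, but because the freeloaders are fed an input of our own choosing, we get to \emph{dictate} the leading bit $b^*$ and align it with the slot carrying the genuine challenge program. Verifying that $b^*$ is independent of $(f_0,f_1)$ is exactly what makes the embedding lossless, so the reduction is tight and no advantage is sacrificed. The same reasoning extends verbatim to $\swqcp^{n,k}$ by embedding the $n$ challenge copies into slot $b^*$ of each of the $n$ packed programs and protecting fresh decoys in the complementary slots.
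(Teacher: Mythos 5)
Your proof is correct and follows essentially the same reduction as the paper: embed the challenger's program $\rho_f$ into a uniformly chosen slot $b^*$, fill the other slot with a freshly sampled and protected decoy, and have each freeloader run the $\sqcp'$ freeloader on the input $b^*\|x$, which makes the simulation exact and the reduction lossless. Your distributional bookkeeping (that $(f_0,f_1)$ is i.i.d.\ and independent of $b^*$, so no guessing factor of two is incurred) is precisely the point the paper relies on, and your writeup states it more carefully than the paper's own (whose description of which freeloader prepends $b$ to $x$ contains a notational slip).
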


\begin{proof}\label{claim:splitting}
    Let $\sA'=(\sP',\sF'_1,\sF'_2)$ be an adversary for the $\swqcp^{1,1}_{\sqcp'}$ game with advantage $\epsilon$. We need to show that $\epsilon = \negl$. To do so, we construct an adversary $\sA=(\sP,\sF_1,\sF_2)$ for the $\swqcp_{\sqcp}$ game with advantage $\epsilon$. It will then follow by hypothesis that $\epsilon = \negl$.
    
    The pirate $\sP$:
    \begin{itemize}
        \item samples a random bit $b$, let $\rho_b$ be the state afforded by $\sC$,
        \item samples $f\gets \bbf.\samp(\secparam)$ and $\rho_{1-b}\gets \sqcp.\ptect(f)$,
        \item starts simulating the $\swqcp_{\sqcp'}(\sA')$ game by affording the pirate $\sP'$ with the state $\rho_0\otimes \rho_1$ until they obtain from $\sP'$ the states $\sigma_1$ and $\sigma_2$,
        \item gives $\sigma_i \otimes b$ to $\sF_i$.
    \end{itemize}
    
    The freeloader $\sF_i(\sigma_i\otimes b,b\|x)$ simulates $\sF'_i(\sigma_i,x)$  until they obtain an output $y_i$, which she outputs herself.
    
    Note that $y_i = f_b(x)$ if and only if $y_i = \bbf'.\eval((f_0,f_1),b\|x)$, so the probability that $\sF_i$ evaluated $x$ correctly is exactly the same as the probability that $\sF'_i$ evaluated $b\|x$ correctly. However, by definition of $\bbf'$ and $\sA$ it follows that the view of $\sA'$ in the simulation above is identical to her view in the $\swqcp_{\sqcp'}(\sA')$ game, hence the advantage of $\sA$ for the $\swqcp_{\sqcp}$ game is also $\epsilon$.
\end{proof}

Finally, we use the splitting attack above to prove that the converse of \cref{lem:fdtow} is false:

\begin{lemma}\label{lem:fdfail}
    If there exists a $\swqcp^{1,1}$ secure copy-protection scheme, then there exists a copy-protection scheme which is $\swqcp^{1,1}$ secure but not $\sidqcp^{1,1}$ secure.
\end{lemma}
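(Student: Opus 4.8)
The plan is to use the very scheme $\sqcp'$ for $\bbf'$ constructed in the splitting attack above as the witness: it is $\swqcp^{1,1}$ secure by the preceding claim, and I will argue that it fails $\sidqcp^{1,1}$ security precisely because it is splittable. Recall that $\bbf'$ has input length $\ell+1$, that an input decomposes as $c\|x$ with $c\in\zo$ and $x\in\zo^\ell$, that $\bbf'.\eval((f_0,f_1),c\|x)=f_c(x)$, and that $\sqcp'.\ptect$ outputs $\rho_{(f_0,f_1)}=\rho_0\otimes\rho_1$ where $\rho_b$ suffices to evaluate $f_b$.

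Next I would construct a $\sidqcp^{1,1}$ adversary $\sA=(\sP,\sF_1,\sF_2)$ with expected payoff close to $2$. Given $\rho_0\otimes\rho_1$, the pirate $\sP$ simply splits the tensor factors, handing $\sigma_1=\rho_0$ to $\sF_1$ and $\sigma_2=\rho_1$ to $\sF_2$. Writing $c_i=i-1$ (so $c_1=0$ and $c_2=1$), the freeloader $\sF_i$ queries its flip-detection oracle $\sO_{f,b_i}$—which on each call outputs $(r,f(r)\oplus b_i)$ for a fresh uniform $r=c\|x\in\zo^{\ell+1}$—repeatedly until it obtains a sample whose first bit $c$ equals $c_i$. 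Since $r$ is uniform, each query matches with probability $1/2$, so after $\secpar$ queries the probability of never matching is $2^{-\secpar}=\negl$. Once $\sF_i$ holds such a sample $(c_i\|x,\,f_{c_i}(x)\oplus b_i)$, it runs $\sqcp.\qeval(\rho_{c_i},x)$ exactly once to recover $f_{c_i}(x)$, and outputs $\big(f_{c_i}(x)\oplus b_i\big)\oplus f_{c_i}(x)=b_i$.

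The key point is that the oracle queries are to $\sO_{f,b_i}$, which is independent of the copy-protected register, so they do not disturb $\rho_{c_i}$; the only use of $\rho_{c_i}$ is the single final evaluation, which by the correctness of $\sqcp'$ returns the right value (and since only one evaluation is needed, any disturbance of the state is irrelevant). Hence each $\sF_i$ recovers $b_i$ except with negligible probability, so the expected number of correct answers is $2-\negl$. This exceeds the $\sidqcp^{1,1}$ threshold $n+k/2=1+\tfrac12=\tfrac32$ by a constant $\tfrac12-\negl$, so $\sqcp'$ is not $\sidqcp^{1,1}$ secure, while remaining $\swqcp^{1,1}$ secure by the claim, establishing the lemma.

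I expect the main (though mild) obstacle to be the bookkeeping around the oracle-sampling step: one must confirm that polynomially many queries suffice to hit the desired first bit with overwhelming probability, that each $\sF_i$ remains QPT, and that the single $\qeval$ call is unaffected by the prior oracle calls. All of these are routine once the splitting structure of $\sqcp'$ is in hand.
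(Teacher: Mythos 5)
Your proof is correct and takes essentially the same approach as the paper: both use the split scheme $\sqcp'$ from the splitting attack, have the pirate hand each freeloader one tensor factor of $\rho_{f_0}\otimes\rho_{f_1}$, and let each freeloader recover $b_i$ by XORing an oracle sample whose first bit matches its half against a single $\qeval$ call on its unperturbed state. The only difference is that the paper's freeloaders make exactly two queries (yielding a constant advantage of $1/4$, outputting a random bit on a miss), whereas you repeat up to $\secpar$ times to push the advantage to $1/2-\negl$ --- precisely the boost the paper itself notes in the closing remark of its proof.
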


\begin{proof}
    Let $\sqcp$ be $\swqcp^{1,1}$ secure, and let $\sqcp'$ be the scheme described in \cref{sssec:qcp_split}. According to \cref{claim:splitting}, $\sqcp'$ is also $\swqcp^{1,1}$ secure. However, it is not $\sidqcp^{1,1}$ secure, as we will now show by directly constructing an adversary $\sA = (\sP,\sF_0,\sF_1)$ for that $\sidqcp^{1,1}(\sA,\secpar)$ game (note that we unusually index the freeloaders with $0,1$, as it is more convenient in our particular context).
    
    The pirate $\sP$, upon being given $\rho_{f_0,f_1} = \rho_{f_0} \otimes \rho_{f_1}$ gives $\rho_{f_\beta}$ to $\sF_\beta$ resp.
    
    The freeloader $\sF_\beta$ queries $\sO_{f,b_\beta}$ twice to get two pairs $(x_1,f(x_1)\oplus b_\beta),(x_2,f(x_2))\oplus b_\beta$. If the string $x_1$ starts with $\beta$, let $\tilde{x}_1$ be $x_1$ without the first bit, then $\sF_\beta$ calculates $f(x_1) \gets \sqcp.\qeval(\rho_\beta,\tilde{x}_1)$ and outputs $f(x_1)\oplus (f(x_1)\oplus b_\beta) = b_\beta$. Else, if the string $x_2$ starts with $\beta$, they do the same with $x_2$. In both cases they output $b_\beta$ with certainty. In case both strings $x_1$ and $x_2$ start with $1-\beta$, $\sF$ outputs a uniformly random bit.
    
    Since $x_1,x_2$ are uniformly random, the probability that both start with $1-\beta$ is $1/4$. It follows that $\sF_\beta$ outputs the correct answer with probability $3/4 + 1/4\cdot 1/2 = 7/8$. Hence, $\E\left[\sidqcp^{1,1}(\sA,\secpar)\right] = 14/8 = 1 + 1/2 + 1/4$, so that $\sA$ has a non-negligible advantage of $1/4$.
    
    Note that by doing the same but with more queries, the advantage becomes exponentially close to the maximal advantage of $1/2$.
\end{proof}

\section{\textsf{FLIP-QCP} as a Copy-protection Version of \textsf{LoR-CPA}}\label{app:lor_qcp}

The purpose of this section is to better motivate the notion of flip detection security for quantum copy-protection (see \cref{def:fdqcp}) introduced and considered in \cref{ssec:qcp_ids}. We do so by arguing that it is a natural adaptation of the notions used to model multi-encryption security for encryption schemes. In this appendix, we overview the left and right notions of security for encryption schemes and present an adaptation of this definition to copy-protection scheme we call \slorqcp security. We show that \slorqcp security implies \sfdqcp security for copy-protecting $\bbf$s. We also show that \slorqcp implies \swqcp in general.

The motivation for \cref{def:fdqcp} arose through attempts to extend uncloneable bit decryptors to uncloneable decryptors which support messages of arbitrary length. A standard method to extend bit encryption to arbitrary messages is to encrypt each bit separately. In order to establish the security of the resulting scheme, it is required to show that no new attacks arise from the fact that a cipher of a long plaintext could be parsed as many independently encrypted bits. Differently stated, it should be established that the original bit encryption scheme remains secure against an adversary with access to \emph{multiple encryptions}. Such adversaries are usually modeled with notions of \emph{left or right} (\slor) security (for example, in \cref{def:lorcpa} we introduce the notion of \slorcpa, though in a different context). Informally, the difference between \sind (see \cref{ssec:pre_ses}) and \slor security lies in the form of the challenge phase. In the \sind notions, the adversary is aware of two plaintexts, which we can call the \emph{left plaintext} and \emph{right plaintext}. She is given a ciphertext and has to distinguish whether it is an encryption of the left or the right plaintext. The \slor notions of security allow the presence of \emph{many} pairs of left and right ciphers. The challenger uniformly chooses a side, and for each pair affords the adversary with a ciphertext of the plaintext of that side (the crucial point being that the challenger chooses a side only once, and consistently encrypts only the plaintexts of that side throughout the game). The adversary then has to distinguish a challenger who chose the left side from a challenger who chose the right side.

It is easy to see that \slor security is at least as strong as \sind with the same oracle access (as \sind security could be described as \slor security where the adversary is limited to a single pair of left and right plaintexts). Basic results in encryption theory show that the inverse implication is also true (albeit at the cost of increasing the adversary's advantage by a polynomial multiplicative factor) in contexts where the adversary has access to an encryption oracle. These statements are formally defined, proved, and discussed in \cite{BDJR97}. Unfortunately, this property of encryption schemes does not hold for uncloneable decryptors, as is demonstrated in \cref{thm:ext_fail}. However, \cref{thm:dcs_cca1} shows that it is possible to obtain an uncloneable bit decryptor scheme that remains secure under multiple encryptions if we require the underlying copy-protection scheme to satisfy the flip detection security notion defined in \cref{def:fdqcp}.

In this appendix, we present a natural adaptation of left or right security to copy-protection schemes we call \slorqcp; we then show that \slorqcp implies \swqcp security in general, and \sfdqcp security for balanced binary function. A subtlety with this definition is with the distribution $\sD$ with respect to which it is defined. We restrict our attention to the scenario where all inputs are sampled independently, though it is possible to define \slorqcp in more general settings.

\begin{definition}[\slorqcp security]\label{def:lorqcp}
    Let $\sF_\secpar$ be a permissible class of functions (see \cref{def:admiss}). Let $\sD_\sF$ be a polynomial randomized circuit such that $\sD_\sF(\secparam)$ outputs an element of $\sF_\secpar$. Let $\sD_x$ be a polynomial randomized circuit such that for any $f\in \sF$, $\sD_x(f)$ outputs an element in the domain of $f$.
    
    Denote $\sD = (\sD_\sF,\sD_x)$. Let $\sqcp$ be a copy-protection scheme for $\sF$ (see \cref{def:qcp_synt}).
    
    For any $f\in\sF$ and $b\in\zo$ let $\sO_{f,b}$ be the oracle which is given no input and outputs $(x_0,x_1,f(x_b))$ where $x_0\gets \sD_x(f)$, and $x_1$ is sampled from $\sD_x(f)$ conditioned on the event that $f(x_0)\ne f(x_1)$.
    
    For any $n,k = \poly$ and for any adversary $\sA = (\sP,\sF_1,\ldots,\sF_{n+k})$ define the $\slorqcp^{n,k,\sD}_{\sqcp}(\sA,\lambda)$ game between $\sA$ and a trusted adversary $\sC$:
    
    \begin{itemize}
        \item $\sC$ samples $f\gets \sD_\sF$, invokes $\qcp.\ptect(f)$ $n$ times to obtain $\rho=\rho_f^{\otimes n}$ and gives $\rho$ to $\sP$,
        \item $\sP$ generates $n+k$ states $\sigma_1,\ldots,\sigma_{n+k}$ and gives $\sigma_i$ to $\sF_i$,
        \item $\sC$ samples $b_i\gets \zo$ for $i=1,\ldots,n+k$,
        \item $\sF_i^{\sO_{f,b_i}}(\sigma_i)$ outputs a bit $\beta_i$,
        \item the output of the game is the number of indices $i$ such that $b_i=\beta_i$.
    \end{itemize}
    
    $\qcp$ is \emph{$\slorqcp^{n,k,\sD}$ secure} if for any $\sA$ it holds that
    $$\E\left[\slorqcp^{n,k,\sD}_{\sqcp}(\sA,\lambda)\right] \le n + \frac{k}{2} + \negl\text{,}$$ $\qcp$ is \emph{$\slorqcp^{\sD}$ secure} if it is $\slorqcp^{n,k,\sD}$ secure for any $n,k=\poly$.
\end{definition}

We soon prove that a \slorqcp secure copy-protection scheme for a \bbf is also \sfdqcp secure. In particular, the \slorqcp security of a scheme $\qcp_\bbf$ implies that \cref{con:dcs_cca1_res} is \sudqcpa extendable when instantiated with $\qcp_\bbf$. This implication can be interpreted as an indication that the difficulty in extending uncloneable bit decryptors is actually underwritten by a broader difficulty: that of extending the uncloneability of evaluating a function to the uncloneability of identifying to which of two sets of arbitrarily many random points this functionality was applied. This extension --- which is commonplace in encryption (see the discussions titled "Multiple-Message Security" in sections 5.4.3 and 5.4.4 in \cite{Gol04} for an overview) --- seems to fail in the context of quantum copy-protection.

\begin{remark}
    \slor security is defined like \sind security (recall \cref{def:ind-qccax}) with the modification that the adversary creates a list of polynomially many $q$ pairs of plaintexts $(m_0^1,m_1^1),\ldots,(m_0^q,m_1^q)$ and is then given back the encryptions of $m_b^1,\ldots,m_b^q$ and has to distinguish the cases $b=0$ from $b=1$ (note that all plaintexts are chosen at once, unlike in \cref{def:lorcpa} where the adversary is given the encryption of $m_b^i$ \emph{right after} providing the pair $(m_0^i,m_1^i)$, and may adapt future queries based on the ciphertext of previous queries).
    
    It is easy to construct an example where \sind security does not imply \slor security (for example, one could choose a random string $r$ the same length of the secret key, and then append $r$ to encryptions of $0$ and $\sk\oplus r$ to encryptions of $1$).
    
    One could argue that since copy-protection adversaries are not given oracle access, the lack of equivalence of \swqcp and \sfdqcp is similar to the lack of equivalence of \sind and \slor and has nothing to do with uncloneability.
    
    There are three responses to this argument:
    \begin{enumerate}
        \item We also see this phenomenon in the context of uncloneable decryptors, where the security game is more similar to those used to define \sindcpa security, and in particular affords encryption oracle access to the adversary. The failure of a \sobudqcpa secure scheme to be \sudqcpa extendable seen in \cref{thm:ext_fail} stands in contrast to the equivalence between \sindcpa and \slorcpa security.
        \item In both \sfdqcp and \swqcp, the adversary is given access to the copy-protected program, which allows them to evaluate the function at arbitrary points and is akin to having oracle access to the function at the first phase of the \sfdqcp and \swqcp games. As we establish in \cref{thm:qcp-ria}, this property is strong enough that allowing the adversary to evaluate the function in uniformly random points throughout the entirety of the game does not increase the security. The randomness of the evaluated points is a manifestation of the non-determinism of the encryption procedure (which is a necessary condition for \sindcpa security).
        \item Some may find the previous point objectionable, since the structure of the security games still seems quite different. This objection could be addressed by investigating the security notions for encryption obtained by allowing the adversary access to encryptions of \emph{random} plaintexts. Such security notions have not been researched in the literature to the best of our knowledge, as they do not seem particularly useful to model any forms of attack not already addressed by chosen plaintext security. However, it is straightforward to define \sind and \slor security against \emph{random} plaintext attacks and show that these notions are indeed equivalent.
    \end{enumerate}
\end{remark}

\begin{proposition}
    Let \bbf be a balanced binary function with input length $\ell$ (see \cref{def:bbf}), $\qcp$ be a copy-protection scheme for \bbf (see \cref{def:qcp_synt}), and let $\sD = (\bbf.\samp,\mathcal{U}(\zo^\ell))$. For any $n,k=\poly$ if $\qcp$ is $\slorqcp^{n,k,\sD}$ secure, then \qcp is also \sfdqcpnk secure.
\end{proposition}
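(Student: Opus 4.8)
The plan is to prove the statement by a direct, advantage-preserving reduction in the freeloaders' oracles: from any adversary $\sA = (\sP, \sF_1,\ldots,\sF_{n+k})$ for the $\sfdqcp^{n,k}$ game I would build an adversary $\sA' = (\sP', \sF_1',\ldots,\sF_{n+k}')$ for the $\slorqcp^{n,k,\sD}$ game (with $\sD = (\bbf.\samp, \mathcal{U}(\zo^\ell))$) achieving exactly the same expected score, so that the assumed $\slorqcp^{n,k,\sD}$ security immediately upper bounds the $\sfdqcp^{n,k}$ score. The two games share an identical setup phase—both sample $f \gets \bbf.\samp$, hand each freeloader $\sF_i$ an oracle parametrised by a uniform bit $b_i$, and give the pirate $\rho = \rho_f^{\otimes n}$—so the pirate $\sP'$ simply runs $\sP$ verbatim and forwards the states $\sigma_1,\ldots,\sigma_{n+k}$ unchanged. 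All the work happens at the freeloader level, where the two oracles differ.

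The key observation is that one $\slorqcp$ oracle response can be massaged into a perfectly distributed $\sfdqcp$ oracle response. The $\slorqcp$ oracle $\sO_{f,b_i}$ outputs a triple $(x_0, x_1, f(x_{b_i}))$ with $x_0 \gets \mathcal{U}(\zo^\ell)$ and $x_1$ drawn uniformly conditioned on $f(x_0) \ne f(x_1)$. Because $f$ is binary, this conditioning forces $f(x_1) = f(x_0) \oplus 1$, so in either case $f(x_{b_i}) = f(x_0) \oplus b_i$. Hence the pair $(x_0, f(x_{b_i})) = (x_0, f(x_0) \oplus b_i)$ with $x_0$ uniform is distributed exactly as the response $(r, f(r) \oplus b_i)$ of the $\sfdqcp$ oracle $\sO_{f,b_i}$. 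Each freeloader $\sF_i'$ therefore runs $\sF_i$, and whenever $\sF_i$ queries its oracle, $\sF_i'$ queries its own $\slorqcp$ oracle, discards the middle component $x_1$, and returns $(x_0, f(x_{b_i}))$ to $\sF_i$; finally $\sF_i'$ outputs whatever $\sF_i$ outputs.

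To conclude, I would note that since both oracles take no input, each query is an independent fresh sample and the translation above is a sample-by-sample classical post-processing, so there is no superposition subtlety to address; the joint distribution of all oracle answers seen by $\sF_i$ is identical in the simulation and in the real $\sfdqcp^{n,k}$ game. Consequently $\sA$'s entire view inside $\sA'$ is distributed exactly as in the $\sfdqcp^{n,k}$ game, index $i$ is counted correct in the $\slorqcp$ game precisely when it is correct in the $\sfdqcp$ game, and therefore $\E[\sfdqcp^{n,k}_\qcp(\sA,\secpar)] = \E[\slorqcp^{n,k,\sD}_\qcp(\sA',\secpar)] \le n + \tfrac{k}{2} + \negl$, which is exactly $\sfdqcp^{n,k}$ security.

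I do not anticipate a genuine obstacle: the whole argument hinges on the elementary identity $f(x_b) = f(x_0) \oplus b$ forced by binarity together with the conditioning built into the $\slorqcp$ oracle, after which the two oracles coincide up to discarding a coordinate. The only points warranting a sentence of care are that the marginal of $x_0$ remains uniform despite the conditioning used to draw $x_1$ (it does, since $x_0$ is sampled first and unconditionally), and that the reduction never needs to implement the rejection sampling of $x_1$ itself—it merely consumes the oracle supplied by the $\slorqcp$ challenger.
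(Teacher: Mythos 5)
Your proposal is correct and matches the paper's own proof essentially step for step: the same direction of reduction (simulating the $\sfdqcp$ adversary inside a $\slorqcp$ adversary), the same oracle translation (query $\sO_{f,b_i}$, drop $x_1$, return $(x_0, f(x_{b_i}))$), and the same key identity $f(x_{b_i}) = f(x_0)\oplus b_i$ forced by binarity and the conditioning $f(x_0)\ne f(x_1)$. Your added remarks on the uniformity of the marginal of $x_0$ and the absence of superposition subtleties are fine points of care that the paper leaves implicit.
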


\begin{proof}
    Given an adversary $\sA=(\sP,\sF_1,\ldots,\sF_{n+k})$ for \sfdqcpnk consider an adversary $\sA'=(\sP',\sF'_1,\ldots,\sF'_{n+k})$ for $\slorqcp^{n,k,\sD}$ which simulates $\sA$ and responds to queries as follows: whenever $\sF_i$ queries for a pair of the form $(x,f(x)\oplus b_i)$, $\sF'$ makes a query to $\tilde{\sO}_{f,b_i}$ to obtain a pair of the form $(m_0,m_1,f(m_{b_i}))$ and responds to the query made by $\sF_i$ with $(m_0,f(m_{b_i}))$. Finally, $\sF'_i$ outputs the output of $\sF_i$. By the fact that $f(m_0)\ne f(m_1)$ it follows that $(m_0,f(m_{b_i})) = (m_0,f(m_0)\oplus b_i)$ so that $\sF_i'$ answers correctly if and only if $\sF_i$ answers correctly.
\end{proof}

We conclude by showing that \slorqcp security with respect to any distribution $\sD$ implies \swqcp security with respect to the same distribution.

\begin{lemma}\label{thm:lorwqcp}
    Let \sqcp be a quantum copy-protection function that is $\slorqcp^{n,k,\sD}$ secure with respect to some distribution $\sD$. Then it is also $\swqcp^{n,k,\sD}$ secure.
\end{lemma}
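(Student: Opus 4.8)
The plan is to prove the contrapositive by a reduction: assuming a $\swqcp^{n,k,\sD}$ adversary $\sA=(\sP,\sF_1,\ldots,\sF_{n+k})$ with $\E[\swqcp^{n,k,\sD}(\sA,\secpar)]=n+k/2^m+\epsilon$, I would build a $\slorqcp^{n,k,\sD}$ adversary $\sA'=(\sP',\sF'_1,\ldots,\sF'_{n+k})$ whose expected score is at least $n+k/2+\epsilon$ (for binary functions, exactly), so that the $\slorqcp^{n,k,\sD}$ security of $\sqcp$ forces $\epsilon=\negl$. The pirate phase is identical in the two games -- both receive $\rho_f^{\otimes n}$ and output $n+k$ states, and neither pirate gets an oracle -- so $\sP'$ simply runs $\sP$ and forwards its outputs. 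The function marginal of $\sD$ agrees in both games, so no distribution mismatch arises here.

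The heart is the freeloader reduction. The $\slorqcp$ freeloader $\sF'_i$, holding $\sigma_i$ and oracle access to $\sO_{f,b_i}$, would query the oracle once to obtain a triple $(x_0,x_1,z)$ with $z=f(x_{b_i})$ and $f(x_0)\ne f(x_1)$, run the (oracle-free) $\swqcp$ freeloader $\sF_i(\sigma_i,x_0)$ to get a guess $y$ for $f(x_0)$, and output $\beta_i=0$ if $y=z$ and $\beta_i=1$ otherwise. The intuition is that, since $f(x_0)\ne f(x_1)$, a correct evaluation of $f(x_0)$ reveals which of $x_0,x_1$ produced $z$, hence the bit $b_i$. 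Writing $p_i=\Pr[\sF_i(\sigma_i,x_0)=f(x_0)]$ and $q_i=\Pr[\sF_i(\sigma_i,x_0)=f(x_1)]$, where $x_0\gets\sD_x(f)$ matches the $\swqcp$ challenge distribution, a two-case analysis on $b_i$ gives the exact identity
$$\Pr[\beta_i=b_i]=\tfrac12+\tfrac12(p_i-q_i)\text{,}$$
so that $\E[\slorqcp^{n,k,\sD}(\sA',\secpar)]=\frac{n+k}{2}+\frac12\sum_i(p_i-q_i)$. Note that by linearity of expectation $\E[\swqcp^{n,k,\sD}(\sA,\secpar)]=\sum_i p_i$ as well, even though the $\swqcp$ freeloaders all share a single challenge point $x$.

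For binary functions the argument closes immediately, and this is the case relevant to copy-protecting a \bbf. When $m=1$ the conditioning $f(x_0)\ne f(x_1)$ forces $f(x_1)=\overline{f(x_0)}$, so $q_i=\Pr[\sF_i(\sigma_i,x_0)\ne f(x_0)]=1-p_i$; hence $p_i-q_i=2p_i-1$ and $\Pr[\beta_i=b_i]=p_i$. Therefore $\E[\slorqcp^{n,k,\sD}(\sA',\secpar)]=\sum_i p_i=\E[\swqcp^{n,k,\sD}(\sA,\secpar)]$, the advantage is preserved verbatim, and the $\slorqcp$ bound $n+k/2+\negl$ transfers directly to the $\swqcp$ bound $n+k/2^m+\negl$ (the two baselines coincide when $m=1$).

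The step I expect to be the main obstacle is the general-$m$ case, where the baselines $k/2$ and $k/2^m$ diverge and $q_i$ is no longer $1-p_i$. Here $q_i$ is the probability that an \emph{incorrect} evaluation happens to coincide with $f(x_1)$, and controlling $Q=\sum_i q_i$ is exactly what is needed to turn $\sum_i p_i>n+k/2^m+\epsilon$ into $\sum_i(p_i-q_i)>n+\text{(non-negligible)}$. An honest freeloader outputting $f(x_0)$ has $q_i=0$, and for a near-uniform output distribution an incorrect guess meets $f(x_1)$ with probability $\approx 2^{-m}$, giving $Q\le\frac{2^{-m}}{1-2^{-m}}\big((n+k)-\sum_i p_i\big)\le k/2^m$, which suffices. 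The residual difficulty is a distribution far from uniform, where $q_i$ can be inflated; but in that regime $\swqcp$ is already broken by outputting the mode of $f$, and the same mode-guessing strategy -- which ignores $\sigma_i$ entirely -- breaks $\slorqcp$ directly. I would therefore finish by splitting into the near-uniform case (the reduction above) and the peaked case (the direct mode-guessing attack), observing that $\swqcp$ security presupposes near-uniform outputs in any event.
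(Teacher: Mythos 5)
Your core reduction is exactly the paper's: the pirate is forwarded unchanged, each \slorqcp freeloader queries $\sO_{f,b_i}$ once to obtain $(x_0,x_1,z)$, runs the oracle-free \swqcp freeloader on $x_0$, and outputs $0$ iff the answers match. The paper contents itself with the one-sided bound $\Pr[\beta_i=b_i]\ge p_i$ (adding only a parenthetical remark that the slack vanishes for binary functions), whereas you derive the exact identity $\Pr[\beta_i=b_i]=\tfrac12+\tfrac12(p_i-q_i)$ and explicitly justify, via linearity of expectation, why the shared challenge point in \swqcp is harmless --- a subtlety the paper uses implicitly. For $m=1$ your write-up is a correct, and in fact tighter, version of the same argument, and the binary case is the only one the paper ever uses (copy-protection of a \bbf).

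The branch where you go beyond the paper --- general output length $m$ --- contains a genuine gap. The dichotomy "either the output distribution is near-uniform, or mode-guessing breaks \slorqcp directly" is false. Take $m\ge 2$, fix distinct $v_0,v_1\in\zo^m$, let $g$ be a hard balanced binary function, and set $f(x)=v_{g(x)}$. The output distribution of $f$ is uniform on $\{v_0,v_1\}$, hence far from uniform on $\zo^m$, and \swqcp with its $n+k/2^m$ benchmark is trivially broken (guess a random element of $\{v_0,v_1\}$). But mode-guessing gains nothing in \slorqcp: conditioned on $f(x_0)\ne f(x_1)$ the pair $\{f(x_0),f(x_1)\}$ is always $\{v_0,v_1\}$, so $z=f(x_{b_i})$ is (up to the negligible pairwise asymmetry of a PRF-like class) uniform on this set independently of $b_i$. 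Meanwhile your reduction's $q_i$ is inflated exactly as in the binary case ($q_i$ can equal $1-p_i$), so it only transfers the $n+k/2$ baseline, not $n+k/2^m$. Worse, the \slorqcp game for this class is equivalent, under the efficient relabeling $v_\beta\leftrightarrow\beta$, to the \slorqcp game for the binary class $g$; so if any binary class admits a \slorqcp-secure scheme, the lemma with the $2^{-m}$ benchmark of \cref{def:wqcp} is simply false, and no case analysis can rescue it. To be fair, the paper's own proof has the same limitation --- it also establishes only the $n+k/2+\negl$ bound and never reconciles it with the $n+k/2^m$ benchmark --- so the defensible statement, and the one your reduction (like the paper's) fully proves, is the lemma for binary-valued classes, or equivalently the general lemma with $k/2$ in place of $k/2^m$.
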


\begin{proof}
    Let $\sA = (\sP,\sF_1,\ldots,\sF_{n+k})$ be an adversary whose expected output in the $\swqcp^{n,k,\sD}$ game is $\mu$. It is easy to construct an adversary $\sA = (\sP,\sF_1',\ldots,\sF'_{n+k})$ whose expected output in the $\slorqcp^{n,k,\sD}$ game is at least $\mu$. The freeloader $\sF'_{i}$:
    \begin{itemize}
        \item makes a single query to $\sO_{f,b}$ to obtain $((x_0,x_1),y)$,
        \item invokes $\sF_i$ with input $x_0$,
        \item outputs $0$ if and only if $\sF_i$ outputs $y$.
    \end{itemize}
    
    It is straightforward to check that if $\sF_i$ evaluates $f(x_0)$ correctly, then $\sF_i'$ outputs $b$ (here we rely on the fact that $f(x_0) \ne f(x_1)$). So the probability that $\sF_i'$ is correct is at least the probability that $\sF_i'$ is correct. Note that if $b=1$ then $\sF_i'$ outputs $b$ even if $\sF_i$ evaluated $f(x_0)$ wrong, as long as she did not happen to output the point $y$, which means that the expected output of $\sA'$ in the $\slorqcp^{n,k,\sD}$ game might actually be larger (albeit this scenario is impossible if $\sF$ is a binary function).
\end{proof}

\section{Oracle Instantiation of \textsf{FLIP-QCP}}\label{app:flip_unlearn}

Aaronson \cite{Aar09} introduces a quantum oracle relative to which there exists a \swqcp secure copy-protection scheme for any unlearnable class of functions. In a subsequent work, Aaronson et al. \cite{ALLZZ20} manage to replace the quantum oracle with a classical oracle at the cost of only obtaining $\swqcp^{1,1}$ security.

\begin{remark}
    The reason that \cite{Aar09}'s construction supports unlimited copies whereas \cite{ALLZZ20}'s only support a single copy follows from the information theoretic no-go theorem to which they eventually appeal. The \cite{Aar09} construction encodes programs into Haar random quantum states, which are hard to clone approximately even when given (polynomially) many copies. The \cite{ALLZZ20} construction replaces the Haar random state with so called \emph{hidden subspaces} first introduced in \cite{AC12} to construct quantum money. These states have the property that they could be verified given access to an appropriate \emph{classical} oracle. The \cite{ALLZZ20} reduction appeals to a theorem by \cite{BS16} which shows that a particular functionality of hidden subspaces (namely, producing either a primal or a dual vector of the hidden subspace) can not be cloned. Unlike random states, this theorem no longer holds when an adversary is given linearly many copies of the hidden subspace. The \cite{ALLZZ20} scheme uses fresh randomness for each copy of the program, and thereby their scheme might still be secure in the presence of many copies. However, their proof is highly specialized for the single copy setting, and they leave the more general setting as an open problem.
\end{remark}

In this section, we sketch a proof that the \cite{Aar09} and \cite{ALLZZ20} schemes are actually \sfdqcp and $\sfdqcp^{1,1}$ secure respectively. We focus on \cite{ALLZZ20} since the authors thereof provide the notion of predicates that can naturally encapsulate more general functionalities we wish to copy-protect, including flip detection security. Our arguments also apply to the construction of \cite{Aar09} since they are mostly agnostic to the proof of \cite[Theorem~4]{ALLZZ20}, and only appeal to the fact that it generalizes to arbitrary predicates (see the discussion concluding Section 5 therein, which is equally applicable to \cite[Theorem~9]{Aar09}). As we shortly discuss, the only detail in our argument which requires addressing the content of the proof is justifying that it still holds even though the predicate we define require superpolynomially many random bits (note that the proof does \emph{not} follow through for such predicates in general). 

To extend this result to \sfdqcp security, we note two simple observations:
\begin{itemize}
    \item the oracle instantiated scheme of \cite{ALLZZ20} is \sfdqcp secure, given that the underlying function satisfies an ostensibly stronger form of unlearnability we shortly introduce called \emph{flip unlearnability}, and
    \item flip unlearnability is actually equivalent to unlearnability.
\end{itemize}

In order to consider more generalized notions of copy-protection, which seek to prevent the freeloaders from performing functionalities that are not as strong as evaluating the underlying function at a random point, \cite[Definition~15]{ALLZZ20} introduce the formalism of \emph{predicates}. A predicate allows us to describe a variety of functionalities by encapsulating both the information given to the freeloader and the outcomes which are considered "correct".

\begin{definition}[Predicate, \cite{ALLZZ20}]
    A \emph{binary predicate} $E=E(P,y,r)$ is a binary outcome function comprised of the following information:
    \begin{itemize}
        \item a deterministic circuit $X(y,r)$, and
        \item a relation $R$ which contains triplets of the form $(z,y,r)$.
    \end{itemize}
    Given a circuit $P$, the predicate $E(P,y,r)$ evaluates to $0$ if and only if $(P(X(y,r),y,r) \in R$.
\end{definition}

Here we assume that $P$ is a circuit with classical input and output, though it can be a quantum circuit with auxiliary qubits instantiated to an arbitrary quantum state. In the definitions below, $P$ represents a freeloader, and the auxiliary state is the input given to $P$ by the pirate.

While not implied by the definition of a predicate, the values $y$ and $r$ should be considered as representing arbitrary auxiliary information and uniform randomness, respectively. Given a predicate, we can redefine both unlearnability and copy-protection in terms of that predicate to obtain the notions of \emph{generalized copy-protection} and \emph{generalized unlearnability}, which are formally defined in Definitions 19 and 21 in \cite{ALLZZ20}. The idea is that instead of testing whether the adversary manages to evaluate the function in a random point, we test whether given input $x=X(y,r)$ (where $y$ is arbitrary auxiliary information which may depend on the underlying function $f$) they produce $z$ such that $(z,y,r)\in R$. This definition is rather involved, so we do not reproduce it here. Note that by setting $y=f$, $r\gets \zo^\ell$, and $(z,f,r)\in R \iff f(r)=z$ we recover the standard notions of unlearnability and copy-protection.

\begin{remark}
    The definitions of generalized unlearnability and generalized copy-protection afforded in \cite{ALLZZ20} are more general still. They are defined with respect to a \emph{cryptographic application} which is comprised of a predicate as well as a \emph{sampling protocol} between the challenger $\sC$ and the adversary $\sA$ for choosing the function $f$. Moreover, their sampling protocol allows passing arbitrary auxiliary information to the adversary. We do not require this level of generality and always implicitly assume that in the sampling "protocol" $\sC$ simply samples $f\gets \bbf.\samp(\secpar)$ without any communication with $\sA$.
\end{remark}

A minor difficulty in defining flip detection as a predicate is that the amount of randomness in the definition of a predicate may not depend on the adversary. Since an adversary can make any polynomial number of queries, each requiring a constant number of random bits to respond, this forces the randomness to be superpolynomially large. To sidestep this, we choose some fixed superpolynomial function $q$ and let $r$ be long enough to contain $q$ pairs of random inputs. When considering a circuit $P$ which makes $q'=\poly$ queries, we redefine it as a circuit that gets $q$ responses in advance, uses the first $q'$ as query responses, and discards the rest. The resulting predicate accounts for all possible QPT adversaries.

Allowing superpolynomial randomness seems to break the security reductions in the proof of \cite[Theorem~4]{ALLZZ20}, as they require a computationally bounded adversary to evaluate the predicate $E$. However, this is easily fixed by noting that for a \emph{fixed} adversary, the result of evaluating $E$ is independent of all but a polynomial prefix of the randomness, so it can be simulated efficiently.

Let $q$ be a superpolynomial function, we define the flip detection predicate as follows: the randomness is of the form $r=(b,r_1,\ldots,r_q)$, the input is $X(f,r)=((r_1,f(r_1)\oplus b),\ldots,(r_q,f(r_q)\oplus b))$, and $(z,f,r)\in R \iff z = b$.

We say that a \bbf is \emph{flip unlearnable} if it is satisfies \cite[Definition~21]{ALLZZ20} with respect to the flip detection predicate and the trivial sampler (we do not fully specify this definition as it would require us to introduce several notions form \cite{ALLZZ20} which are too far removed from the discussion. Fortunately, the fact that the functions under consideration are $\bbf$s allows for much simpler equivalent definitions for both unlearnability and flip unlearnability, which we introduce shortly). Applying the generalized form of \cite[Theorem~4]{ALLZZ20} gives us:

\begin{lemma}\label{lem:aota}
    If \bbf is flip unlearnable, then applying the \cite{Aar09} construction (resp. \cite{ALLZZ20} construction) to \bbf results in a \sfdqcp secure (resp. $\sfdqcp^{1,1}$ secure) copy-protection scheme.
\end{lemma}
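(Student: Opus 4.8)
The plan is to unpack \cref{def:fdqcp} and recognize \sfdqcp security as a special case of the \emph{generalized} copy-protection security that \cite{ALLZZ20} establish for their oracle construction (and, through the predicate-agnostic form of their argument, also for the \cite{Aar09} construction), instantiated with the flip detection predicate $E$ defined just above. Concretely, I would first verify that the \sfdqcpnk game is literally the generalized copy-protection game for the predicate with $X(f,r) = ((r_1, f(r_1)\oplus b),\ldots,(r_q, f(r_q)\oplus b))$ and $(z,f,r)\in R \iff z=b$. A freeloader $\sF_i$ in \cref{def:fdqcp} with oracle access to $\sO_{f,b_i}$ receives, query by query, pairs $(r, f(r)\oplus b_i)$ for uniform $r$ and must output $b_i$; this is exactly what a circuit $P$ fed the input $X(f,r)$ must do to satisfy $R$. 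Hence the two games coincide once we replace the oracle presentation by the ``hand the freeloader the whole batch of responses at once'' presentation used by the predicate formalism, and the \bbf distribution matches the sampler $f\gets\bbf.\samp(\secparam)$.

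Second, I would invoke the generalized form of \cite[Theorem~4]{ALLZZ20} (resp.\ \cite[Theorem~9]{Aar09}), which asserts that generalized unlearnability of a class with respect to a predicate implies that the oracle construction is a generalized copy-protection scheme for that predicate. Since \bbf being flip unlearnable is \emph{by definition} generalized unlearnability with respect to the flip detection predicate, the theorem immediately yields that the resulting scheme is \sfdqcp secure (resp.\ $\sfdqcp^{1,1}$ secure), the distinction between the two cases amounting only to which theorem is invoked and whether the underlying no-go result supports arbitrarily many copies or a single copy. The point that the \cite{ALLZZ20} argument is agnostic to the particular predicate beyond an admissibility condition is exactly what is flagged in the discussion concluding Section~5 there, which I would cite to justify substituting our predicate for the evaluation predicate.

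The one genuinely delicate step --- and the expected main obstacle --- is that the flip detection predicate consumes superpolynomially many random bits. The quantity of randomness in a predicate is fixed independently of the adversary, whereas the freeloaders may make any polynomial number $q'=\poly$ of queries; to cover all QPT adversaries I would carry $q$ pairs for a fixed superpolynomial $q$, with any concrete circuit using only the first $q'$ pairs and discarding the rest. The trouble is that the security reductions of \cite[Theorem~4]{ALLZZ20} require a computationally bounded party to evaluate $E$, which is naively impossible once $r$ is superpolynomially long. I would resolve this by observing that for any \emph{fixed} adversary the value $E(P,y,r)$ depends only on $b$ and on the polynomial prefix $(r_1,\ldots,r_{q'})$ actually consumed by $P$, so the predicate can be evaluated --- and hence simulated inside the reduction --- efficiently, the remaining randomness never being inspected. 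The work is then to recheck that every place in the \cite{ALLZZ20} proof appealing to efficient evaluation of $E$ does so relative to a fixed adversary, so that this prefix argument suffices and the reduction carries through without modification; this is the only place where the otherwise black-box application of their theorem must actually open up the proof.
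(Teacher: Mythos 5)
Your proposal follows essentially the same route as the paper's own (sketched) argument: cast the \sfdqcp game as ALLZZ20's generalized copy-protection game instantiated with the identical flip detection predicate ($X(f,r)=((r_1,f(r_1)\oplus b),\ldots,(r_q,f(r_q)\oplus b))$, $(z,f,r)\in R \iff z=b$), invoke the generalized form of \cite[Theorem~4]{ALLZZ20} (resp.\ the predicate-agnostic reading of \cite[Theorem~9]{Aar09}) together with the fact that flip unlearnability is by definition generalized unlearnability for this predicate, and patch the superpolynomial-randomness issue exactly as the paper does, by fixing a superpolynomial $q$, letting a concrete adversary consume only a polynomial prefix, and noting that $E$ is therefore efficiently evaluable relative to any fixed adversary. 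The only difference is presentational: you make explicit the equivalence between the oracle-access phrasing of \cref{def:fdqcp} and the batch-input phrasing of the predicate formalism, which the paper handles implicitly when it redefines $P$ to receive all $q$ responses in advance.
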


\cref{lem:aota} already implies that uncloneable decryptors could be instantiated from a quantum oracle assuming the existence of a flip unlearnable \bbf. The remaining question is how quantum unlearnability and flip unlearnability compare. We prove that these are equivalent notions.

In order to do so, we first note that the definitions of unlearnability and flip unlearnability take on a simpler form when the underlying function is a \bbf.

\begin{definition}[unlearnability for \bbf]\label{def:ulbbf}
    Let \bbf be a balanced binary function with input length $\ell$, and let $\sA=(\sA_1,\sA_2)$ be a procedure. Define the $\sul_\bbf(\sA,\secpar)$ game between $\sA$ and a trusted challenger $\sC$ as follows:
    \begin{enumerate}
        \item $\sC$ samples $f\gets \bbf.\samp(\secparam)$,
        \item $\sA_1^{\ket{f}}$ produces a quantum state $\sigma$,
        \item $\sC$ samples $x\gets \zo^\ell$,
        \item $\sA_2(\sigma,x)$ outputs a bit $b$,
        \item the output of the game is $1$ if and only if $b=f(x)$ (in that case we say that $\sA$ \emph{won} the game).
    \end{enumerate}
    
    We say that \bbf is \emph{unlearnable} if it holds for any QPT $\sA$ that
    $$\E\left[\sul_\bbf(\sA,\secpar)\right] = \frac{1}{2} + \negl \text{.}$$
\end{definition}

\begin{definition}[flip unlearnability]
    For any binary function $f$ with input length $\ell$ and any bit $b$ let $\sO_{f,b}$ be an oracle with no input which outputs a pair of the form $(r,f(r)\oplus b)$ where $r\gets \zo^\ell$.
    
    Let \bbf be a balanced binary function with input length $\ell$, and let $\sA=(\sA_1,\sA_2)$ be a procedure. Define the $\sful_\bbf(\sA,\secpar)$ game between $\sA$ and a trusted challenger $\sC$ as following:
    \begin{enumerate}
        \item $\sC$ samples $f\gets \bbf.\samp(\secparam)$ and $b\gets\zo$,
        \item $\sA_1^{\ket{f}}$ produces a quantum state $\sigma$,
        \item $\sA_2^{\sO_{f,b}}(\sigma)$ outputs a bit $b'$,
        \item the output of the game is $1$ if and only if $b=b'$ (in that case we say that $\sA$ \emph{won} the game).
    \end{enumerate}
    
    We say that \bbf is \emph{flip unlearnable} if it holds for any QPT $\sA$ that
    $$\E\left[\sful_\bbf(\sA,\secpar)\right] = \frac{1}{2} + \negl \text{.}$$
\end{definition}

\begin{lemma}\label{thm:qul_qful}
    Let \bbf be a balanced binary function, then \bbf is unlearnable if and only if it is flip unlearnable.
\end{lemma}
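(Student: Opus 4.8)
The plan is to prove the two implications separately; the direction ``flip unlearnable $\Rightarrow$ unlearnable'' is essentially a one-query reduction, while the converse needs a hybrid argument over the oracle samples, which is where the real work lies. For the easy direction I would argue the contrapositive: given a learner $\sA=(\sA_1,\sA_2)$ winning $\sul_\bbf$ with probability $\tfrac12+\epsilon$, build a flip adversary $\sA'=(\sA_1',\sA_2')$ with $\sA_1'=\sA_1$ (both phases have oracle access $\ket f$, so the state $\sigma$ is identically distributed). In the flip game $\sA_2'$ makes a single call to $\sO_{f,b}$ to obtain a pair $(r,f(r)\oplus b)$ with $r$ uniform, runs $\sA_2(\sigma,r)$ to get a guess $g$ for $f(r)$, and outputs $b'=g\oplus(f(r)\oplus b)$. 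Since $b'=b \iff g=f(r)$ and $r$ is uniform, $\sA'$ wins with probability exactly $\Pr[g=f(r)]=\tfrac12+\epsilon$, contradicting flip unlearnability.

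For the hard direction I would again argue the contrapositive: from a flip adversary $\sA=(\sA_1,\sA_2)$ winning $\sful_\bbf$ with probability $\tfrac12+\epsilon$, construct a learner $\sA'$. The obstacle is that in the $\sul$ game the second-stage algorithm gets \emph{no} oracle, whereas $\sA_2$ expects a stream of samples $(r_i,f(r_i)\oplus b)$. I would overcome this by having $\sA_1'$ (which does have $\ket f$) harvest $q=\poly$ honest pairs $(r_i,f(r_i))$ — where $q$ bounds the number of samples $\sA_2$ consumes — and pass them to $\sA_2'$ inside $\sigma$. Writing $p_0,p_1$ for the probability $\sA_2$ outputs $0$ under the all-honest oracle $\sO_{f,0}$ and the all-flipped oracle $\sO_{f,1}$, the flip advantage gives $p_0-p_1=2\epsilon$. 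I then define hybrids $G_j$ in which the first $j$ samples are honest and the remaining $q-j$ are flipped, so $G_q=\sO_{f,0}$, $G_0=\sO_{f,1}$, and $\gamma_j:=\Pr[\sA_2^{G_j}=0]$ telescopes as $\sum_{j=1}^q(\gamma_j-\gamma_{j-1})=\gamma_q-\gamma_0=2\epsilon$.

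The reduction then injects the challenge $x$ at a random position. On input $(\sigma,x)$, $\sA_2'$ samples $j\gets\{1,\dots,q\}$ and a mask $c\gets\zo$, and simulates $\sA_2$, feeding it honest harvested pairs at positions $1,\dots,j-1$, the pair $(x,c)$ at position $j$, and flipped harvested pairs at positions $j+1,\dots,q$; on output $o$ it guesses $\tilde f=c\oplus o$. Because $c$ is an independent uniform bit, the event $c=f(x)$ — in which $(x,c)$ is a genuine honest sample and the whole view is exactly $G_j$ — and the event $c=\overline{f(x)}$ — view exactly $G_{j-1}$ — each occur with probability $\tfrac12$, so that
\[
\Pr[\tilde f=f(x)\mid j]=\tfrac12\gamma_j+\tfrac12(1-\gamma_{j-1})=\tfrac12+\tfrac12(\gamma_j-\gamma_{j-1}).
\]
Averaging over $j$ and using the telescoping sum yields $\Pr[\tilde f=f(x)]=\tfrac12+\epsilon/q$, a non-negligible advantage, contradicting unlearnability.

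The main obstacle I anticipate is exactly this second direction: bridging the gap between an adversary that receives many mutually $b$-consistent samples and a learner that must commit to a single point with no queries. The two devices that make it work are harvesting the sample pairs during the oracle-enabled first phase and, crucially, the random mask $c$, which simultaneously places the injected challenge into the correct hybrid and makes the two relevant cases equiprobable (so that I need not even invoke balancedness of $\bbf$ here); the hybrid telescoping then converts the global advantage $2\epsilon$ into a per-index advantage $\epsilon/q$ under a single fixed decoding rule, so that no knowledge of the signs of the individual gaps $\gamma_j-\gamma_{j-1}$ is required.
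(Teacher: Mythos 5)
Your proposal is correct and follows essentially the same route as the paper's proof: the identical one-query reduction for the direction ``flip unlearnable $\Rightarrow$ unlearnable,'' and for the converse the same contrapositive hybrid argument in which the first phase harvests $q$ honest sample pairs with its oracle access and the second phase injects the challenge point at a uniformly random position among otherwise honest-then-flipped samples. The one genuine (if minor) refinement is your uniform mask $c$ at the injected position together with the decoding rule $\tilde f = c\oplus o$: this makes the two hybrid cases exactly equiprobable independently of the distribution of $f(x)$, whereas the paper injects the fixed pair $(x,0)$ and must invoke the balancedness of \bbf to argue the two cases occur with probability close to $1/2$.
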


\begin{proof}
    The easy direction is showing that flip unlearnability implies unlearnability. Say $\sA=(\sA_1,\sA_2)$ wins the $\sul$ game with probability $p$, we will construct a procedure $\sA_2'$ such that $\sA'=(\sA_1,\sA_2')$ which wins the $\sful_\bbf$ game with the same probability: $\sA_2'(\sigma)$ uses her oracle access to obtain a single pair of the form $(r,f(r)\oplus b)$, she then invokes $\sA_2(\sigma,r)$ to obtain an output $b'$. Finally, she outputs $f(r)\oplus b \oplus b'$. Note that $\sA'$ wins the game if and only if $f(r)=b'$. Since the view of $\sA_2$ is identical in the $\sul_\bbf$ game and in the simulation, it follows by hypothesis that $f(r)=b'$ with probability $p$. From the assumption that $\bbf$ is flip unlearnable it follows that $p = \frac{1}{2} + \negl$ as needed.
    
    For the other direction, assume that $\sA=(\sA_1,\sA_2)$ is an adversary to the $\sful$ game which makes $q$ queries to $\sO_{f,b}$. Let $H_i$ be the hybrid where the first $i$ queries are to the oracle $\sO_{f,0}$ and the remaining queries are to the oracle $\sO_{f,1}$, and let $p_i$ be the probability that $\sA$ outputs $0$ in the hybrid $H_i$.
    
    Note that $H_0$ is exactly the $\sful_\bbf$ game conditioned on $b=0$, and $H_q$ is exactly the $\sful_\bbf$ game conditioned on $b=1$, so to prove that \bbf is flip unlearnable it suffices to show that $|p_0-p_q| = \negl$.
    
    Consider the following adversary $\sA'=(\sA_1',\sA_2')$ to the $\sful_\bbf$ game:
    \begin{enumerate}
        \item $\sA_1'$ simulates $\sA_1$ to obtain a quantum state $\sigma$, she also uses her oracle to create $q$ pairs of the form $(r,f(r))$ with $r\gets \zo^\ell$. She creates the state $\sigma' = \sigma \otimes ((r_0,f(r_0)),\ldots,(r_q,f(r_q)))$.
        \item $\sA_2'(\sigma',x)$ samples $\iota\gets {1,\ldots,q}$, she then simulates $\sA_2$ and responds to the $i$th query the following way:
        \begin{itemize}
            \item for $i<\iota$ output $(r_i,f(r_i))$,
            \item for $i=\iota$ output $(x,0)$,
            \item for $i>\iota$ output $(r_i,f(r_i)\oplus 1)$.
        \end{itemize}
        $\sA_2'$ resumes the simulation until $\sA_2$ outputs a bit $b$, which $\sA_2'$ outputs herself.
    \end{enumerate}
    
    Note that if $f(x) = 0$ then the view of $\sA$ is exactly the same as in $H_i$, and otherwise the view of $\sA$ is exactly the same as in $H_{i-1}$. That is, if $f(x)=0$ then $\sA$ outputs $0$ with probability $\frac{1}{q}\sum_{i=1}^{q}p_i$. From the hypothesis that \bbf is unlearnable it follows that $\frac{1}{q}\sum_{i=0}^{q-1}p_i = \frac{1}{2} + \negl$. Similarly, if $f(x) = 1$ then $\sA$ outputs $0$ with probability $\frac{1}{q}\sum_{i=1}^{q}p_{i} = \frac{1}{2} + \negl$. It follows that
\begin{equation}
    |p_0 - p_q| = \left|\sum_{i=0}^{q-1}p_i - \sum_{i=1}^{q}p_{i}\right| = \negl \qedhere
\end{equation}
\end{proof}

Combining \cref{lem:aota} and \cref{thm:qul_qful} gives us the desired result:

\begin{proposition}\label{thm:oracle_inst}
    Let \bbf be an unlearnable balanced binary function (see \cref{def:ulbbf}), then:
    \begin{itemize}
        \item there exists a $\sfdqcp^{1,1}$ secure copy-protection scheme for \bbf relative to a classical oracle, and
        \item there exists a \sfdqcp secure copy-protection scheme for \bbf relative to a quantum oracle.
    \end{itemize}
\end{proposition}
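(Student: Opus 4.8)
The plan is to obtain the statement by chaining the two lemmas already established in this appendix, \cref{thm:qul_qful} and \cref{lem:aota}, which between them already discharge all of the real content; the proposition is essentially a bookkeeping composition of these results together with a tracking of which construction lives over which oracle.

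First I would invoke \cref{thm:qul_qful}: since \bbf is assumed unlearnable, that equivalence immediately yields that \bbf is also flip unlearnable. This is the pivotal step, because flip unlearnability is precisely the hypothesis under which the generalized copy-protection guarantee of \cite{ALLZZ20} applies to the flip detection predicate. Next I would apply \cref{lem:aota} to the now flip unlearnable \bbf. Instantiating it with the \cite{Aar09} construction produces a \sfdqcp secure copy-protection scheme, while instantiating it with the \cite{ALLZZ20} construction produces a $\sfdqcp^{1,1}$ secure scheme.

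Finally I would match each output to the correct bullet by accounting for the oracle model and the supported number of copies. The \cite{Aar09} construction is instantiated relative to a \emph{quantum} oracle and is designed to tolerate arbitrarily many protected copies, which is what lets it realize full \sfdqcp security (for all $n,k=\poly$); this gives the second bullet. The \cite{ALLZZ20} construction replaces the quantum oracle with a \emph{classical} one but is only known to be secure against a single copy, which is exactly why it yields only $\sfdqcp^{1,1}$ security; this gives the first bullet.

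I do not expect a genuine obstacle at the level of this proposition, since the substance has been pushed into the two supporting lemmas. The only subtlety worth restating is the one already flagged in the discussion preceding \cref{lem:aota}: the flip detection predicate consumes superpolynomially many random bits, so one must appeal to the observation that any fixed QPT adversary consults only a polynomial prefix of that randomness, keeping the underlying reductions of \cite[Theorem~4]{ALLZZ20} efficient and hence valid in our setting.
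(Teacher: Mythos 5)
Your proposal is correct and matches the paper's own proof, which likewise obtains the proposition by combining \cref{thm:qul_qful} (unlearnability implies flip unlearnability) with \cref{lem:aota} (flip unlearnable \bbf yields the two oracle-instantiated \sfdqcp constructions). The oracle/copy-count bookkeeping and the remark about the superpolynomial randomness of the flip detection predicate are exactly the points the paper's surrounding discussion addresses.
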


\section{Decoupling}\label{app:decouple}

As we discuss in \cref{sssec:obindccaa}, a desirable property of uncloneable decryptor schemes is that they are \emph{decoupled}. That is, the encryptions of two distinct ciphers are independent (or at least computationally indistinguishable from independent). In particular, we use this property to sidestep the problem of recording a quantum query, which is required for simulating \textsf{CCA2} adversaries while only having access to an encryption oracle.

In this appendix, we show a generic transformation that decouples uncloneable \emph{bit} decryptors and prove that this transformation preserves \sudqx security. For completeness, we also suggest a transformation for decoupling general uncloneable decryptor schemes and sketch a security proof, even though such a transformation is not required for our constructions.

In the context of bit encryption it is tempting to exploit the fact that there are only two possible plaintexts to sidestep this issue by handling an encryption query on input $\tau$ the following way:
\begin{itemize}
    \item query the encryption oracle on $b=0,1$ to obtain ciphertexts $c_b$ and record them, and
    \item apply the unitary $\ket{b,y} \mapsto \ket{b,y\oplus c_b}$ to $\rho$.
\end{itemize}

This allows us to keep a list of all ciphertexts we have generated (this still leaves open the problem of handling decryption calls to ciphertexts which were \emph{not} generated by encryption queries, which we treat in \cref{sssec:ind_cca2}).

The problem with this approach is that it generates a different distribution than directly querying the oracle in superposition. Recall that the randomness of the encryption oracle is sampled \emph{once per query}. Consider an encryption query on a pure state of the form $\ket{\psi} = (\alpha\ket{0}+\beta\ket{1})\otimes \ket{0^\ell}$, then an actual oracle query distributes like $$\comma{\alpha\ket{0,\enc_\sk(0;r)} + \beta\ket{0,\enc_\sk(1;r)}}{r\gets\zo^t}$$ whereas the output of our simulated oracle distributes like $$\comma{\alpha\ket{0,\enc_\sk(0;r_0)} + \beta\ket{0,\enc_\sk(1;r_1)}}{r_0,r_1\gets\zo^t}\text{,}$$ and these distributions might be very different.

We overcome this by \emph{decoupling} the randomness used for each cipher. That is, we make sure that the bits used to generate $c_0$ are disjoint from the bits used to generate $c_1$. This implies that encrypting $0$ and $1$ using the same randomness distributes identically to encrypting them using fresh randomness. 

\begin{definition}\label{def:decoupled}
    An uncloneable bit decryptors scheme \dcsob is called \emph{decoupled} if the encryption procedure is of the form $\enc_\sk(\cdot;r_0,r_1)$ where the output of $\enc_\sk(b)$ depends only on $r_b$.
\end{definition}

Fortunately, it is very simple to generically transform a \sobudqx secure scheme to a decoupled scheme with the same security.

\begin{definition}
    Let $\dcsob$ be an uncloneable bit decryptors scheme. Define the \emph{decoupling} of $\dcsob$ to be the scheme $\dcsobdec$: 
    \begin{itemize}
        \item All procedures but $\enc_\sk$ are defined identically for \dcsob and \dcsobdec.
        \item $\dcsobdec.\enc_\sk$: samples $r_0,r_1\gets \zo^k$ where $t$ is the amount of random bits required by $\dcsob.\enc$, outputs $\dcsob.\enc_\sk(b;r_b)$.
    \end{itemize}
\end{definition}

\begin{lemma}[Decoupling does not Affect Security]\label{thm:decoupling}
    Let \dcsob be an uncloneable bit decryptors scheme. If \dcsob is \sobudqxnk secure (see \cref{def:dcs-obindx}) then \dcsobdec (see \cref{def:decoupled}) is also \sobudqxnk secure.
\end{lemma}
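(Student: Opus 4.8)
The plan is to give a reduction showing that any attack on the decoupled scheme $\dcsobdec$ yields an equally good attack on $\dcsob$. The starting observation is that decoupling alters \emph{only} the encryption procedure: $\dcsobdec.\keygen$, $\dcsobdec.\decgen$, $\dcsobdec.\dec$ and $\dcsobdec.\qdec$ are literally identical to those of $\dcsob$. Consequently, in the $\sobudqxnk$ games for the two schemes the decryptors handed to $\sP$, the decryption oracles $\sO_1,\sO_2$ (which are built from $\dec_\sk$), and the marginal distribution of each challenge cipher $c_i$ all coincide, since $\dcsobdec.\enc_\sk(b)=\dcsob.\enc_\sk(b;r_b)$ for fresh $r_0,r_1$ has the same marginal law as $\dcsob.\enc_\sk(b)$. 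The only genuine difference is the behaviour of the encryption oracle $\ket{\enc_\sk}$ on queries \emph{in superposition}, so the whole reduction comes down to simulating the decoupled encryption oracle given access to the coupled one.

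Given a QPT adversary $\sA=(\sP,\sD_1,\ldots,\sD_{n+k})$ for $\sobudqxnk_{\dcsobdec}$, I would build $\sA'=(\sP',\sD'_1,\ldots,\sD'_{n+k})$ for $\sobudqxnk_{\dcsob}$ that runs $\sA$ verbatim, forwarding decryptors, challenge ciphers, and every $\sO_1$ and $\sO_2$ query unchanged (these oracles are identical in both games). The heart of the construction is the simulation of each decoupled encryption query, performed identically whether it is issued by $\sP$ or by some $\sD_i$. Recall from the Remark in \cref{ssec:oracles} that the coupled oracle implements, for fresh internal $r$, the map $\ket{b,x}\mapsto\ket{b,x\oplus\dcsob.\enc_\sk(b;r)}$. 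To simulate one decoupled query, $\sA'$ issues two coupled queries on the fixed classical inputs $\ket{0}$ and $\ket{1}$ (each with a fresh $\ket{0}$ output ancilla), obtaining ancilla registers holding $c_0=\dcsob.\enc_\sk(0;r_0)$ and $c_1=\dcsob.\enc_\sk(1;r_1)$ with independent $r_0,r_1$; it then applies the controlled map $\ket{b,x,c_0,c_1}\mapsto\ket{b,x\oplus c_b,c_0,c_1}$ into the query's output register and discards the two ancillae.

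Next I would argue correctness of this simulation. Because the final step uses $c_0,c_1$ only as controls (the ancilla registers are never targets), they remain unentangled from $(b,x)$ and carry the product classical mixture over $r_0,r_1$; tracing them out therefore realises exactly the channel $\rho\mapsto 2^{-2t}\sum_{r_0,r_1}U_{r_0,r_1}\,\rho\,U_{r_0,r_1}^{\dagger}$ with $U_{r_0,r_1}\ket{b,x}=\ket{b,x\oplus\dcsob.\enc_\sk(b;r_b)}$, which is precisely the decoupled encryption oracle of \cref{def:decoupled}. Hence the view of $\sA$ under $\sA'$'s simulation is identical to its view in the real $\sobudqxnk_{\dcsobdec}$ game, the two game outputs are identically distributed, and $\E[\sobudqxnk_{\dcsob}(\sA',\lambda)]=\E[\sobudqxnk_{\dcsobdec}(\sA,\lambda)]$. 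Since $\sA'$ makes at most twice as many queries as $\sA$ it remains QPT, and the $\sobudqxnk$ security of $\dcsob$ bounds the left-hand side by $n+k/2+\negl$, giving the same bound for $\dcsobdec$.

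The main obstacle, and the only step requiring real care, is this quantum simulation: one must verify that the two fixed-bit coupled queries supply genuinely independent randomness $r_0\perp r_1$ (guaranteed because the oracle resamples per query) and, crucially, that the discarded ancillae leak nothing because they enter the final interaction purely as controls, leaving them in a $b$-independent product mixture. Everything else — identical key generation, decryptors, decryption oracles, and challenge-cipher marginals — is routine bookkeeping.
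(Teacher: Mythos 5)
Your proposal is correct and follows essentially the same route as the paper's proof: a reduction in which the decoupled encryption oracle is simulated by making two classical queries (on inputs $0$ and $1$) to the coupled oracle and then applying the controlled unitary $\ket{b,x}\mapsto\ket{b,x\oplus c_b}$, yielding a perfect simulation and hence identical game outputs. Your write-up merely makes explicit the quantum bookkeeping (the ancillae acting only as controls and the trace-out argument showing the simulated channel equals the decoupled oracle exactly) that the paper leaves implicit.
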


\begin{proof}
    Let $\sA$ be an adversary to the $\sobudqxnk_{\dcsobdec}$ game, we construct an adversary $\sA'$ to the $\sobudqxnk_{\dcsob}$ game such that $$\E\left[\sobudqxnk_{\dcsobdec}(\sA,\lambda)\right] = \E\left[\sobudqxnk_{\dcsob}(\sA',\lambda)\right]\text{.}$$
    
    The adversary $\sA'$ simulates $\sA$ using her own oracle access. They respond to decryption queries by using their own decryption oracle. When getting an encryption query on some input $\rho$, $\sA$ queries her own encryption oracle on both $0$ and $1$ to obtain $c_0$ and $c_1$ respectively. She applies the unitary $\ket{b,x}\mapsto\ket{b,x\oplus c_b}$ to $\rho$ and replies to the query with the result. The distinguishers of $\sA'$ output the output of the distinguishers of $\sA$.
    
    This affords a perfect simulation of the view of $\sA$ in the actual \sobudqxnk game, whereby the output of $\sA'$ distributes identically to the output of $\sA$.
\end{proof}

It is possible to decouple schemes with arbitrary length messages. However, doing so does not solve the problem of recording queries since, in general, the amount of possible queries grows exponentially with the length of the query. For completeness, we informally describe how this could be achieved if we assume that the number of random bits used by the encoding procedure does not grow with the message length (which is not true for our constructions). These ideas could be extended to polynomially many random bits by standard techniques.
    
In \cref{def:decoupled} we required that the ciphers for two different messages would be completely independent even conditioned on the fact that they were generated using the same randomness (in fact, our requirement is strictly stronger). In the context of arbitrary message length, we only require them to be computationally indistinguishable. That is, it is infeasible for a QPT adversary to distinguish an encryption oracle that reuses the same randomness for all queries from an encryption oracle that samples fresh randomness for queries of new plaintexts (but keeps a record of already queried plaintexts to respond consistently).

Assume for simplicity that if $\sk\gets\dcs\keygen(\secparam)$ then $\dcs.\enc_\sk$ requires $\secpar$ random bits. Let $\prf$ be a post-quantum pseudo-random function with the property that if $k\gets \prf.\keygen(\secparam)$ then $\prf.\eval_k$ implements a function from $\zo^*$ to $\zo^\secpar$.
    
Modify $\dcs.\keygen$ such that in addition to $\sk$ it also samples and outputs $k\gets \prf.\keygen(\secparam)$. Replace $\dcs.\enc$ with the circuit $\enc'_{(\sk,k)}(m;r)$ which outputs $c\gets \dcs.\enc_\sk(m;\prf.\eval(m,r))$.
    
The security property of $\prf$ then implies $\prf.\eval(m,r)$ is indistinguishable from using a truly random string. In particular, it is computationally impossible to distinguish the unitary $\ket{m,x}\mapsto \ket{m,x\oplus \enc'_{(\sk,k)}(m)}$ from an oracle which on each query chooses a random string $r_m$ for any possible message $m$ an applies the unitary $\ket{m,x}\mapsto \ket{m,x\oplus \enc_{\sk}(m;r_m)}$.

\section{Weakly Copy-Protectable \textsf{BBF}s are Weak \textsf{PRF}s}\label{app:ggm}

In this appendix, we provide a full proof to \cref{cor:prf}. Namely, we show that if it is possible to construct a $\swqcp^{1,1}$ secure copy-protection scheme for \bbf, then \bbf must be a weak $\prf$.

To do so, we first recall the definition of a weak $\prf$ for binary functions. For any function $f:D \to R$, let $\sR(f)$ denote an oracle which takes no input and outputs a pair of the form $(r,f(r))$ where $r\gets D$.

\begin{definition}\label{def:wprf}
    Let \bbf be a binary function. We say that \bbf is a post-quantum \emph{weak pseudo-random function} (or \emph{weak \prf}) if for any QPT oracle machine $\sA$ whose output is one bit it holds that
    $$\prob{\sA^{\sR(\sO_b)}=b} = \frac{1}{2} + \negl$$
    where $\sk \gets \bbf.\samp(\secpar)$, $\sO_0 = \bbf.\eval_\sk$, $\sO_1$ is a uniformly random function with the same input and output lengths as $\bbf.\eval_\sk$ and $b\gets \zo$.
\end{definition}

The choice of the notation \bbf in \cref{def:bbf} is suggestive of the fact that any weak $\prf$ must be balanced. Indeed, suppose the underlying function outputs $b$ with probability significantly greater than $\frac{1}{2}$. In that case, an adversary which makes a single query and outputs $0$ iff the output is $b$ is a counterexample to \cref{def:bbf}.

\begin{remark}
    The key difference between weak and standard $\prf$s is that the former only allows evaluating the function on random inputs, hence the random input oracles which appear in \cref{def:wprf}. If instead we have granted $\sA$ oracle access $\sO_0$ and $\sO_1$ we would have recovered the definition of a \emph{post-quantum $\prf$}, and if we have granted access to $\ket{\sO_0}$ and $\ket{\sO_1}$ we would have recovered the definition of \emph{fully quantum $\prf$}. For a discussion of these two notions and the differences thereof (including an explicit example of a post-quantum \prf which is not fully quantum) see \cite{Zha12}.
\end{remark}

The second ingredient we require is notion of security for symmetric encryption called \emph{left-or-right} (\textsf{LoR}) security, first introduced in \cite{BDJR97}. Under this notion, the challenger $\sC$ first samples a random coin $b$. The adversary may make several challenge queries of the form $(m_0,m_1)$ whose response is the encryption of $m_b$, and their goal is to guess $b$. We emphasize that $b$ is chosen once for the entire game.

\begin{definition}[\slorcpa security]\label{def:lorcpa}
    Let $\se$ be a symmetric encryption scheme, and let $\sA$ be a procedure, the \emph{$\slorcpa_\se(\sA,\secpar)$} game is defined as follows:
    \begin{itemize}
        \item $\sC$ samples $\sk\gets \se.\keygen(\secparam)$, and $b\gets \zo$,
        \item $\sA$ transmits to $\sC$ pairs of the form $(m_0,m_1)$ with $|m_0|=|m_1|$,
        \item $\sC$ responds to each pair with $\enc_\sk(m_0)$,
        \item $\sA$ outputs a bit $\beta$,
        \item $\sA$ wins the game if $b=\beta$.
    \end{itemize}
    The scheme \se is considered post-quantum \emph{\slorcpa secure} if for any QPT $\sA$, the winning probability is at most $1/2 + \negl$.
\end{definition}

It might come off as curious that the adversary for the \slorcpa game enjoys no oracle access despite the \textsf{CPA} in the name indicating this game should simulate a chosen plaintext attack. However, providing $\sA$ with access to $\enc_\sk$ does not increase their strength in any way, as she can simulate a query on plaintext $m$ by sending the pair $(m,m)$ to $\sC$.

It might seem that \slorcpa security is stronger than \sindcpa security, as the adversary is more powerful. However, \cite[Theorem~4]{BDJR97} shows that these added capabilities can only increase the advantage by a polynomial multiplicative factor, so that \sindcpa security does imply \slorcpa.

We relate balanced binary functions to symmetric encryption through the following construction:

\begin{construction}\label{con:sebbf}
    For any $\bbf$ with input length $\ell$ define the bit encryption scheme $\se_\bbf$:
    \begin{itemize}
        \item $\se_\bbf.\keygen \equiv \bbf.\samp$,
        \item $\se_\bbf.\enc_\sk(b)$ outputs $(r,b\oplus \bbf.\eval_\sk(r))$ where $r\gets\zo^\ell$, and
        \item $\se_\bbf.\dec_\sk((r,\beta))$ outputs $\beta\oplus \bbf.\eval_\sk(r)$.
    \end{itemize}
\end{construction}

The correctness of $\se_\bbf$ follows from the correctness of \bbf. Note that it holds that $\se_\bbf = \se_{\dcsobcca}$ (recall the definition of the underlying scheme \cref{def:assoc-se}) where \dcsobcca is the uncloneable decryptor scheme of \cref{con:dcs_cca1_res} instantiated with \bbf. This gives us the following series of implications:

\begin{itemize}
    \item say that there exists a copy-protection scheme \sqcp for \bbf which is $\swqcp^{1,1}$ secure,
    \item it follows from \cref{thm:con_1bit_dcs_cca1} that using \sqcp to instantiate \cref{con:dcs_cca1_res} is $\sobudqcpa^{1,1}$ secure,
    \item it follows from \cref{thm:DCS-IND} that the underlying scheme to this construction is \sindqcpa secure, and in particular \sindcpa secure,
    \item from the discussion above, the underlying scheme is also \slorcpa secure, and
    \item also from the discussion above, this underlying scheme is exactly $\se_\bbf$.
\end{itemize}

Hence, if $\bbf$ could be copy-protected with $\swqcp^{1,1}$ security then $\se_\bbf$ is post-quantum \slorcpa secure. So in order to prove \cref{cor:prf} it suffices to prove:

\begin{lemma}\label{lemma:bbfwprf}
    If $\se_\bbf$ is post-quantum \slorcpa secure then \bbf is a post-quantum weak \prf.
\end{lemma}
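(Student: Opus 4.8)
The plan is to prove the contrapositive: assuming \bbf fails to be a post-quantum weak \prf, I would exhibit a QPT adversary against the \slorcpa security of $\se_\bbf$. So suppose $\sB$ is a QPT distinguisher that wins the weak \prf game of \cref{def:wprf} with non-negligible advantage, i.e. it distinguishes the oracle $\sR(\bbf.\eval_\sk)$ from $\sR(\sO_1)$ for a uniformly random function $\sO_1$; recall that both of these oracles take no input and simply return a fresh pair $(r,\cdot)$ on each call.

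The key observation driving the reduction is a structural match between the encryption procedure of $\se_\bbf$ and the random-input oracle $\sR$. Indeed, $\se_\bbf.\enc_\sk(0)$ outputs $(r,\bbf.\eval_\sk(r))$ with $r\gets\zo^\ell$, which is exactly a sample of $\sR(\bbf.\eval_\sk)$; and for a uniformly random bit $d$, $\se_\bbf.\enc_\sk(d)$ outputs $(r,d\oplus\bbf.\eval_\sk(r))$, whose second coordinate is a uniform bit independent of $r$, so that such a sample is distributed like a sample of $\sR(\sO_1)$ up to the event that two calls happen to draw the same $r$.

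Concretely, I would build a \slorcpa adversary $\sA$ that internally runs $\sB$. Whenever $\sB$ makes a call to its random-input oracle, $\sA$ samples a fresh $d\gets\zo$, submits the challenge pair $(0,d)$ to the \slorcpa challenger (adaptively, one per call), and forwards the returned cipher $(r,c)$ to $\sB$ as the oracle answer; when $\sB$ halts with a guess $b'$, $\sA$ outputs $\beta=b'$. If the challenger's secret bit is $b=0$ then every answer is an encryption of $0$, so $\sB$'s view is exactly $\sR(\bbf.\eval_\sk)$; if $b=1$ then every answer is an encryption of a fresh uniform bit, so $\sB$'s view is statistically indistinguishable from $\sR(\sO_1)$. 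Hence $\Pr[\beta=b]$ equals $\sB$'s weak-\prf success probability up to a negligible additive term, so $\sA$ inherits $\sB$'s non-negligible advantage, contradicting \slorcpa security.

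The only delicate point — and the step I would be most careful about — is quantifying the gap in the $b=1$ branch between the simulated answers (a fresh $d$ per call, hence independent uniform outputs even on a repeated $r$) and the genuine random-function oracle $\sR(\sO_1)$ (which returns equal outputs on a repeated $r$). This statistical distance is bounded by the probability that two of the $\poly$-many sampled $r\gets\zo^\ell$ collide, i.e. by $\binom{q}{2}2^{-\ell}$, which is negligible once $\ell=\omega(\log\secpar)$. If instead $\ell=O(\log\secpar)$ the argument is unnecessary, since then $\zo^\ell$ has polynomial size and $\sA$ can break \slorcpa outright by repeatedly querying $(0,0)$ to tabulate $\bbf.\eval_\sk$ on all of $\zo^\ell$ and then reading off $b$ from a single $(0,1)$ challenge; either way $\se_\bbf$ fails to be \slorcpa secure, completing the contrapositive.
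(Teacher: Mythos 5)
Your proposal is correct and is essentially the paper's own proof: the paper's reduction likewise answers each oracle call of the weak-\prf distinguisher with a challenge pair $(0,z)$, $z\gets\zo$, forwards the returned cipher, and outputs the distinguisher's guess, observing that the $b=0$ branch exactly simulates $\sR(\sO_0)$ and the $b=1$ branch matches $\sR(\sO_1)$ up to repeated $r$ values. The only differences are the cosmetic contrapositive framing and that you explicitly bound the collision probability by $\binom{q}{2}2^{-\ell}$ and separately dispatch the $\ell=O(\log\secpar)$ case via a tabulation attack, a point the paper glosses over with an unquantified ``overwhelming probability'' claim, so your write-up is if anything slightly more careful on that step.
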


\begin{remark}
    The converse also holds. The analysis of \cite[Proposition~5.3.19]{Gol04} shows that $\se_\bbf$ is classically \sindcpa secure whenever \bbf is a \prf. By examining the proof one notes that:
    \begin{enumerate}
        \item the reduction therein only queries \bbf on uniformly random points, whereby the proof holds verbatim for weak $\prf$s as well, and
        \item if we assume that \bbf is a post-quantum weak $\prf$, then the same argument without modification shows that $\se_\bbf$ is post-quantum \sindcpa secure.
    \end{enumerate}
\end{remark}

\begin{proof}
    Let $\sA$ be a QPT oracle machine such that
    $$\prob{\sA^{\sR(\sO_b)}=b} = \frac{1}{2} + \epsilon\text{,}$$ where the oracles $\sO_0,\sO_1$ are as described in \cref{def:wprf}. We show that $\epsilon = \negl$ by constructing an adversary for the $\slorcpa_{\se_\bbf}$ game with advantage $\epsilon - \negl$. It will then follow from the \slorcpa security of $\se_\bbf$ that $\epsilon = \negl$.
    
    The adversary $\sA'$:
    \begin{itemize}
        \item simulates $\sA$,
        \item whenever $\sA$ makes a query, $\sA'$ sends the pair $(0,z)$ with $z\gets\zo$ to $\sC$ and replies to $\sA$ with the respond she got from $\sC$,
        \item $\sA'$ resumes the simulation until $\sA$ outputs a bit $\beta$, and outputs $\beta$.
    \end{itemize}
    
    Let $b$ be the bit sampled by $\sC$ at the beginning of the game, then $\sA'$ wins iff $b=\beta$.
    
    Note that if $b=0$ then the queries of $\sA$ are always answered with pairs of the form $(r,\bbf_\sk(r)\oplus 0) = (r,\bbf_\sk(r))$, so her view distributes exactly as if she had access to the oracle $\sO_0$.
    
    If $b=1$ then the queries are always answered with pairs of the form $(r,\bbf_\sk(r)\oplus z)$ where $z\gets \zo$. Hence, the second bit distributes uniformly. The view of $\sA$ in this scenario is identical to her view given access to the $\sO_1$ oracle conditioned on the event that $\sA'$ never responded to two different queries with the same $r$, which holds with overwhelming probability.
    
    It follows that the view of $\sA$ in the simulation above is negligibly close to her view given access to the actual oracles $\sO_0,\sO_1$.
    
    It follows that the probability that $\sA$ wins is negligibly close to $\prob{\sA^{\sR(\sO_b)}=b} = \frac{1}{2} + \epsilon$. That is, the advantage of $\sA$ is $\epsilon - \negl$.
\end{proof}

\section{Fully Specified Constructions}\label{app:fleshed}

We presented our construction for \sudqcpa and \sudqccaa as transformations of given schemes with properties obtained by previous constructions. This appendix presents full specifications of these constructions instantiated from a \sfdqcp secure copy-protection and a deterministic \sseufcma secure digital signature.

\subsection{\textsf{UD-qCPA} secure construction}

Let $\sqcp_\bbf$ be a quantum copy-protection scheme for a binary balanced function \bbf with input length $\ell_\bbf$. We define an uncloneable decryptors scheme \dcscpa:
\begin{itemize}
    \item $\dcscpa.\keygen \equiv \bbf.\samp$.
    \item $\dcscpa.\decgen \equiv \sqcp_\bbf.\ptect$.
    \item $\dcscpa.\enc_\sk(m)$ outputs $(c_1,\ldots,c_{|m|})$ where 
    \[\begin{aligned}
        r_i &\gets \zo^{\ell_\bbf}\\
        c_i & = (r_i,m_i\oplus \bbf.\eval_\sk(r_i))
    \end{aligned}\text{.}\]
    \item $\dcscpa.\dec_\sk((r_1,\beta_1),\ldots,(r_\ell,\beta_\ell))$ outputs
    $$\beta_1\oplus \bbf.\eval_\sk(r_1) \| \ldots \| \beta_\ell\oplus\bbf.\eval_\sk(r_\ell)\text{.}$$
    \item $\dcscpa.\qdec(\rho,(r_1,\beta_1),\ldots,(r_\ell,\beta_\ell))$ outputs
    $$\beta_1\oplus \sqcp_\bbf.\qeval(\rho,r_1) \| \ldots \| \beta_\ell\oplus\sqcp_\bbf.\qeval(\rho,r_\ell)\text{.}$$
\end{itemize}

\cref{thm:dcs_cca1} implies that if $\qcp_\bbf$ is \sfdqcpnk (resp. \sfdqcp) secure then \dcscpa is \sudqcpank (resp. \sudqcpa) secure.

\subsection{\textsf{UD-qCCA2} secure construction}

Let $\sqcp_\bbf$ be a quantum copy-protection scheme for a binary balanced function \bbf with input length $\ell_\bbf$. Let \ds be a deterministic digital signature scheme.

We define an uncloneable decryptors scheme \dcsccaa:
\begin{itemize}
    \item $\dcsccaa.\keygen(\secparam)$ outputs $(\sk_\bbf,\sk_\ds,\pk_\ds)$ where
    \[\begin{aligned}
        \sk_\bbf & \gets \bbf.\samp(\secparam)\\
        (\sk_\ds,\pk_\ds) &\gets \ds.\keygen(\secparam)
    \end{aligned}\text{.}\]
    \item $\dcsccaa.\decgen(\sk)$ parses $\sk$ as $(\sk_\bbf,\sk_\ds,\pk_\ds)$, outputs $\rho\otimes \pk_\ds$ where $\rho\gets \sqcp_\bbf.\ptect(\sk_\bbf)$.
    \item $\dcsccaa.\enc_\sk(m)$ parses $\sk$ as $(\sk_\bbf,\sk_\ds,\pk_\ds)$, outputs $$(r,(c_1,s_1),\ldots,(c_{|m|},s_{|m|})$$ where 
    \[\begin{aligned}
        r & \gets \zo^\secpar \\
        c_i & = (r_i,m_i\oplus \bbf.\eval_\sk(r_i))\\
        s_i & = \ds.\sign_{\sk_\ds}(c_i,|m|,i,r)\\
        r_i &\gets \zo^{\ell_\bbf}
    \end{aligned}\]
    \item $\dcsccaa.\dec_\sk((r,(r_1,\beta_1),\ldots,(r_\ell,\beta_\ell)))$ parses $\sk$ as $(\sk_\bbf,\sk_\ds,\pk_\ds)$, outputs
    $$\begin{cases}
                b_1\|\ldots\|b_\ell & \comma{\forall i=1,\ldots,\ell}{\ds.\ver_{\pk_\ds}((c_i,\ell,i,r),s_i) = 1}\\
                \bot & \mbox{else} \\
            \end{cases}$$ where $b_i = \beta_i \oplus \bbf.\eval{\sk_\bbf}(r_i)$.
    \item $\dcsccaa.\qdec(\rho,(r,(r_1,\beta_1),\ldots,(r_\ell,\beta_\ell)))$ outputs
            $$\begin{cases}
                b_1\|\ldots\|b_\ell & \comma{\forall i=1,\ldots,\ell}{\ds.\ver_{\pk_\ds}((c_i,\ell,i,r),s_i) = 1}\\
                \bot & \mbox{else} \\
            \end{cases}$$ where $b_i = \beta_i \oplus \sqcp_\bbf.\qeval(\rho,r_i)$.
\end{itemize}

\cref{thm:udccaank} implies that if $\qcp_\bbf$ is \sfdqcpnk (resp. \sfdqcp) secure and \ds is \sseufcma secure then \dcscpa is \sudqccaank (resp. \sudqccaa) secure.

\ifnum\masterthesis=1
    \bibliography{main}

\newcommand{\etalchar}[1]{$^{#1}$}
\begin{thebibliography}{AGKZ20}

\bibitem[Aar09]{Aar09}
S.~Aaronson.
\newblock \href{https://doi.org/10.1109/CCC.2009.42}{Quantum Copy-Protection
  and Quantum Money}.
\newblock In {\em Proceedings of the 24th Annual {IEEE} Conference on
  Computational Complexity, {CCC} 2009, Paris, France, 15-18 July 2009}, pages
  229--242. {IEEE} Computer Society, 2009,  arXiv:
  \href{https://arxiv.org/abs/1110.5353}{{\ttfamily 1110.5353}}.

\bibitem[Aar18]{Aar18}
S.~Aaronson.
\newblock \href{https://doi.org/10.1145/3188745.3188802}{Shadow Tomography of
  Quantum States}.
\newblock In I.~Diakonikolas, D.~Kempe, and M.~Henzinger, editors, {\em
  Proceedings of the 50th Annual {ACM} {SIGACT} Symposium on Theory of
  Computing, {STOC} 2018, Los Angeles, CA, USA, June 25-29, 2018}, pages
  325--338. {ACM}, 2018,  arXiv:
  \href{https://arxiv.org/abs/1711.01053}{{\ttfamily 1711.01053}}.

\bibitem[ABF{\etalchar{+}}16]{ABF+16}
G.~Alagic, A.~Broadbent, B.~Fefferman, T.~Gagliardoni, C.~Schaffner, and M.~S.
  Jules.
\newblock \href{https://doi.org/10.1007/978-3-319-49175-2\_3}{Computational
  Security of Quantum Encryption}.
\newblock In A.~C.~A. Nascimento and P.~S. L.~M. Barreto, editors, {\em
  Information Theoretic Security - 9th International Conference, {ICITS} 2016,
  Tacoma, WA, USA, August 9-12, 2016, Revised Selected Papers}, volume 10015 of
  {\em Lecture Notes in Computer Science}, pages 47--71, 2016,  arXiv:
  \href{https://arxiv.org/abs/1602.01441}{{\ttfamily 1602.01441}}.

\bibitem[AC12]{AC12}
S.~Aaronson and P.~Christiano.
\newblock \href{https://doi.org/10.1145/2213977.2213983}{Quantum Money from
  Hidden Subspaces}.
\newblock {\em Proceedings of the 44th Symposium on Theory of Computing},
  ACM:41--60, 2012,  arXiv: \href{https://arxiv.org/abs/1203.4740}{{\ttfamily
  1203.4740}}.

\bibitem[AGKZ20]{AGKZ20}
R.~Amos, M.~Georgiou, A.~Kiayias, and M.~Zhandry.
\newblock One-shot Signatures and Applications to Hybrid Quantum/Classical
  Authentication.
\newblock In {\em Proceedings of STOC 2020}, 2020, Cryptology ePrint Archive:
  {{\ttfamily \href{http://eprint.iacr.org/2020/107}{2020/107}}}.

\bibitem[ALL{\etalchar{+}}21]{ALLZZ20}
S.~Aaronson, J.~Liu, Q.~Liu, M.~Zhandry, and R.~Zhang.
\newblock \href{https://doi.org/10.1007/978-3-030-84242-0\_19}{New Approaches
  for Quantum Copy-Protection}.
\newblock In T.~Malkin and C.~Peikert, editors, {\em Advances in Cryptology -
  {CRYPTO} 2021 - 41st Annual International Cryptology Conference, {CRYPTO}
  2021, Virtual Event, August 16-20, 2021, Proceedings, Part {I}}, volume 12825
  of {\em Lecture Notes in Computer Science}, pages 526--555. Springer, 2021,
  arXiv: \href{https://arxiv.org/abs/2004.09674}{{\ttfamily 2004.09674}}.

\bibitem[AP21]{AP20}
P.~Ananth and R.~L.~L. Placa.
\newblock \href{https://doi.org/10.1007/978-3-030-77886-6\_17}{Secure Software
  Leasing}.
\newblock In A.~Canteaut and F.~Standaert, editors, {\em Advances in Cryptology
  - {EUROCRYPT} 2021 - 40th Annual International Conference on the Theory and
  Applications of Cryptographic Techniques, Zagreb, Croatia, October 17-21,
  2021, Proceedings, Part {II}}, volume 12697 of {\em Lecture Notes in Computer
  Science}, pages 501--530. Springer, 2021,  arXiv:
  \href{https://arxiv.org/abs/2005.05289}{{\ttfamily 2005.05289}}.

\bibitem[BDJR97]{BDJR97}
M.~Bellare, A.~Desai, E.~Jokipii, and P.~Rogaway.
\newblock \href{https://doi.org/10.1109/SFCS.1997.646128}{A Concrete Security
  Treatment of Symmetric Encryption}.
\newblock In {\em 38th Annual Symposium on Foundations of Computer Science,
  {FOCS} '97, Miami Beach, Florida, USA, October 19-22, 1997}, pages 394--403.
  {IEEE} Computer Society, 1997.

\bibitem[BJL{\etalchar{+}}21]{BJLPS21}
A.~Broadbent, S.~Jeffery, S.~Lord, S.~Podder, and A.~Sundaram.
\newblock \href{https://doi.org/10.1007/978-3-030-90459-3\_4}{Secure Software
  Leasing Without Assumptions}.
\newblock In K.~Nissim and B.~Waters, editors, {\em Theory of Cryptography -
  19th International Conference, {TCC} 2021, Raleigh, NC, USA, November 8-11,
  2021, Proceedings, Part {I}}, volume 13042 of {\em Lecture Notes in Computer
  Science}, pages 90--120. Springer, 2021,  arXiv:
  \href{https://arxiv.org/abs/2101.12739}{{\ttfamily 2101.12739}}.

\bibitem[BL20]{BL20}
A.~Broadbent and S.~Lord.
\newblock \href{https://doi.org/10.4230/LIPIcs.TQC.2020.4}{Uncloneable Quantum
  Encryption via Oracles}.
\newblock In S.~T. Flammia, editor, {\em 15th Conference on the Theory of
  Quantum Computation, Communication and Cryptography, {TQC} 2020, June 9-12,
  2020, Riga, Latvia}, volume 158 of {\em LIPIcs}, pages 4:1--4:22. Schloss
  Dagstuhl - Leibniz-Zentrum f{\"{u}}r Informatik, 2020,  arXiv:
  \href{https://arxiv.org/abs/1903.00130}{{\ttfamily 1903.00130}}.

\bibitem[BS16]{BS16}
S.~Ben{-}David and O.~Sattath.
\newblock Quantum Tokens for Digital Signatures, 2016,  arXiv:
  \href{https://arxiv.org/abs/1609.09047}{{\ttfamily 1609.09047}}.

\bibitem[BZ13]{BZ13}
D.~Boneh and M.~Zhandry.
\newblock \href{https://doi.org/10.1007/978-3-642-40084-1\_21}{Secure
  Signatures and Chosen Ciphertext Security in a Quantum Computing World}.
\newblock In R.~Canetti and J.~A. Garay, editors, {\em Advances in Cryptology -
  {CRYPTO} 2013 - 33rd Annual Cryptology Conference, Santa Barbara, CA, USA,
  August 18-22, 2013. Proceedings, Part {II}}, volume 8043 of {\em Lecture
  Notes in Computer Science}, pages 361--379. Springer, 2013, Cryptology ePrint
  Archive: {{\ttfamily \href{http://eprint.iacr.org/2013/088}{2013/088}}}.

\bibitem[CEV20]{CEV20}
C.~Chevalier, E.~Ebrahimi, and Q.~H. Vu.
\newblock \href{https://eprint.iacr.org/2020/237}{On the Security Notions for
  Encryption in a Quantum World}.
\newblock {\em {IACR} Cryptol. ePrint Arch.}, 2020:237, 2020.

\bibitem[CLLZ21]{CLLZ21}
A.~Coladangelo, J.~Liu, Q.~Liu, and M.~Zhandry.
\newblock \href{https://doi.org/10.1007/978-3-030-84242-0\_20}{Hidden Cosets
  and Applications to Unclonable Cryptography}.
\newblock In T.~Malkin and C.~Peikert, editors, {\em Advances in Cryptology -
  {CRYPTO} 2021 - 41st Annual International Cryptology Conference, {CRYPTO}
  2021, Virtual Event, August 16-20, 2021, Proceedings, Part {I}}, volume 12825
  of {\em Lecture Notes in Computer Science}, pages 556--584. Springer, 2021,
  arXiv: \href{https://arxiv.org/abs/2107.05692}{{\ttfamily 2107.05692}}.

\bibitem[CMP20]{CMP20}
A.~Coladangelo, C.~Majenz, and A.~Poremba.
\newblock Quantum copy-protection of compute-and-compare programs in the
  quantum random oracle model, 2020,  arXiv:
  \href{https://arxiv.org/abs/2009.13865}{{\ttfamily 2009.13865}}.

\bibitem[GHS16]{GHS16}
T.~Gagliardoni, A.~H{\"{u}}lsing, and C.~Schaffner.
\newblock \href{https://doi.org/10.1007/978-3-662-53015-3\_3}{Semantic Security
  and Indistinguishability in the Quantum World}.
\newblock In M.~Robshaw and J.~Katz, editors, {\em Advances in Cryptology -
  {CRYPTO} 2016 - 36th Annual International Cryptology Conference, Santa
  Barbara, CA, USA, August 14-18, 2016, Proceedings, Part {III}}, volume 9816
  of {\em Lecture Notes in Computer Science}, pages 60--89. Springer, 2016,
  arXiv: \href{https://arxiv.org/abs/1504.05255}{{\ttfamily 1504.05255}}.

\bibitem[GMR88]{GMR88}
S.~Goldwasser, S.~Micali, and R.~L. Rivest.
\newblock \href{https://doi.org/10.1137/0217017}{A Digital Signature Scheme
  Secure Against Adaptive Chosen-Message Attacks}.
\newblock {\em {SIAM} J. Comput.}, 17(2):281--308, 1988.

\bibitem[Gol04]{Gol04}
O.~Goldreich.
\newblock {\em The Foundations of Cryptography - Vol.\ 2, Basic Applications}.
\newblock Cambridge University Press, 2004.

\bibitem[Got03]{Got03}
D.~Gottesman.
\newblock \href{http://portal.acm.org/citation.cfm?id=2011558}{Uncloneable
  encryption}.
\newblock {\em Quantum Inf. Comput.}, 3(6):581--602, 2003,  arXiv:
  \href{https://arxiv.org/abs/0210062}{{\ttfamily 0210062}}.

\bibitem[GZ20]{GZ20}
M.~Georgiou and M.~Zhandry.
\newblock \href{https://eprint.iacr.org/2020/877}{Unclonable Decryption Keys}.
\newblock {\em {IACR} Cryptol. ePrint Arch.}, 2020:877, 2020.

\bibitem[KNY20]{KNY20}
F.~Kitagawa, R.~Nishimaki, and T.~Yamakawa.
\newblock Secure Software Leasing from Standard Assumptions, 2020,  arXiv:
  \href{https://arxiv.org/abs/2010.11186}{{\ttfamily 2010.11186}}.

\bibitem[Zha21]{Zha12}
M.~Zhandry.
\newblock \href{https://doi.org/10.1145/3450745}{How to Construct Quantum
  Random Functions}.
\newblock {\em J. {ACM}}, 68(5):33:1--33:43, 2021, Cryptology ePrint Archive:
  {{\ttfamily \href{http://eprint.iacr.org/2012/182}{2012/182}}}.

\end{thebibliography}
\fi

\end{document}